\documentclass[11pt]{article}

\usepackage{amsthm}
\usepackage{graphicx} 
\usepackage{array} 

\usepackage{amsmath, amssymb, amsfonts, verbatim}
\usepackage{hyphenat,epsfig,subcaption,multirow}
\usepackage{nicefrac}
\usepackage{paralist}

\usepackage[font=footnotesize,labelfont=bf]{caption}

\usepackage[usenames,dvipsnames]{xcolor}
\usepackage[ruled]{algorithm2e}

\DeclareFontFamily{U}{mathx}{\hyphenchar\font45}
\DeclareFontShape{U}{mathx}{m}{n}{
      <5> <6> <7> <8> <9> <10>
      <10.95> <12> <14.4> <17.28> <20.74> <24.88>
      mathx10
      }{}
\DeclareSymbolFont{mathx}{U}{mathx}{m}{n}
\DeclareMathSymbol{\bigtimes}{1}{mathx}{"91}

\usepackage{tcolorbox}
\tcbuselibrary{skins,breakable}
\tcbset{enhanced jigsaw}

\usepackage[normalem]{ulem}
\usepackage[compact]{titlesec}

\definecolor{DarkRed}{rgb}{0.5,0.1,0.1}
\definecolor{DarkBlue}{rgb}{0.1,0.1,0.5}

\usepackage{nameref}
\definecolor{ForestGreen}{rgb}{0.1333,0.5451,0.1333}
\definecolor{Red}{rgb}{0.9,0,0}
\usepackage[linktocpage=true,
	pagebackref=true,colorlinks,
	linkcolor=DarkRed,citecolor=ForestGreen,
	bookmarks,bookmarksopen,bookmarksnumbered]
	{hyperref}
\usepackage[noabbrev,nameinlink]{cleveref}
\crefname{property}{property}{Property}
\creflabelformat{property}{(#1)#2#3}
\crefname{equation}{eq}{Eq}
\creflabelformat{equation}{(#1)#2#3}

\usepackage{bm}
\usepackage{url}
\usepackage{xspace}
\usepackage[mathscr]{euscript}

\usepackage{tikz}
\usetikzlibrary{arrows}
\usetikzlibrary{arrows.meta}
\usetikzlibrary{shapes}
\usetikzlibrary{backgrounds}
\usetikzlibrary{positioning}
\usetikzlibrary{decorations.markings}
\usetikzlibrary{patterns}
\usetikzlibrary{calc}
\usetikzlibrary{fit}
\usetikzlibrary{snakes}
\tikzset{vertex/.style={circle, black, fill=Yellow, line width=1pt, draw, minimum width=8pt, minimum height=8pt, inner sep=0pt}}
	
	
\usepackage{mdframed}

\usepackage[noend]{algpseudocode}
\makeatletter
\def\BState{\State\hskip-\ALG@thistlm}
\makeatother

\usepackage{cite}
\usepackage{enumitem}

\usepackage[margin=1in]{geometry}

\newtheorem{theorem}{Theorem}
\newtheorem{lemma}{Lemma}[section]
\newtheorem{proposition}[lemma]{Proposition}
\newtheorem{corollary}[lemma]{Corollary}
\newtheorem{claim}[lemma]{Claim}
\newtheorem{fact}[lemma]{Fact}

\newtheorem{definition}[lemma]{Definition}

\newtheorem*{claim*}{Claim}
\newtheorem*{theorem*}{Theorem}
\newtheorem*{proposition*}{Proposition}
\newtheorem*{lemma*}{Lemma}
\newtheorem*{problem*}{Problem}

\crefname{lemma}{Lemma}{Lemmas}
\crefname{claim}{Claim}{Claims}

\newtheorem*{mdresult}{Main Result}
\newenvironment{result}{\begin{mdframed}[backgroundcolor=lightgray!40,topline=false,rightline=false,leftline=false,bottomline=false,innertopmargin=2pt, innerleftmargin=10pt]\begin{mdresult}}{\end{mdresult}\end{mdframed}}

\newtheorem*{remark*}{Remark}
\newtheorem{observation}[lemma]{Observation}

\newtheoremstyle{restate}{}{}{\itshape}{}{\bfseries}{~(restated).}{.5em}{\thmnote{#3}}
\theoremstyle{restate}

\theoremstyle{definition}
\newtheorem{mdalg}{algorithm}

\newtheorem{mddist}{Distribution}

\allowdisplaybreaks

\renewcommand{\qed}{\nobreak \ifvmode \relax \else
      \ifdim\lastskip<1.5em \hskip-\lastskip
      \hskip1.5em plus0em minus0.5em \fi \nobreak
      \vrule height0.75em width0.5em depth0.25em\fi}

\newcommand{\Qed}[1]{\rlap{\qed$_{\textnormal{~~~\Cref{#1}}}$}}

\setlength{\parskip}{3pt}

\newcommand{\logstar}[1]{\ensuremath{\log^{*}\!{#1}}}


\newcommand{\Leq}[1]{\ensuremath{\underset{\textnormal{#1}}\leq}}

\newcommand{\tvd}[2]{\ensuremath{\norm{#1 - #2}_{\mathrm{tvd}}}}

\newcommand{\eps}{\ensuremath{\varepsilon}}

\newcommand{\Bracket}[1]{\Big[#1\Big]}
\newcommand{\bracket}[1]{\left[#1\right]}
\newcommand{\paren}[1]{\ensuremath{\left(#1\right)}\xspace}
\newcommand{\card}[1]{\left\vert{#1}\right\vert}

\newcommand{\IR}{\ensuremath{\mathbb{R}}}

\newcommand{\norm}[1]{\ensuremath{\|#1\|}}

\newcommand{\set}[1]{\ensuremath{\left\{ #1 \right\}}}
\newcommand{\poly}{\mbox{\rm poly}}

\newcommand{\ALG}{\ensuremath{\mbox{\sc alg}}\xspace}

\DeclareMathOperator*{\Exp}{\ensuremath{{\mathbb{E}}}}
\DeclareMathOperator*{\Prob}{\ensuremath{\textnormal{Pr}}}
\renewcommand{\Pr}{\Prob}

\newcommand{\Ex}{\Exp}

\newenvironment{tbox}{\begin{tcolorbox}[
		enlarge top by=5pt,
		enlarge bottom by=5pt,
		 breakable,
		 boxsep=2pt,
                  left=5pt,
                  right=7pt,
                  top=10pt,
                  arc=0pt,
                  boxrule=1pt,toprule=1pt,
                  colback=white
                  ]
	}
{\end{tcolorbox}}

\newcommand{\event}{\ensuremath{\mathcal{E}}}

\newcommand{\rv}[1]{\ensuremath{{\mathsf{#1}}}\xspace}
\newcommand{\rA}{\rv{A}}
\newcommand{\rB}{\rv{B}}
\newcommand{\rC}{\rv{C}}
\newcommand{\rD}{\rv{D}}

\newcommand{\rY}{\rv{Y}}
\newcommand{\supp}[1]{\ensuremath{\textnormal{\text{supp}}(#1)}}
\newcommand{\distribution}[1]{\ensuremath{\textnormal{dist}(#1)}\xspace}

\newcommand{\kl}[2]{\ensuremath{\mathbb{D}(#1~||~#2)}}
\newcommand{\II}{\ensuremath{\mathbb{I}}}
\newcommand{\HH}{\ensuremath{\mathbb{H}}}
\newcommand{\mi}[2]{\ensuremath{\def\mione{#1}\def\mitwo{#2}\mireal}}
\newcommand{\mireal}[1][]{
  \ifx\relax#1\relax%
    \II(\mione \,; \mitwo)%
  \else%
    \II(\mione \,; \mitwo\mid #1)%
  \fi
}
\newcommand{\en}[1]{\ensuremath{\HH(#1)}}

\newcommand{\itfacts}[1]{\Cref{fact:it-facts}-(\ref{part:#1})\xspace}



\newcommand{\dist}{\ensuremath{\mathcal{D}}}

\newcommand{\prot}{\pi}
\newcommand{\Prot}{\Pi}

\newcommand{\cost}{\ensuremath{\mathsf{C}}}

\newcommand{\bias}{\ensuremath{\mathrm{bias}}}

\newcommand{\rX}{\rv{X}}
\newcommand{\rZ}{\rv{Z}}
\newcommand{\rM}{\rv{M}}

\newcommand{\rs}{Ruzsa-Szemer\'edi\xspace} 
\newcommand{\RS}{\textnormal{\textsf{RS}}}

\newcommand{\GRS}{\ensuremath{G^{\mathsf{RS}}}\xspace}
\newcommand{\ERS}{\ensuremath{E^{\mathsf{RS}}}\xspace}

\newcommand{\rH}{\rv{H}}
\newcommand{\rJ}{\rv{J}}

\newcommand{\uniform}{{U}}

\newcommand{\rPhi}{\rv{\Phi}}

\newcommand{\jstar}{j^{\star}}

\newcommand{\ru}{\rv{u}}

\newcommand{\bH}{\bar{H}}

\newcommand{\Xor}{\mathcal{X}}

\newcommand{\hf}{\widehat{f}}

\newcommand{\encodedrs}{\textnormal{\texttt{Encoded-RS}}\xspace}
\newcommand{\augmentedge}{\textnormal{\texttt{Aug-Edges}}\xspace}
\newcommand{\augmentpath}{\textnormal{\texttt{Aug-Paths}}\xspace}
\newcommand{\augmentgraph}{\textnormal{\texttt{Aug-Graph}}\xspace}

\newcommand{\UU}{\mathcal{U}}

\newcommand{\start}[1]{\ensuremath{\textnormal{start}(#1)}\xspace}
\newcommand{\eend}[1]{\ensuremath{\textnormal{end}(#1)}\xspace}

\newcommand{\aug}[1]{\ensuremath{\textnormal{aug}(#1)}\xspace}
\newcommand{\baug}[1]{\ensuremath{\overline{\textnormal{aug}}(#1)}\xspace}
\newcommand{\rep}[1]{\ensuremath{\textnormal{rep}(#1)}\xspace}

\newcommand{\game}{\ensuremath{\textnormal{\textsf{HiddenMatching}}}\xspace}

\newcommand{\distaug}{\ensuremath{\dist_{\textnormal{aug-graph}}}\xspace}

\renewcommand{\cost}[1]{\ensuremath{\textnormal{cost}(#1)}\xspace}
\newcommand{\val}[1]{\ensuremath{\textnormal{value}(#1)}\xspace}
\newcommand{\out}[1]{\ensuremath{\textnormal{out}(#1)}\xspace}

\renewcommand{\ALG}{\ensuremath{\mathcal{A}}\xspace}

\newcommand{\tG}{\ensuremath{\widehat{G}}}

\newcommand{\rProt}{\rv{\Prot}}

\newcommand{\rG}{\rv{G}}
\newcommand{\MRS}{\ensuremath{M^{\mathsf{RS}}}\xspace}

\title{A Two-Pass Lower Bound for Semi-Streaming  Maximum Matching}
\author{Sepehr Assadi\footnote{(\texttt{sepehr.assadi@rutgers.edu}) Department of Computer Science, Rutgers University.  Research supported in part by a NSF CAREER Grant CCF-2047061, and a gift from Google Research.}  
}

\date{}

\begin{document}
\maketitle

\pagenumbering{roman}



\begin{abstract}

\smallskip

	We prove a  lower bound on the space complexity of \textbf{two-pass semi-streaming} algorithms that approximate the maximum matching problem. The lower bound is parameterized by the density of \emph{\rs graphs}: 
	
	\smallskip
	\begin{itemize}
	\item Any two-pass
	semi-streaming algorithm for maximum matching has approximation ratio at least $\paren{1- \Omega(\frac{\log{\RS(n)}}{\log{n}})}$, where $\RS(n)$ denotes the maximum number of induced matchings of size $\Theta(n)$ in any $n$-vertex graph, i.e., 
	the largest density of a \rs graph. 
	\end{itemize}
	
	\smallskip
	\noindent
	Currently, it is known that $n^{\Omega(1/\!\log\log{n})} \leq \RS(n) \leq \frac{n}{2^{O(\logstar{\!(n)})}}$ and closing this (large) gap between upper and lower bounds has remained a notoriously difficult problem in combinatorics. 
	
	\medskip
	Under the plausible hypothesis that 
	$\RS(n) = n^{\Omega(1)}$, our lower bound is the first to rule out  \textbf{small-constant approximation} two-pass semi-streaming algorithms for the maximum matching problem, making progress on a longstanding open question in the graph streaming literature. 
	
\end{abstract}

\clearpage

\setcounter{tocdepth}{3}
\tableofcontents

\clearpage

\pagenumbering{arabic}
\setcounter{page}{1}


\section{Introduction}\label{sec:intro}

The semi-streaming model of computation, introduced in~\cite{FeigenbaumKMSZ05}, has been at the forefront of research on processing massive graphs. In this model, the edges of an $n$-vertex graph $G=(V,E)$ 
are arriving one by one in a stream; the algorithm can only make one or a small number of passes over the stream and use a limited space of $O(n \cdot \poly\!\log{(n)})$ to solve a given problem on the input graph, say find 
a spanning tree of $G$. In this paper, we focus on the \emph{maximum matching} problem in the semi-streaming model. 

The maximum matching problem has been a cornerstone of research on semi-streaming algorithms and 
 been studied from numerous angles: single-pass algorithms~\cite{FeigenbaumKMSZ05,GoelKK12,Kapralov13,Kapralov21}, two-pass algorithms~\cite{KonradMM12,EsfandiariHM16,KaleT17,Konrad18}, $(1-\eps)$-approximation algorithms~\cite{McGregor05,AhnG11,EggertKMS12,AhnG18,Tirodkar18,GamlathKMS19,AssadiLT21,FischerMU21,AssadiJJST21}, random-order streams~\cite{KonradMM12,Konrad18,AssadiBBMS19,GamlathKMS19,FarhadiHMRR20,Bernstein20,AssadiB21}, 
dynamic streams~\cite{Konrad15,ChitnisCHM15,AssadiKLY16,ChitnisCEHMMV16, AssadiKL17,DarkK20}, weighted matchings~\cite{FeigenbaumKMSZ05,CrouchS14,PazS17,BernsteinDL21}, submodular matchings~\cite{ChakrabartiK14,ChekuriGQ15,LevinW21}, estimating  
size~\cite{KapralovKS14,EsfandiariHLMO15,BuryS15,McGregorV16,CormodeJMM17,McGregorV18,AssadiKL17,KapralovMNT20,AssadiKSY20,AssadiN21}, and exact algorithms and lower bounds~\cite{FeigenbaumKMSZ05,GuruswamiO13,AssadiR20,LiuSZ20,ChenKPSSY21,AssadiJJST21}, among others (this  is by no means a comprehensive summary of prior results). 

In this paper, we focus on \textbf{proving  lower bounds for constant-factor approximation} of the maximum matching problem via semi-streaming algorithms. 
A brief note on the history of this problem is in order. Alongside the introduction of semi-streaming model in~\cite{FeigenbaumKMSZ05}, the authors posed the problem of understanding approximation ratio of multi-pass algorithms for matchings. 
On the upper bound front, numerous results have since been shown for this problem, see, e.g.~\cite{McGregor05,AhnG11,KonradMM12,EggertKMS12,KaleT17,AhnG18,Konrad18,AssadiLT21,FischerMU21,Kapralov21,AssadiJJST21} and references therein. 
On the lower bound front however, the first result appeared almost a decade later in~\cite{GoelKK12} who showed that single-pass semi-streaming algorithms cannot achieve a better than
$(\nicefrac23)$-approximation; this ratio was soon improved to $(1-\nicefrac{1}{e})$-approximation by~\cite{Kapralov13} and very recently to $(\frac{1}{1+\ln{2}})$ in~\cite{Kapralov21}. 
Yet, almost another decade since~\cite{GoelKK12}, we still lack \emph{any} lower bounds for (constant-factor) approximation of the matching problem even in two passes!\footnote{We note that lower bounds for computing \emph{exact} matching up to (almost) $\log{n}$ passes 
are proven in~\cite{GuruswamiO13}; see also~\cite{AssadiR20,ChenKPSSY21}. These lower bounds however at best can only rule out $(1-\frac{1}{n^{o(1)}})$-approximation algorithms even in a 
single pass.} 

\subsection{Our Contribution}\label{sec:results}

We prove the first lower bound on the space complexity of \textbf{two-pass} semi-streaming algorithms that \textbf{approximate} the maximum matching problem. Our lower bound is parameterized by the density of \emph{\rs (RS) graphs} -- these are graphs 
whose edges can be partitioned into \emph{induced} matchings of size $\Theta(n)$ (see~\Cref{sec:rs}). We prove the following result: 
	
\begin{result}[\Cref{cor:stream-lower-RS}]\label{res:main}
	 Any two-pass semi-streaming algorithm for maximum matching (even on bipartite graphs) has approximation ratio at least $1 - \Omega(\frac{\log{ \RS(n)}}{\log{ n}})$, where $\RS(n)$ denotes the maximum number of disjoint induced matchings of size $\Theta(n)$ in any $n$-vertex 
	 graph. 
\end{result}
\noindent
Let us  put this result in some context. 
	\smallskip
	\noindent
	Currently, it is known that 
	\[
	n^{\Omega(1/\!\log\log{n})} \Leq{\cite{FischerLNRRS02}} \RS(n) \Leq{\cite{FoxHS15}} \frac{n}{2^{O(\logstar{\!(n)})}},
	\]
	 and closing this (large) gap between upper and lower bounds has remained a notoriously difficult problem in combinatorics~\cite{FoxHS15} (see also~\cite{Gowers01,ConlonF13}). 
	 With this in mind, we can think of our main result in one of the following two ways: 
	 \begin{itemize}
	 	\item \emph{Conditional lower bound}: Under the plausible hypothesis that $\RS(n)$ can be $n^{\beta}$ for some \emph{constant} $\beta \in (0,1)$, our result would rule out certain {small-constant factor} approximation of 
		maximum matching in two passes of the semi-streaming model; for instance, {assuming} $\RS(n) = n^{1-o(1)}$ (close to the current upper bounds), our lower bound states that 
		no two-pass semi-streaming algorithm can achieve an approximation ratio of $0.98$ for the maximum matching problem (see~\Cref{cor:stream-lower-RS} for the details). 
		
		For comparison, 
		the best known approximation ratio of two-pass semi-streaming algorithms is the $(2-\sqrt{2}) \approx 0.585$-approximation of~\cite{Konrad18} (see also~\cite{EsfandiariHM16}) for bipartite graphs and~$0.53$-approximation of~\cite{KaleT17} for non-bipartite ones. 
		
		\item \emph{Barrier result}: alternatively, our result can be interpreted that any sufficiently small constant factor approximation to matching in two-passes of semi-streaming model -- in particular, a $(1-\eps)$-approximation algorithm -- needs to (at the 
		very least) improve the upper bound on $\RS{(n)}$ from current bounds all the way to $n^{o(1)}$; this puts such a semi-streaming algorithm (seemingly) beyond the reach of current techniques. 
		
		For comparison, 
		current $(1-\eps)$-approximation semi-streaming algorithms require $O(\eps^{-1} \cdot \log{n})$ passes for bipartite graphs~\cite{AssadiJJST21} and non-bipartite ones~\cite{AhnG18},
		or $O(\eps^{-2})$ passes~\cite{AhnG11,AssadiLT21} for bipartite and $O(\poly(\eps^{-1}))$ for non-bipartite ones~\cite{FischerMU21}. 
	 \end{itemize}
	 
Finally, we shall note that, starting from~\cite{GoelKK12}, all previous single- and multi-pass lower bounds for semi-streaming matching problem in~\cite{GoelKK12,Kapralov13,AssadiKL17,AssadiR20,Kapralov21,ChenKPSSY21} were based on RS graphs -- the only exception is  the lower bound result of~\cite{GuruswamiO13} that only holds for exact algorithms (and is improved upon by~\cite{AssadiR20,ChenKPSSY21}). 
Nevertheless, for previous multi-pass lower bounds, even if one assumes $\RS{(n)}$ to be as large as the current best upper bounds, the best approximation ratio ruled out is still $(1-\frac{1}{\text{polylog}{(n)}})$ proven by~\cite{ChenKPSSY21} (which would hold for  (almost) $\log{n}$ passes under such an assumption on $\RS{(n)}$). 

\paragraph{Why \underline{two-pass} algorithms?}Traditionally, two-pass semi-streaming algorithms have been studied extensively 
as a way of breaking the lower bounds or  barriers for single-pass algorithms. For instance,~\cite{KonradMM12,EsfandiariHM16,KaleT17,Konrad18} developed two-pass algorithms for matching with approximation ratio
that breaks the notorious ``$(\nicefrac12)$-approximation barrier'' for current single-pass algorithms. Going beyond the matching problem, it is now a 
established phenomenon that two-pass algorithms can be surprisingly more powerful than single-pass ones; for instance,~\cite{AssadiD21} (building on~\cite{RubinsteinSW18,GhaffariNT20}), gave an $O(n)$ space algorithm for finding 
an exact minimum cut in two passes, while it is known that single-pass algorithms require $\Omega(n^2)$ space for this problem~\cite{Zelke11}; similar separations are also known for the triangle counting problem~\cite{BulteauFKP16,CormodeJ17}, 
among others.  

More recently, there has been a growing interest in proving lower bounds tailored specifically to two-pass streaming algorithms~\cite{AssadiR20,ChenKPS0Y21} (see also~\cite{GargRT19} for an example beyond graph streams). This line of work is 
motivated by both further understanding of two-pass algorithms as the ``second best option'' after single-pass algorithms, as well as a stepping stone for proving stronger multi-pass lower bounds; for instance, many of the ideas  
developed in the two-pass lower bound of~\cite{AssadiR20} (for reachability and exact matching) were used subsequently in the work of~\cite{ChenKPSSY21} that improved the lower bound to $\Omega(\sqrt{\log{n}})$-pass algorithms. Indeed, 
there are several technical difficulties in proving multi-pass lower bounds compared to single-pass ones which are already manifested  when allowing two passes over the input; we elaborate on these challenges when going over our techniques in the subsequent section.

\subsection{Our Techniques and Comparison with Prior Work}\label{sec:techniques} 

Our paper builds on and extend several lines of work on proving streaming lower bounds for single- and multi-pass algorithms: $(i)$ the single-pass RS-graph based lower bound approaches of~\cite{GoelKK12,Kapralov13,Kapralov21}, 
$(ii)$ the two-pass lower bound framework of~\cite{AssadiR20}, $(iii)$ the ``XOR gadgets'' approaches of~\cite{AssadiB21,ChenKPS0Y21}, and finally $(iv)$ streaming 
``XOR Lemmas'' for proving lower bounds for XOR gadgets~\cite{AssadiN21,ChenKPS0Y21}. 
We now elaborate on each of these. 

\paragraph{$(i)$ Single-pass RS-graph based lower bound approaches of~\cite{GoelKK12,Kapralov13,Kapralov21}.} The idea behind the single-pass lower bound of~\cite{GoelKK12} is as follows. The first part of the stream consists of an RS graph with induced matchings of size $\Theta(n)$ (known a-priori) whose $o(1)$-fraction of its edges have been dropped randomly to increase its entropy to almost 
$\Omega(n \cdot \RS{(n)})$. The second part of the stream is created by sampling a random induced matching in the RS graph, and presenting a perfect matching from a new set of vertices to vertices of the RS graph \emph{not} participating in this induced matching. See~\Cref{fig:1-way} below.

\begin{figure}[h!]
\centering

\begin{tikzpicture}

\tikzset{layer/.style={rectangle, rounded corners=5pt, draw, black, line width=1pt,  fill=black!10, inner sep=4pt}}
\tikzset{vertex/.style={circle, ForestGreen, fill=white, line width=2pt, draw, minimum width=8pt, minimum height=8pt, inner sep=0pt}}
\tikzset{choose/.style={rectangle, line width=1pt, rounded corners = 5pt, draw, minimum height=5pt, fill=black!10}}
		
	\node(V1) {};
	\node[vertex] (a1) [left=-15pt of V1]{};
	\node[vertex] (a2) [right=15pt of a1]{};
	\node[vertex] (a3) [right=15pt of a2]{};
	\node[vertex] (a4) [right=15pt of a3]{};
	\node[vertex] (a5) [right=15pt of a4]{};
	\node[vertex] (a6) [right=15pt of a5]{};
		
	\node (V2) [below = 30pt of V1]{};
	\node[vertex] (b1) [left=-15pt of V2]{};
	\node[vertex] (b2) [right=15pt of b1]{};
	\node[vertex] (b3) [right=15pt of b2]{};
	\node[vertex] (b4) [right=15pt of b3]{};
	\node[vertex] (b5) [right=15pt of b4]{};
	\node[vertex] (b6) [right=15pt of b5]{};
	
	\node[choose,  fit=(a1) (a3)]{};
	\node[choose, fit=(a4) (a6)]{};
	\node[choose, fit=(b1) (b3)]{};
	\node[choose, fit=(b4) (b6)]{};
	

	\node[dashed, draw, rounded corners, line width=1pt, inner sep=7pt,  fit=(a1) (b6)] (RS) {};	
	\node[above =1pt of RS] {RS graph};
	
	\node[vertex] (a1) [left=-15pt of V1]{};
	\node[vertex] (a2) [right=15pt of a1]{};
	\node[vertex] (a3) [right=15pt of a2]{};
	\node[vertex] (a4) [right=15pt of a3]{};
	\node[vertex] (a5) [right=15pt of a4]{};
	\node[vertex] (a6) [right=15pt of a5]{};
	
	\node[vertex] (b1) [left=-15pt of V2]{};
	\node[vertex] (b2) [right=15pt of b1]{};
	\node[vertex] (b3) [right=15pt of b2]{};
	\node[vertex] (b4) [right=15pt of b3]{};
	\node[vertex] (b5) [right=15pt of b4]{};
	\node[vertex] (b6) [right=15pt of b5]{};
	
	\node(V3) [left=75 pt of V1]{};
	\node[vertex, blue!50, fill=blue!20] (c1) [left=-15pt of V3]{};
	\node[vertex, blue!50, fill=blue!20]  (c2) [right=15pt of c1]{};
	\node[vertex, blue!50, fill=blue!20]  (c3) [right=15pt of c2]{};
	
	\node(V4) [below=30 pt of V3]{};
	\node[vertex, blue!50, fill=blue!20] (d1) [left=-15pt of V4]{};
	\node[vertex, blue!50, fill=blue!20]  (d2) [right=15pt of d1]{};
	\node[vertex, blue!50, fill=blue!20]  (d3) [right=15pt of d2]{};

	\node[choose,  fit=(c1) (c3)]{};
	\node[choose,  fit=(d1) (d3)]{};
	
	\draw[line width=1pt, blue]	
	(c1.center) -- (b1.center)
	(c2.center) -- (b2.center)
	(c3.center) -- (b3.center)
	
	(d1.center) -- (a1.center)
	(d2.center) -- (a2.center)
	(d3.center) -- (a3.center);
	
	\draw[line width=2pt, red, dashed]	
	(a4.center) -- (b4.center)
	(a5.center) -- (b5.center)
	(a6.center) -- (b6.center);	
	\node[vertex, blue!50, fill=blue!20] (c1) [left=-15pt of V3]{};
	\node[vertex, blue!50, fill=blue!20]  (c2) [right=15pt of c1]{};
	\node[vertex, blue!50, fill=blue!20]  (c3) [right=15pt of c2]{};
	
	\node[vertex, blue!50, fill=blue!20] (d1) [left=-15pt of V4]{};
	\node[vertex, blue!50, fill=blue!20]  (d2) [right=15pt of d1]{};
	\node[vertex, blue!50, fill=blue!20]  (d3) [right=15pt of d2]{};
	
\end{tikzpicture}
\caption{An illustration of approach of~\cite{GoelKK12}. The RS graph, whose edges are dropped w.p. $o(1)$, appears first. The figure only draws edges of second part (solid blue) and the special induced matching of first part (dashed red).}
\label{fig:1-way}
\end{figure}
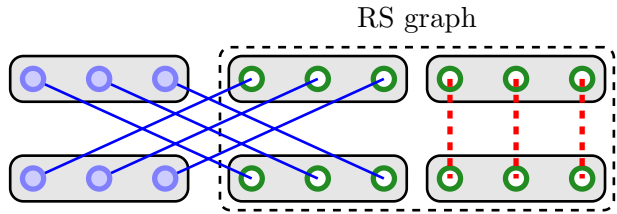

This graph has a near perfect matching but all of its ``large enough'' matchings require using the edges of the special induced matching of the RS graph. At the same time, since the algorithm was oblivious to the identity of this special induced matching in the first part, assuming its memory was $o(n \cdot \RS{(n)})$, it would reduce the entropy of edges of this induced matching by $o(n)$. This only allows the algorithm to output $o(1)$ fraction of edges of the special induced matching at the end without the risk 
of outputting an edge which is dropped from the graph. The follow up work in~\cite{Kapralov13,Kapralov21} then extend this idea by considering multiple parts in the stream and keeping the identity of a large induced matching hidden until the end. 


Nevertheless, it can be seen that this lower bound is inherently tailored to single-pass algorithms: a two-pass algorithm would \emph{reveal} the identity of the special induced matching in the first pass and so in the next pass, the algorithm can simply 
store only these edges of the RS graph in $O(n)$ space.  This is the first challenge we need to overcome in our work. 

\paragraph{$(ii)$ Two-pass lower bound framework of~\cite{AssadiR20}.} The work of~\cite{AssadiR20} developed a framework for proving two-pass lower bounds for several problems including \emph{exact} maximum matching. This framework also used RS graphs but \emph{in an entirely different way}, in particular, for ``hiding'' the information revealed to the second pass of the algorithm. In~\cite{AssadiR20}, the input graph consists of a random bipartite graph and two ``gadget RS graphs'' that each \emph{choose a single vertex} from this random graph, with the following property: the input has a perfect matching iff there is an edge between the chosen vertices. See~\Cref{fig:2-pass} for an illustration.

\begin{figure}[h!]
\centering

\begin{tikzpicture}

\tikzset{layer/.style={rectangle, rounded corners=5pt, draw, black, line width=1pt,  fill=black!10, inner sep=4pt}}
\tikzset{vertex/.style={circle, ForestGreen, fill=white, line width=2pt, draw, minimum width=8pt, minimum height=8pt, inner sep=0pt}}
\tikzset{choose/.style={rectangle, line width=1pt, rounded corners = 5pt, draw, minimum height=5pt, fill=black!10}}

	\node[vertex] (a1) {}; 
	\node[vertex] (a2) [below=2pt of a1]{};
	\node[vertex] (a3) [below=2pt of a2]{};
	\node[vertex] (a4) [below=2pt of a3]{};
	\node[vertex] (a5) [below=2pt of a4]{};
	\node[vertex] (a6) [below=2pt of a5]{};
	\node[choose,  fit=(a1) (a6)]{};
	
	\node[vertex] (b1) [right=30pt of a1]{}; 
	\node[vertex] (b2) [below=2pt of b1]{};
	\node[vertex] (b3) [below=2pt of b2]{};
	\node[vertex] (b4) [below=2pt of b3]{};
	\node[vertex] (b5) [below=2pt of b4]{};
	\node[vertex] (b6) [below=2pt of b5]{};
	
	\node[choose,  fit=(b1) (b6)]{};
	
	\node[dashed, draw, rounded corners, line width=1pt, inner sep=7pt,  fit=(a1) (b6)] (RS) {};	
	\node[above =1pt of RS] {Random graph};
	
	\node[vertex] (x1) [left =120 pt of a1]{};
	
	\node[vertex] (x21) [left=30 pt of a1]{};
	\node[vertex] (x22) [left=30 pt of a2]{};
	\node[vertex] (x23) [left=30 pt of a3]{};
	\node[vertex] (x24) [left=30 pt of a4]{};
	\node[vertex] (x25) [left=30 pt of a5]{};
	\node[vertex] (x26) [left=30 pt of a6]{};
	
	\node[choose,  fit=(x1) (x26)]{RS gadget};
	
	\node[vertex] (y1) [right = 120 pt of b1]{};
	\node[vertex] (y21) [right=30 pt of b1]{};
	\node[vertex] (y22) [right=30 pt of b2]{};
	\node[vertex] (y23) [right=30 pt of b3]{};
	\node[vertex] (y24) [right=30 pt of b4]{};
	\node[vertex] (y25) [right=30 pt of b5]{};
	\node[vertex] (y26) [right=30 pt of b6]{};
	
	\node[choose,  fit=(y1) (y26)]{RS gadget};
	
	\draw[line width=1pt, blue]	
	(a1.center) -- (x21.center)
	(a2.center) -- (x22.center)	
	(a3.center) -- (x23.center)
	(a4.center) -- (x24.center)	
	(a6.center) -- (x26.center)	
	
	
	(b1.center) -- (y21.center)
	(b2.center) -- (y22.center)	
	(b4.center) -- (y24.center)	
	(b5.center) -- (y25.center)
	(b6.center) -- (y26.center)	;

	\node[vertex] (a4) [below=15pt of a3]{};
	\node[vertex] (b2) [below=15pt of b1]{};
	
		\draw[line width=2pt, dashed, red]	
	(b2.center) -- (a4.center);

\end{tikzpicture}
\caption{An illustration of the framework of~\cite{AssadiR20}. The first part of the stream consists of a random graph, while the second part identify two vertices of this random graph as special using RS gadgets.}
\label{fig:2-pass}
\end{figure}
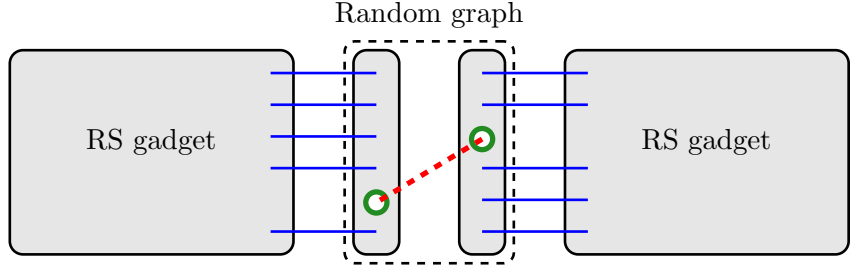

Beyond going into exact details, we mention that the RS gadget has the following property that a single-pass streaming algorithm is not able to identify the special vertex chosen for the gadget. As such, even after the first pass of the stream, the identity of which edge of the random graph is important to ``remember'' is not known to the algorithm, thus the algorithm needs to remember essentially all edges of the random graph in the second pass in order to solve the problem. 

This lower bound is specifically tailored to the perfect matching problem: the RS gadget can only ``hide'' a single vertex, which is not suitable for lower bounds for approximation algorithms. Thus, the second challenge is to work with RS gadgets 
that allow for hiding more than vertices. 

\paragraph{$(iii)$ the ``XOR gadgets'' approaches of~\cite{AssadiB21,ChenKPS0Y21}.} The work of~\cite{AssadiB21} gave a lower bound of $(1-\Theta(\frac{1}{\log{n}}))$-approximation for semi-streaming algorithms of
the matching problem in \emph{random-order} streams. The idea of the lower bound is to follow the approach of~\cite{GoelKK12} described in part $(i)$ above, but \emph{hide} the identity of the induced matching (in a random-order stream, revealing the 
first $O(n\log{n})$ edges of lower bound of~\cite{GoelKK12} reveals which of the induced matchings in the RS graph is special). This is done by replacing each of the edges of the perfect matching to vertices not in the special induced matching, by a path of 
length $\Theta(\log{n})$ that has an ``ON-OFF switch'': ``ON'' means we should leave the last vertex unmatched, and ``OFF'' means we should match it \emph{inside} the gadget. These paths are then put together in a way that 
only vertices of a random induced matching of the RS graph are ON and other vertices are OFF. See~\Cref{fig:part-3} below. 

\begin{figure}[h!]
\centering

\begin{tikzpicture}

\tikzset{layer/.style={rectangle, rounded corners=5pt, draw, black, line width=1pt,  fill=black!10, inner sep=4pt}}
\tikzset{vertex/.style={circle, ForestGreen, fill=white, line width=2pt, draw, minimum width=8pt, minimum height=8pt, inner sep=0pt}}
\tikzset{choose/.style={rectangle, line width=1pt, rounded corners = 5pt, draw, minimum height=5pt, fill=black!10}}

	\node[vertex] (a1) {}; 
	\node[vertex] (a2) [below=2pt of a1]{};
	\node[vertex] (a3) [below=2pt of a2]{};
	\node[vertex] (a4) [below=2pt of a3]{};
	\node[vertex] (a5) [below=2pt of a4]{};
	\node[vertex] (a6) [below=2pt of a5]{};
	\node[choose,  fit=(a1) (a6)]{};
	
	\node[vertex] (b1) [right=30pt of a1]{}; 
	\node[vertex] (b2) [below=2pt of b1]{};
	\node[vertex] (b3) [below=2pt of b2]{};
	\node[vertex] (b4) [below=2pt of b3]{};
	\node[vertex] (b5) [below=2pt of b4]{};
	\node[vertex] (b6) [below=2pt of b5]{};
	
	\node[choose,  fit=(b1) (b6)]{};
	
	\node[dashed, draw, rounded corners, line width=1pt, inner sep=7pt,  fit=(a1) (b6)] (RS) {};	
	\node[above =1pt of RS] {RS graph};
	
	\node[vertex] (x1) [left =150 pt of a1]{};
	\node[vertex] (x2) [left =150 pt of a2]{};
	\node[vertex] (x3) [left =150 pt of a3]{};
	\node[vertex] (x4) [left =150 pt of a4]{};
	\node[vertex] (x5) [left =150 pt of a5]{};
	\node[vertex] (x6) [left =150 pt of a6]{};
	
	\node [left=10 pt of x1] {OFF};
	\node [left=10 pt of x2] {OFF};
	\node [left=10 pt of x3] {OFF};
	\node [left=10 pt of x4] {ON};
	\node [left=10 pt of x5] {ON};
	\node [left=10 pt of x6] {ON};

	\node[vertex] (x21) [left=30 pt of a1]{};
	\node[vertex] (x22) [left=30 pt of a2]{};
	\node[vertex] (x23) [left=30 pt of a3]{};
	\node[vertex] (x24) [left=30 pt of a4]{};
	\node[vertex] (x25) [left=30 pt of a5]{};
	\node[vertex] (x26) [left=30 pt of a6]{};
	
	\node[choose, inner sep = 0pt,  fit=(x1) (x21)]{};
	\node[choose, inner sep = 0pt,  fit=(x2) (x22)]{};
	\node[choose, inner sep = 0pt,  fit=(x3) (x23)]{};
	\node[choose, inner sep = 0pt,  fit=(x4) (x24)]{};
	\node[choose, inner sep = 0pt,  fit=(x5) (x25)]{};
	\node[choose, inner sep = 0pt,  fit=(x6) (x26)]{};
						
	\node[vertex] (y1) [right = 150 pt of b1]{};
	\node[vertex] (y2) [right = 150 pt of b2]{};
	\node[vertex] (y3) [right = 150 pt of b3]{};
	\node[vertex] (y4) [right = 150 pt of b4]{};
	\node[vertex] (y5) [right = 150 pt of b5]{};
	\node[vertex] (y6) [right = 150 pt of b6]{};
	
		\node [right=10 pt of y1] {OFF};
	\node [right=10 pt of y2] {OFF};
	\node [right=10 pt of y3] {OFF};
	\node [right=10 pt of y4] {ON};
	\node [right=10 pt of y5] {ON};
	\node [right=10 pt of y6] {ON};
	
	\node[vertex] (y21) [right=30 pt of b1]{};
	\node[vertex] (y22) [right=30 pt of b2]{};
	\node[vertex] (y23) [right=30 pt of b3]{};
	\node[vertex] (y24) [right=30 pt of b4]{};
	\node[vertex] (y25) [right=30 pt of b5]{};
	\node[vertex] (y26) [right=30 pt of b6]{};
	
	\node[choose, inner sep = 0pt,  fit=(y1) (y21)]{};
	\node[choose, inner sep = 0pt,  fit=(y2) (y22)]{};
	\node[choose, inner sep = 0pt,  fit=(y3) (y23)]{};
	\node[choose, inner sep = 0pt,  fit=(y4) (y24)]{};
	\node[choose, inner sep = 0pt,  fit=(y5) (y25)]{};
	\node[choose, inner sep = 0pt,  fit=(y6) (y26)]{};

	
	\draw[line width=1pt, blue]	
	(a1.center) -- (x21.center)
	(a2.center) -- (x22.center)	
	(a3.center) -- (x23.center)
	
	
	(b1.center) -- (y21.center)
	(b2.center) -- (y22.center)	
	(b3.center) -- (y23.center);

	\node[vertex] (w1) [below=15pt of a2]{};
	\node[vertex] (w2) [below=15pt of a3]{};
	\node[vertex] (w3) [below=15pt of a4]{};
	\node[vertex] (z1) [below=15pt of b2]{};
	\node[vertex] (z2) [below=15pt of b3]{};
	\node[vertex] (z3) [below=15pt of b4]{};
	
	\draw[line width=2pt, dashed, red]	
	(w1.center) -- (z1.center)
	(w2.center) -- (z2.center)
	(w3.center) -- (z3.center);

\end{tikzpicture}
\caption{An illustration of the lower bound~\cite{AssadiB21}. The paths with their switch ON choose a special matching of the middle RS graph, whose edges should be used in every large matching of the input graph.}
\label{fig:part-3}
\end{figure}
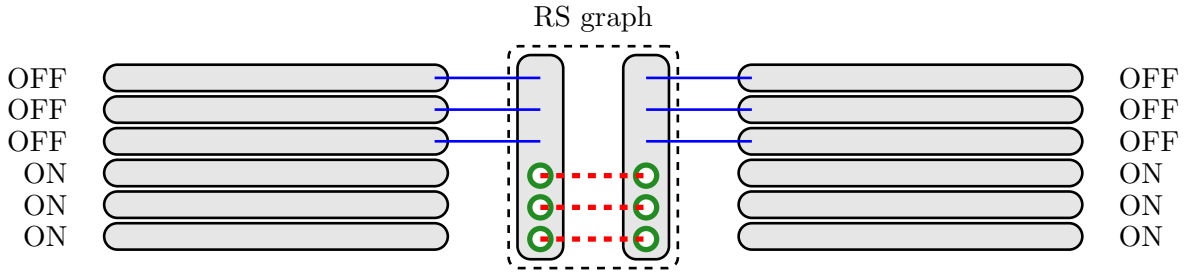

These paths, constructed using properties of the XOR function, have the property that even if one knows all but one edge of the path, it is still not clear whether the path is ON or OFF. Thus, in a random-order stream, with high probability, 
one edge from each of these paths (of length $\Theta(\log{n})$) are missing by the mid-point of the stream, forcing the algorithm to have to remember almost all of the edges of the RS graph visited so far. 


Concurrently to~\cite{AssadiB21},~\cite{ChenKPS0Y21} used a similar approach of using some type of XOR gadgets, combined with the framework of~\cite{AssadiR20} to hide more than one vertices of the graph from a single-pass algorithm. By an intricate 
combination of creating XOR of RS graphs, the authors of~\cite{ChenKPS0Y21} create a graph that have the following property: there is a set of $S$ of size $n^{1-o(1)}$ vertices with switches as described earlier, such that any $o(\log{n})$-pass semi-streaming algorithm cannot determine whether or not any of these vertices is ON or OFF. This allows the author to extend the lower bound of~\cite{AssadiR20} to 
$o(\log{n})$-pass algorithms that can approximate the matching to within a $(1-\frac{1}{n^{o(1)}})$-approximation. 

We note that while~\cite{ChenKPS0Y21} stops at getting a $(1-\frac{1}{n^{o(1)}})$-approximation, there is a natural way of combining their work and the approach of~\cite{GoelKK12} in part $(i)$ as in~\cite{AssadiB21}, so that 
the hidden set $S$ can determine which induced matching of the middle RS graph is special. See~\Cref{fig:part-4}. 

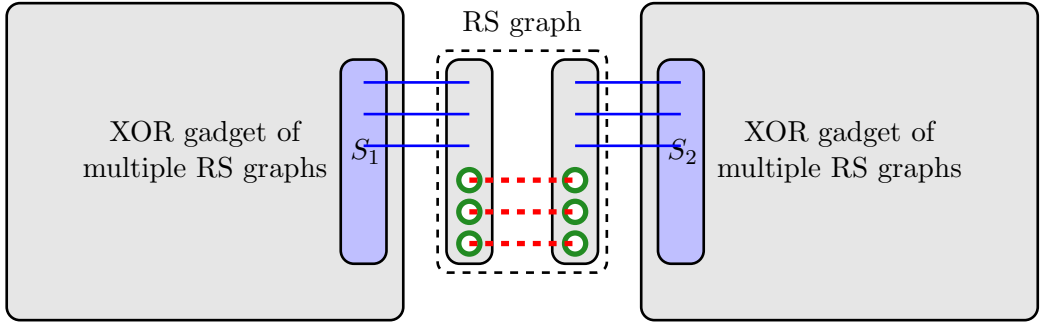
\begin{figure}[h!]
\centering

\begin{tikzpicture}

\tikzset{layer/.style={rectangle, rounded corners=5pt, draw, black, line width=1pt,  fill=black!10, inner sep=4pt}}
\tikzset{vertex/.style={circle, ForestGreen, fill=white, line width=2pt, draw, minimum width=8pt, minimum height=8pt, inner sep=0pt}}
\tikzset{choose/.style={rectangle, line width=1pt, rounded corners = 5pt, draw, minimum height=5pt, fill=black!10}}

	\node[vertex] (a1) {}; 
	\node[vertex] (a2) [below=2pt of a1]{};
	\node[vertex] (a3) [below=2pt of a2]{};
	\node[vertex] (a4) [below=2pt of a3]{};
	\node[vertex] (a5) [below=2pt of a4]{};
	\node[vertex] (a6) [below=2pt of a5]{};

	\node[choose,  fit=(a1) (a6)]{};
	
	\node[vertex] (b1) [right=30pt of a1]{}; 
	\node[vertex] (b2) [below=2pt of b1]{};
	\node[vertex] (b3) [below=2pt of b2]{};
	\node[vertex] (b4) [below=2pt of b3]{};
	\node[vertex] (b5) [below=2pt of b4]{};
	\node[vertex] (b6) [below=2pt of b5]{};
	
	\node[choose,  fit=(b1) (b6)]{};
	
	\node[dashed, draw, rounded corners, line width=1pt, inner sep=7pt,  fit=(a1) (b6)] (RS) {};	
	\node[above =1pt of RS] {RS graph};
	
	\node[vertex] (x1) [left =120 pt of a1]{};	
	\node[vertex] (x21) [left=30 pt of a1]{};
	\node[vertex] (x22) [left=30 pt of a2]{};
	\node[vertex] (x23) [left=30 pt of a3]{};
	\node[vertex] (x24) [left=30 pt of a4]{};
	\node[vertex] (x25) [left=30 pt of a5]{};
	\node[vertex] (x26) [left=30 pt of a6]{};

	\node[vertex] (xx0) [above =5pt of x1]{};	
	\node[vertex] (xx1) [below =5pt of x6]{};	
	
	\node[choose,  fit=(xx0) (xx1) (x26), inner sep=10pt ]{XOR gadget of \\  multiple RS graphs};
	
	\node[choose,  fit=(x21) (x26), fill=blue!25]{$S_1$};

	\node[vertex] (y1) [right = 120 pt of b1]{};
	\node[vertex] (y21) [right=30 pt of b1]{};
	\node[vertex] (y22) [right=30 pt of b2]{};
	\node[vertex] (y23) [right=30 pt of b3]{};
	\node[vertex] (y24) [right=30 pt of b4]{};
	\node[vertex] (y25) [right=30 pt of b5]{};
	\node[vertex] (y26) [right=30 pt of b6]{};
	
	\node[vertex] (yy0) [above =5pt of y1]{};	
	\node[vertex] (yy1) [below =5pt of y6]{};	
	
	\node[choose,  fit=(yy0) (yy1) (y26),  inner sep=10pt ]{XOR gadget of \\  multiple RS graphs};
	
	\node[choose,  fit=(y21) (y26), fill=blue!25, ]{$S_2$};
	
	\draw[line width=1pt, blue]	
	(a1.center) -- (x21.center)
	(a2.center) -- (x22.center)	
	(a3.center) -- (x23.center)
	(b1.center) -- (y21.center)
	(b2.center) -- (y22.center)	
	(b3.center) -- (y23.center);
	
	\node[vertex] (w1) [below=15pt of a2]{};
	\node[vertex] (w2) [below=15pt of a3]{};
	\node[vertex] (w3) [below=15pt of a4]{};
	\node[vertex] (z1) [below=15pt of b2]{};
	\node[vertex] (z2) [below=15pt of b3]{};
	\node[vertex] (z3) [below=15pt of b4]{};
	
	\draw[line width=2pt, dashed, red]	
	(w1.center) -- (z1.center)
	(w2.center) -- (z2.center)
	(w3.center) -- (z3.center);

\end{tikzpicture}
\caption{An illustration of the lower bound~\cite{ChenKPS0Y21} if combined with the RS graph lower bound of~\cite{GoelKK12}. }
\label{fig:part-4}
\end{figure}

Nevertheless, the approach of~\cite{ChenKPS0Y21} requires plugging in $\Omega(\log{n})$ ``XOR of RS graphs'' gadgets together to get their lower bound, thus even with the above approach, the best lower bound would be $(1-\frac{1}{\Theta(\log{n})})$-approximation (even if we assume $\RS{(n)} = n^{1-o(1)}$). 

\paragraph{$(iv)$ ``XOR Lemmas'' for proving lower bounds for XOR gadgets~\cite{AssadiN21,ChenKPS0Y21}.} Finally, let us also mention how previous work proved lower bounds for XOR gadgets. Suppose we have a streaming problem $P$ from $\set{0,1}^n \rightarrow \set{0,1}$ such that solving $P(x)$ for $x$ sampled from some distribution $\mu$ with probability of success, say, $2/3$, requires $p$-passes and $s$-space. Then, how well can we solve $P(x_1) \oplus P(x_2) \oplus \ldots \oplus P(x_k)$ 
for $k$ \emph{independent} choices of $x_1,\ldots,x_k \sim \mu^k$ via streaming algorithms? Such questions are generally referred to as XOR lemmas (in-spirit-of Yao's celebrated XOR Lemma~\cite{Yao82a}), and are the key in proving streaming lower bounds for different ``XOR gadgets'' in prior work such as~\cite{AssadiN21,ChenKPS0Y21} which require ``low-probability'' lower bounds, i.e., lower bounds that rule out even $1/\poly{(n)}$ advantage over random guessing (crucial for ``hiding'' $\Omega(n)$
vertices by union bound/hybrid arguments). In particular, 
\begin{itemize}
	\item~\cite{AssadiN21} proves that in the streaming setting, any $p$-pass $s$-space algorithm for $\oplus_{i=1}^{k} P(x_i)$ on the stream $x_1 \circ x_2 \ldots \circ x_k$ can only gain an advantage of $1/2^{\Omega(k)}$ over random guessing (as shown 
	in~\cite{AssadiN21}, this is the strongest form of XOR lemma possible in that setting). 
	\item~\cite{ChenKPS0Y21} proves that in the special case of $P$ being the \emph{Indexing function} from communication complexity, any single-pass $o(k \cdot n^{1-\Omega(1)})$-space algorithm for $\oplus_{i=1}^{k} P(x_i)$ on a certain \emph{interleaved} stream of $(x_1,\ldots,x_k)$ can only gain an advantage of $1/n^{\Omega(k)}$ over random guessing. 
\end{itemize}

The challenge in using either of these approaches for our purpose (say in a framework like~\Cref{fig:part-4}) is that they (naturally) require \emph{independent} input distributions for $x_1,\ldots,x_k$. In the context of the XOR gadget of RS graphs, this would force one to use \emph{multiple} RS graphs in the construction of the gadget. This in turn reduces the ratio of the number of edges in the hidden induced matching, to the total matching size of the graph, thus significantly reducing the bounds we 
can prove on the approximation ratio in the lower bounds.  

\subsubsection*{Our Approach} 
\vspace{-0.2cm}
In brief, we combine the framework of part $(ii)$ with the approach of part $(i)$ to ``hide'' the special induced matching from the first pass of the streaming algorithm. To do the hiding, we use a \emph{new graph product} by plugging in the XOR gadget of part $(iii)$ into a \emph{single} induced matching of an RS graph (instead of using multiple RS graphs as in part $(iii)$). Finally, we prove a XOR lemma for the case that the inputs of XOR gadgets are not  independent (coming from different RS graphs) but rather all are imposed on edges of a \emph{single} induced matching 
in an RS graph (as we will point out below, this requires an inherently different approach than part $(iv)$). See~\Cref{fig:tech} below. 
\vspace{-0.15cm}
\begin{figure}[h!]
\centering

\begin{tikzpicture}

\tikzset{layer/.style={rectangle, rounded corners=5pt, draw, black, line width=1pt,  fill=black!10, inner sep=4pt}}
\tikzset{vertex/.style={circle, ForestGreen, fill=white, line width=2pt, draw, minimum width=8pt, minimum height=8pt, inner sep=0pt}}
\tikzset{choose/.style={rectangle, line width=1pt, rounded corners = 5pt, draw, minimum height=5pt, fill=black!10}}

	\node[vertex] (a1) {}; 
	\node[vertex] (a2) [below=2pt of a1]{};
	\node[vertex] (a3) [below=2pt of a2]{};
	\node[vertex] (a4) [below=2pt of a3]{};
	\node[vertex] (a5) [below=2pt of a4]{};
	\node[vertex] (a6) [below=2pt of a5]{};

	\node[choose,  fit=(a1) (a6)]{};
	
	\node[vertex] (b1) [right=30pt of a1]{}; 
	\node[vertex] (b2) [below=2pt of b1]{};
	\node[vertex] (b3) [below=2pt of b2]{};
	\node[vertex] (b4) [below=2pt of b3]{};
	\node[vertex] (b5) [below=2pt of b4]{};
	\node[vertex] (b6) [below=2pt of b5]{};
	
	\node[choose,  fit=(b1) (b6)]{};
	
	\node[dashed, draw, rounded corners, line width=1pt, inner sep=7pt,  fit=(a1) (b6)] (RS) {};	
	\node[above =1pt of RS] {RS graph};
	
	\node[vertex] (x1) [left =150 pt of a1]{};	
	\node[vertex] (x21) [left=30 pt of a1]{};
	\node[vertex] (x22) [left=30 pt of a2]{};
	\node[vertex] (x23) [left=30 pt of a3]{};
	\node[vertex] (x24) [left=30 pt of a4]{};
	\node[vertex] (x25) [left=30 pt of a5]{};
	\node[vertex] (x26) [left=30 pt of a6]{};

	\node[choose,  fit=(x1) (x26), inner sep=10pt ](RS1){};
	\node[above=5pt of RS1]{Single RS graph};
	
	\node[choose,  fit=(x21) (x26), fill=blue!25] (S1) {$S_1$};
	
	\node[vertex] (ss21) [left=40 pt of x21]{};
	\node[vertex] (ss26) [left=40 pt of x26]{};

	\node[choose,  fit=(ss21) (ss26), fill=blue!25] (T1) {};

	\node[choose,  fit=(S1) (T1), fill=red!25] {\footnotesize XOR \\ gadgets};
	\node[choose,  fit=(x21) (x26), fill=blue!25] (S1) {$S_1$};
	\node[choose,  fit=(ss21) (ss26), fill=blue!25] (T1) {};

	\node[vertex] (y1) [right = 150 pt of b1]{};
	\node[vertex] (y21) [right=30 pt of b1]{};
	\node[vertex] (y22) [right=30 pt of b2]{};
	\node[vertex] (y23) [right=30 pt of b3]{};
	\node[vertex] (y24) [right=30 pt of b4]{};
	\node[vertex] (y25) [right=30 pt of b5]{};
	\node[vertex] (y26) [right=30 pt of b6]{};
	
	\node[choose,  fit=(y1)  (y26),  inner sep=10pt] (RS2){};
	\node[above=5pt of RS2]{Single RS graph};
	
	\node[choose,  fit=(y21) (y26), fill=blue!25] (S2) {$S_2$};
	
	\node[vertex] (tt21) [right=40 pt of y21]{};
	\node[vertex] (tt26) [right=40 pt of y26]{};

	\node[choose,  fit=(tt21) (tt26), fill=blue!25] (T2) {};

	\node[choose,  fit=(S2) (T2), fill=red!25] {\footnotesize XOR \\ gadgets};
		\node[choose,  fit=(y21) (y26), fill=blue!25] (S2) {$S_2$};
		\node[choose,  fit=(tt21) (tt26), fill=blue!25] (T2) {};

	\draw[line width=1pt, blue]	
	(a1.center) -- (x21.center)
	(a2.center) -- (x22.center)	
	(a3.center) -- (x23.center)
	(b1.center) -- (y21.center)
	(b2.center) -- (y22.center)	
	(b3.center) -- (y23.center);
	
	\node[vertex] (w1) [below=15pt of a2]{};
	\node[vertex] (w2) [below=15pt of a3]{};
	\node[vertex] (w3) [below=15pt of a4]{};
	\node[vertex] (z1) [below=15pt of b2]{};
	\node[vertex] (z2) [below=15pt of b3]{};
	\node[vertex] (z3) [below=15pt of b4]{};
	
	\draw[line width=2pt, dashed, red]	
	(w1.center) -- (z1.center)
	(w2.center) -- (z2.center)
	(w3.center) -- (z3.center);

\end{tikzpicture}
\caption{An illustration of our lower bound approach. The ``side'' graphs switching $S_1$ and $S_2$ are each a \emph{single} RS graph, where XOR gadgets are imposed as part of a \emph{single} one of their induced matchings.}
\label{fig:tech}
\end{figure}
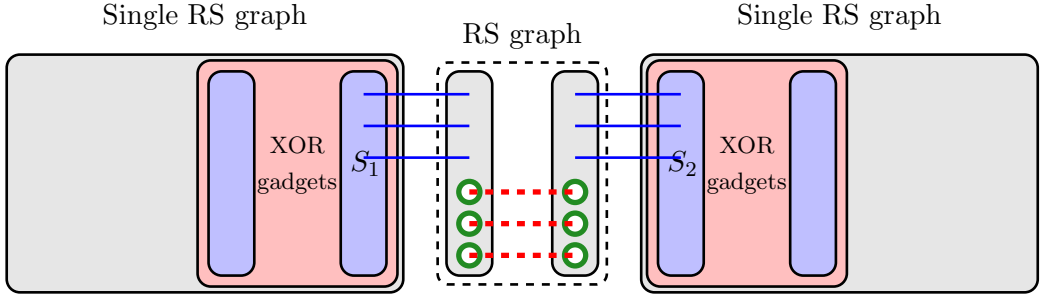

This way, we can prove that the first pass of the semi-streaming algorithm is not able to identify the switches of sets $S_1$ and $S_2$ in the construction above, thus leaving the identity of the special induced matching hidden. The crucial aspect 
of this part is that we can achieve such a gadget with size proportional to that of the hidden induced matching. In the second pass, we show that a semi-streaming algorithm that is unaware of the identity of the special induced 
matching in its second pass will not be able to obtain a sufficiently large approximation to the maximum matching. 

Several technical challenges need to be addressed in implementing this strategy. Beside the exact details of how to modify an RS graph to encode these gadgets without blowing up their size, and how to adapt the framework of~\cite{AssadiR20} to 
handle hiding subsets of vertices as opposed to single ones, the main challenge is in proving the required XOR Lemma. On a high level, the situation is as follows: each of the side RS graphs contain (roughly) $\RS{(n)}$ 
induced matchings of size $\Theta(n)$, \emph{one} of which is imposed by the XOR gadgets. The identity of these  induced matchings is originally unknown to the algorithm. So we would like to say that a semi-streaming algorithm is not able to ``solve'' these
XOR gadgets in its first pass. The problem with applying the approaches of either of~\cite{AssadiN21,ChenKPS0Y21} (part $(iv)$) is that our underlying XOR problems are \emph{correlated} by the choice of the  induced matching they are imposed on. 
Concretely, while in the work of~\cite{AssadiN21,ChenKPS0Y21} one can prove lower bounds on the advantage of algorithms for any $1/\poly{(n)}$ (by modifying the constants), such a bound is simply not true in this setting; consider the algorithm that stores all edges of a random induced matching in the side RS graphs. With probability $\nicefrac1{\RS{(n)}} \geq \nicefrac1n$, such an algorithm has all the information to the underlying XOR gadgets and can solve them exactly!  

Consequently, there is no hope of following approaches of~\cite{AssadiN21,ChenKPS0Y21} that are oblivious to this challenge. Instead, we combine a simple direct-sum style argument using information theory 
with a Fourier analysis approach motivated by the classical work of~\cite{GavinskyKKRW07} in communication complexity (which has since been used extensively to prove streaming lower bounds following~\cite{VerbinY11}). In particular, we first show that with constant probability, the entropy of edges of the XOR gadgets is ``high'', and, conditioned on this event, we prove that XOR gadgets can hide their switches 
using a simple Fourier analysis approach, similar to that of~\cite{GavinskyKKRW07}. 

\subsection{Recent Related Work}\label{sec:rec-related}
Independently and concurrently to our work, Konrad and Naidu~\cite{KonradN21} also studied two-pass semi-streaming algorithms for bipartite matching. 
They observed that currently all known two-pass streaming algorithms for maximum matching only run the greedy algorithm for maximal matching in their first pass. 
 The goal of~\cite{KonradN21} was then to understand limitation of this particular family of algorithms. They proved that any two-pass semi-streaming algorithm 
that solely runs the greedy algorithm for maximal matching in its first pass, and then run an arbitrary semi-streaming algorithm in its second pass cannot achieve a better than $(2/3)$-approximation (the paper also presents another way of 
obtaining a $(2-\sqrt{2})$-approximation two-pass semi-streaming algorithm in addition to the work of~\cite{Konrad18}). 

The idea behind the proof of~\cite{KonradN21} is as follows. The authors work with the same hard instances of \cite{GoelKK12} that proved a $(2/3)$-approximation lower bound for (general) single-pass semi-streaming algorithms (discussed in part $(i)$ of~\Cref{sec:techniques}). They then ``feed'' a fixed perfect matching of the RS graph in this construction at the beginning of the stream to the greedy algorithm so that it does not pick any edges of the second part
of the stream. As a result, the identity of the special induced matching of the RS graph remains hidden even after the first pass of this particular algorithm; thus, at the beginning of the second pass, the algorithm still needs to solve the hard problem of~\cite{GoelKK12} which is not possible in a single pass. 

We shall emphasize that the instances generated by~\cite{KonradN21} are not hard for arbitrary two-pass semi-streaming algorithms: one can simply ignore the edges of the extra perfect matching at the beginning of the stream and instead pick the 
edges that identify the special induced matching of the RS graph at the end of the stream; the second pass of the algorithm can then be spent to pick the edges of the special induced matching. This results in an $O(n)$ space algorithm that finds a perfect matching of the input graph. As such, the approach of~\cite{KonradN21} is tailored to the special family of algorithms introduced earlier. In contrast, our lower bound in this paper works for \emph{all} two-pass semi-streaming algorithms. Moreover, in terms of techniques, while our work also builds on the single-pass lower bound of~\cite{GoelKK12}, the bulk of technical work in our paper involves ``hiding'' the special induced matching of these hard instances from the first pass of arbitrary semi-streaming algorithms (which is entirely bypassed in~\cite{KonradN21} when one focuses on greedy algorithm in the first pass). As such, technique-wise, our work and~\cite{KonradN21} are almost entirely disjoint.


\section{Preliminaries}

\paragraph{Notation.} For any integer $t \geq 1$, we use $[t] := \set{1,\ldots,t}$. For any $k$-sequence $X = (X_1,\ldots,X_k)$ and integer $i \in [k]$, we define $X^{<i} := (X_1,\ldots,X_{i-1})$, and $X^{-i} := (X_1,\ldots,X_{i-1},X_{i+1},\ldots,X_k)$. 

For a graph $G=(V,E)$, and vertices $U \subseteq V$, we use $G[U]$ to denote the induced subgraph of $G$ on $U$. For any vertex $v \in V$, and a matching $M$ in $G$, we use $M(v)$ to denote
the matched pair of $v$ ($M(v) = \perp$ if $v$ is unmatched by $M$). We denote bipartite graphs by $G=(L,R,E)$ to specify the bipartition into $L$ and $R$, and for any set $F \subseteq E$, use $L(F)$ and $R(F)$ to 
denote the endpoints of edges in $F$ in $L$ and $R$, respectively. Throughout the paper, by a \emph{$(2n)$-vertex bipartite graph}, we always mean a bipartite graph with $\card{L}=\card{R} = n$. We also use the following fact
about graphs. 

\begin{fact}\label{fact:vc-matching}
	In any graph $G$, size of any vertex cover of $G$ is at least as large as any matching in $G$. 
\end{fact}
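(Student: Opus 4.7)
The plan is to exhibit an injection from the edges of an arbitrary matching $M$ into the vertices of an arbitrary vertex cover $C$ of $G$, which immediately yields $\card{M} \leq \card{C}$ and hence the claim after taking this over all matchings and vertex covers.

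First, I would fix an arbitrary matching $M \subseteq E(G)$ and an arbitrary vertex cover $C \subseteq V(G)$. By definition of a vertex cover, every edge of $G$ has at least one endpoint in $C$; in particular, each $e \in M$ has some endpoint in $C$. This lets me define a map $\phi : M \to C$ by assigning to each $e \in M$ one of its endpoints that lies in $C$ (breaking ties arbitrarily when both endpoints belong to $C$).

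Next, I would verify that $\phi$ is injective. If $e, e' \in M$ are distinct, then because $M$ is a matching, $e$ and $e'$ share no endpoint; in particular $\phi(e) \neq \phi(e')$ since these are endpoints of $e$ and $e'$ respectively. The existence of an injection $M \hookrightarrow C$ implies $\card{M} \leq \card{C}$, which is exactly the statement of the fact.

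There is essentially no obstacle here: the proof is a one-line counting argument, and the only ``care'' needed is in formally defining $\phi$ when both endpoints of a matching edge happen to lie in $C$ (resolved by an arbitrary tie-breaking rule). No further ideas or structural assumptions on $G$ are required, and the argument is identical for bipartite and non-bipartite graphs, which is all that is needed for the fact's later invocations in the paper.
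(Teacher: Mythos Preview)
Your proof is correct and is the standard argument. The paper itself does not supply a proof of this fact; it is stated as a well-known observation without justification, so there is nothing to compare against.
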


We use `sans serif' letters to denote random variables (e.g., $\rA$) , and the corresponding normal letters to denote their values (e.g. $A$). To avoid the clutter in notation, 
in conditioning terms which involve assignments to random variables, we may directly use the value of the random variable (with the same letter), e.g., write $\rB \mid A$ instead of $\rB \mid \rA = A$. 

For random variables $\rA,\rB$, we use $\en{\rA}$ and $\mi{\rA}{\rB}$ to denote the Shannon entropy and mutual information, respectively. Moreover, for two distributions 
$\mu,\nu$, $\tvd{\mu}{\nu}$ denotes the total variation distance, and $\kl{\mu}{\nu}$ is the KL-divergence. A summary of basic information theory definitions and facts
that we use in our proofs appear in~\Cref{sec:info}. 

For a function $f : \set{0,1}^n \rightarrow \IR$, we use $\hf : 2^{[n]} \rightarrow \IR$ to denote the (discrete) Fourier transform of $f$. For any $S \subseteq [n]$, $\Xor_S : \set{0,1}^{n} \rightarrow \set{-1,+1}$ denotes
the character function on $S$. A summary of basic definitions and tools from Fourier analysis on Boolean hypercube that we use in our proofs appear in~\Cref{sec:fourier}. 



\subsection{Communication Complexity}\label{sec:cc}

We work with the two-party communication model of Yao~\cite{Yao79} (with some slightly non-standard aspects mentioned later on). See the excellent textbooks by Kushilevitz and Nisan~\cite{KushilevitzN97} and Rao and Yehudayoff~\cite{RaoY20} for 
an overview of communication complexity. 

Let $P: \mathcal{X} \times \mathcal{Y} \rightarrow \mathcal{Z}$ be a relation.  Alice receives an input $X\in \mathcal{X}$ and Bob receives $Y \in \mathcal{Y}$, where $(X,Y)$ are chosen from a
distribution $\dist$ over $\mathcal{X} \times \mathcal{Y}$. We allow players to have access to both public and private randomness. 
They communicate with each other by exchanging messages according to some \emph{protocol} $\prot$. Each message in $\prot$
depends only on the private input and random bits of the player sending the message, the already communicated messages, and the public randomness. 
At the end, one of the players outputs an answer  $Z$ such that $Z \in P(X,Y)$. For any protocol $\prot$, we use $\Prot := \Prot(X,Y)$ to denote the messages \emph{and} the public randomness used 
by $\prot$ on the input $(X,Y)$.


\subsection{Bipartite \rs Graphs}\label{sec:rs}

Let $G=(V,E)$ be an undirected graph, and $M \subseteq E$ be a matching in $G$. We say that $M$ is an \emph{induced matching} iff the subgraph of $G$ induced on the vertices of $M$ is the matching $M$ itself; in other words, 
there are no other edges between the vertices of this matching. 

\begin{definition}[\textbf{Bipartite \rs Graphs}]\label{def:rs}
For integers $r,t \geq 1$, a bipartite graph $\GRS=(L,R,E)$ is called an \emph{$(r,t)$-\rs graph} (RS graph for short) iff its edge-set $E$ can be partitioned into $t$ \underline{induced} matchings $\MRS_1,\ldots, \MRS_t$, each of size $r$.
\end{definition}

RS graphs have been extensively studied as they arise naturally in property testing, PCP constructions, additive combinatorics, streaming algorithms, graph sparsification, etc. (see, e.g.,~\cite{BirkLM93,HastadW03,FischerLNRRS02,Alon02,TaoV06,AlonS06,AlonMS12,GoelKK12,FoxHS15,AssadiB19,KapralovKTY21}).  In particular, a line of work 
initiated by Goel, Kapralov, and Khanna~\cite{GoelKK12} have used different constructions of these graphs to prove communication complexity lower bounds for graph streaming algorithms~\cite{GoelKK12,Kapralov13,Konrad15,AssadiKLY16,AssadiKL17,CormodeDK19,AssadiR20,Kapralov21,AssadiB21,ChenKPSSY21}. 

\paragraph{A Remark on Bipartite vs Non-Bipartite RS Graphs.} In this work, we focus on bipartite RS graphs, while many constructions in the literature are non-bipartite RS graphs. However, any $(r,t)$-RS non-bipartite graph $G$ on $n$ vertices can be turned into a $(2r,t)$-RS bipartite graph on $2n$ vertices 
by simply taking the bipartite double cover of $G$. As such, throughout this paper, by RS graphs, we always mean \emph{bipartite} RS graphs.


\section{A New Communication Game: Hidden-Matching} \label{sec:communication-game}
We introduce the main communication game we study in this paper in this section. We start by presenting basic constructs we need to setup our communication game, and then present the  game itself together with its underlying hard distribution. 

\subsection{Encoded-RS Graphs}\label{sec:encoded-rs}
 
 We define a simple way of encoding an $(r \times t)$-dimensional matrix inside any arbitrary $(r,t)$-RS graph, to obtain another RS graph with certain properties needed for our proofs. 

\begin{definition}[\textbf{Encoded-RS Graph}]\label{def:encoded-rs}
	Let $\GRS=(L,R,\ERS)$ be an $(r,t)$-RS graph with induced matchings $\MRS_1,\ldots,\MRS_t$ and $X \in \set{0,1}^{r \times t}$ be an  $(r \times t)$-dimensional binary matrix. We define the \textbf{encoded-RS} graph of $G$ and $X$, denoted by $H:= \encodedrs(\GRS,X)$, as the following graph: 
	\begin{itemize}[leftmargin=10pt]
		\item For any vertex $v \in V$, we create two vertices $a_v,b_v$ in $H$. We refer to $a_v,b_v$ as \textbf{representatives} of $v$ and denote them together by $\rep{v} := \set{a_v,b_v}$. 
		\item For any induced matching $\MRS_j$ and any edge $e_i = (u_i,v_i)$ of $\MRS_j$: 
		\begin{enumerate}[label=$(\roman*)$]
		\item if $X_{i,j} = 0$, we add two edges $(a_{u_i},a_{v_i})$ and $(b_{u_i},b_{v_i})$ to $H$; 
		\item if $X_{i,j} = 1$, we add two edges $(a_{u_i},b_{v_i})$ and $(b_{u_i},a_{v_i})$ to $H$. 
		\end{enumerate}
		We refer to the new matching in $H$ obtained from edges $\MRS_j$ as the \textbf{representative} of $\MRS_j$, and denote it by $\rep{\MRS_j}$. 
	\end{itemize}
\end{definition}
\noindent
\Cref{fig:encoded-rs} below gives an illustration. 
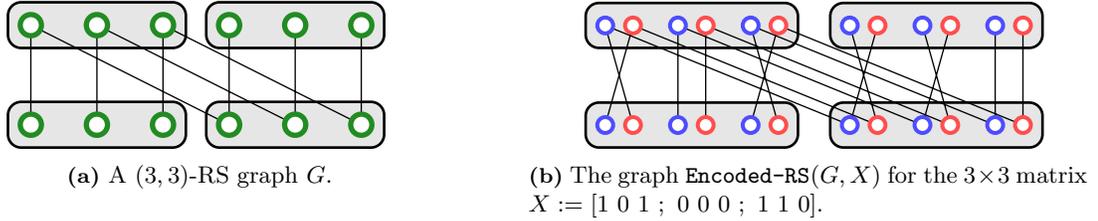
\begin{figure}[h!]
\centering
\subcaptionbox{\footnotesize A $(3,3)$-RS graph $G$.}%
  [.45\linewidth]{

\begin{tikzpicture}

\tikzset{layer/.style={rectangle, rounded corners=5pt, draw, black, line width=1pt,  fill=black!10, inner sep=4pt}}
\tikzset{vertex/.style={circle, ForestGreen, fill=white, line width=2pt, draw, minimum width=8pt, minimum height=8pt, inner sep=0pt}}
\tikzset{choose/.style={rectangle, line width=1pt, rounded corners = 5pt, draw, minimum height=5pt, fill=black!10}}
		
	\node(V1) {};
	\node[vertex] (a1) [left=-15pt of V1]{};
	\node[vertex] (a2) [right=15pt of a1]{};
	\node[vertex] (a3) [right=15pt of a2]{};
	\node[vertex] (a4) [right=15pt of a3]{};
	\node[vertex] (a5) [right=15pt of a4]{};
	\node[vertex] (a6) [right=15pt of a5]{};
		
	\node (V2) [below = 30pt of V1]{};
	\node[vertex] (b1) [left=-15pt of V2]{};
	\node[vertex] (b2) [right=15pt of b1]{};
	\node[vertex] (b3) [right=15pt of b2]{};
	\node[vertex] (b4) [right=15pt of b3]{};
	\node[vertex] (b5) [right=15pt of b4]{};
	\node[vertex] (b6) [right=15pt of b5]{};
	
	\node[choose,  fit=(a1) (a3)]{};
	\node[choose, fit=(a4) (a6)]{};
	\node[choose, fit=(b1) (b3)]{};
	\node[choose, fit=(b4) (b6)]{};
	
	\draw[line width=0.5pt, black]	
	(a1.center) -- (b1.center)
	(a2.center) -- (b2.center)
	(a3.center) -- (b3.center)
	(a4.center) -- (b4.center)
	(a5.center) -- (b5.center)
	(a6.center) -- (b6.center)
	(a1.center) -- (b4.center)
	(a2.center) -- (b5.center)
	(a3.center) -- (b6.center);	
	
	\node[vertex] (a1) [left=-15pt of V1]{};
	\node[vertex] (a2) [right=15pt of a1]{};
	\node[vertex] (a3) [right=15pt of a2]{};
	\node[vertex] (a4) [right=15pt of a3]{};
	\node[vertex] (a5) [right=15pt of a4]{};
	\node[vertex] (a6) [right=15pt of a5]{};
	
	\node[vertex] (b1) [left=-15pt of V2]{};
	\node[vertex] (b2) [right=15pt of b1]{};
	\node[vertex] (b3) [right=15pt of b2]{};
	\node[vertex] (b4) [right=15pt of b3]{};
	\node[vertex] (b5) [right=15pt of b4]{};
	\node[vertex] (b6) [right=15pt of b5]{};

\end{tikzpicture}
  } 
  \hspace{0.4cm} 
  \subcaptionbox{\footnotesize The graph $\encodedrs(G,X)$ for the $3 \times 3$ matrix $X:= 
 [
  1 ~ 0 ~ 1 ~ ; ~ 0 ~ 0 ~ 0 ~ ;  ~ 1 ~ 1 ~ 0
  ]$.}%
  [.45\linewidth]{

\begin{tikzpicture}

\tikzset{layer/.style={rectangle, rounded corners=5pt, draw, black, line width=1pt,  fill=black!10, inner sep=4pt}}
\tikzset{vertex/.style={circle, ForestGreen, fill=white, line width=1.5pt, draw, minimum width=6pt, minimum height=6pt, inner sep=0pt}}
\tikzset{vertex1/.style={circle, blue!70, fill=white, line width=1.5pt, draw, minimum width=6pt, minimum height=6pt, inner sep=0pt}}
\tikzset{vertex2/.style={circle, red!70, fill=white, line width=1.5pt, draw, minimum width=6pt, minimum height=6pt, inner sep=0pt}}

\tikzset{choose/.style={rectangle, line width=1pt, rounded corners = 5pt, draw, fill=black!10, minimum height=17pt}}
		
	\node(V1) {};
	\node[vertex1] (a1) [left=-20pt of V1]{};
	\node[vertex1] (a2) [right=20pt of a1]{};
	\node[vertex1] (a3) [right=20pt of a2]{};
	\node[vertex1] (a4) [right=30pt of a3]{};
	\node[vertex1] (a5) [right=20pt of a4]{};
	\node[vertex1] (a6) [right=20pt of a5]{};
	
	\node[vertex2] (c1) [right=3pt of a1]{};
	\node[vertex2] (c2) [right=20pt of c1]{};
	\node[vertex2] (c3) [right=20pt of c2]{};
	\node[vertex2] (c4) [right=30pt of c3]{};
	\node[vertex2] (c5) [right=20pt of c4]{};
	\node[vertex2] (c6) [right=20pt of c5]{};
		
	\node (V2) [below = 30pt of V1]{};
	\node[vertex1] (b1) [left=-20pt of V2]{};
	\node[vertex1] (b2) [right=20pt of b1]{};
	\node[vertex1] (b3) [right=20pt of b2]{};
	\node[vertex1] (b4) [right=30pt of b3]{};
	\node[vertex1] (b5) [right=20pt of b4]{};
	\node[vertex1] (b6) [right=20pt of b5]{};
	
	\node[vertex2] (d1) [right=3pt of b1]{};
	\node[vertex2] (d2) [right=20pt of d1]{};
	\node[vertex2] (d3) [right=20pt of d2]{};
	\node[vertex2] (d4) [right=30pt of d3]{};
	\node[vertex2] (d5) [right=20pt of d4]{};
	\node[vertex2] (d6) [right=20pt of d5]{};
	
	\node[choose, fit=(a1) (c3)]{};
	\node[choose, fit=(a4) (c6)]{};
	\node[choose, fit=(b1) (d3)]{};
	\node[choose, fit=(b4) (d6)]{};

	\draw[line width=0.5pt, black]	
	(a1.center) -- (d1.center)
	(c1.center) -- (b1.center)
	
	(a2.center) -- (b2.center)
	(c2.center) -- (d2.center)
	
	(a3.center) -- (d3.center)
	(c3.center) -- (b3.center)
	
	(a4.center) -- (d4.center)
	(c4.center) -- (b4.center)
	
	(a5.center) -- (d5.center)
	(c5.center) -- (b5.center)
	
	(a6.center) -- (b6.center)
	(c6.center) -- (d6.center)
	
	(a1.center) -- (b4.center)
	(c1.center) -- (d4.center)
	
	(a2.center) -- (b5.center)
	(c2.center) -- (d5.center)
	
	(a3.center) -- (b6.center)
	(c3.center) -- (d6.center);	
	
	\node[vertex1] (a1) [left=-20pt of V1]{};
	\node[vertex1] (a2) [right=20pt of a1]{};
	\node[vertex1] (a3) [right=20pt of a2]{};
	\node[vertex1] (a4) [right=30pt of a3]{};
	\node[vertex1] (a5) [right=20pt of a4]{};
	\node[vertex1] (a6) [right=20pt of a5]{};
	
	\node[vertex2] (c1) [right=3pt of a1]{};
	\node[vertex2] (c2) [right=20pt of c1]{};
	\node[vertex2] (c3) [right=20pt of c2]{};
	\node[vertex2] (c4) [right=30pt of c3]{};
	\node[vertex2] (c5) [right=20pt of c4]{};
	\node[vertex2] (c6) [right=20pt of c5]{};
	
	\node[vertex1] (b1) [left=-20pt of V2]{};
	\node[vertex1] (b2) [right=20pt of b1]{};
	\node[vertex1] (b3) [right=20pt of b2]{};
	\node[vertex1] (b4) [right=30pt of b3]{};
	\node[vertex1] (b5) [right=20pt of b4]{};
	\node[vertex1] (b6) [right=20pt of b5]{};
	
	\node[vertex2] (d1) [right=3pt of b1]{};
	\node[vertex2] (d2) [right=20pt of d1]{};
	\node[vertex2] (d3) [right=20pt of d2]{};
	\node[vertex2] (d4) [right=30pt of d3]{};
	\node[vertex2] (d5) [right=20pt of d4]{};
	\node[vertex2] (d6) [right=20pt of d5]{};

\end{tikzpicture}
  }
\caption{An illustration of RS graphs and encoded-RS graphs.}
\label{fig:encoded-rs}
\end{figure}


\begin{observation}\label{obs:encoded-rs}
	For any $(2n)$-vertex $(r,t)$-RS graph $\GRS$ and $(r \times t)$-dimensional  matrix $X$, the graph $H:= \encodedrs(\GRS,X)$ is a $(4n)$-vertex $(2r,t)$-RS graph. 
\end{observation}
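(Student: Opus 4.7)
The plan is to verify each clause of the definition of an RS graph (\Cref{def:rs}) directly from the construction in \Cref{def:encoded-rs}. This is a routine check; the only part that requires any actual argument is the \emph{induced} matching property.

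First, I would count vertices: since $\GRS$ has $2n$ vertices and each vertex $v$ produces exactly two representatives $a_v, b_v$ in $H$, we get $\card{V(H)} = 4n$. Bipartiteness follows at once from bipartiteness of $\GRS=(L,R,\ERS)$: set $L(H) := \set{a_u, b_u : u \in L}$ and $R(H) := \set{a_v, b_v : v \in R}$, and observe that both cases $(i)$ and $(ii)$ of \Cref{def:encoded-rs} create edges only between a representative of an $L$-vertex and a representative of an $R$-vertex. Next, for each induced matching $\MRS_j$ of $\GRS$, I would check that $\rep{\MRS_j}$ is a matching of size $2r$ in $H$: each edge $e_i = (u_i,v_i) \in \MRS_j$ contributes two edges, and in both cases ($X_{i,j}=0$ or $1$) these two edges use each of $a_{u_i}, b_{u_i}, a_{v_i}, b_{v_i}$ exactly once; since $\MRS_j$ has $r$ pairwise vertex-disjoint edges in $\GRS$, their representatives are vertex-disjoint in $H$, giving a matching of $2r$ edges.

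The main step is the induced property. Consider any two distinct edges of $H$ with both endpoints in the vertex set of $\rep{\MRS_j}$. Any edge of $H$ is obtained from some edge $e=(u,v)$ of $\GRS$, and both its endpoints in $H$ lie in $\rep{u} \cup \rep{v}$. So if both endpoints of an $H$-edge lie in $V(\rep{\MRS_j}) = \bigcup_{u \in V(\MRS_j)} \rep{u}$, the underlying $\GRS$-edge $e$ has both endpoints in $V(\MRS_j)$. Since $\MRS_j$ is an \emph{induced} matching in $\GRS$, the only edges of $\GRS$ with both endpoints in $V(\MRS_j)$ are the edges of $\MRS_j$ itself. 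Hence every $H$-edge inside $V(\rep{\MRS_j})$ already belongs to $\rep{\MRS_j}$, proving it is induced.

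Finally, since $\ERS$ is partitioned into $\MRS_1,\ldots,\MRS_t$ and the construction turns each $\GRS$-edge into two $H$-edges living in the corresponding $\rep{\MRS_j}$, the sets $\rep{\MRS_1},\ldots,\rep{\MRS_t}$ partition $E(H)$. Combining these four points (vertex count, bipartiteness, size-$2r$ induced matchings, partition of the edge set), $H$ meets \Cref{def:rs} as a $(4n)$-vertex $(2r,t)$-RS graph. I do not expect any genuine obstacle here; the whole argument is a book-keeping check, with the ``induced'' clause being the only one that actually uses a property of $\GRS$ beyond the basic construction.
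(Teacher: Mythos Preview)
Your proof is correct and follows essentially the same approach as the paper. The paper's proof is terser—it only spells out the induced matching property (via the same ``any violating edge in $H$ would project to a violating edge in $\GRS$'' argument you give) and leaves the vertex count, bipartiteness, matching size, and edge-partition checks implicit—but the substance is identical.
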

\begin{proof}
	For any induced matching $\MRS$ in $\GRS$, the matching $M = \rep{\MRS}$ has size $2r$ in $H$. Moreover, $M$ is induced as each edge $(u,v) \in \MRS$ is translated to a perfect matching between $\rep{u}$ and $\rep{v}$ in $H$; thus any edge
	violating the induced property of $M$ in $H$ would correspond to an edge violating induced property of $\MRS$ in $\GRS$ as well, which is not possible. 
\end{proof}

\paragraph{Augmenting edges and paths.} A key definition in encoded-RS graphs is the following. 

\begin{definition}[\textbf{Augmenting Edges/Path}]\label{def:augmenting-edges}
	Consider any $(r,t)$-RS graph $\GRS=(L,R,\ERS)$, $(r \times t)$ binary matrix $X$, index $j \in [t]$, and a sequence $\vec{u} = (u_1,\ldots,u_k)$ of $k \geq 2$ distinct vertices in $L(\MRS_j)$. Let $H=\encodedrs(G,X)$ and $v_i := \MRS_j(u_i) \in R(\MRS_j)$ for all $i \in [k]$. 
	
	\noindent
	We define the \textbf{augmenting edges}, denoted by $AE := \augmentedge(\GRS,X,j,\vec{u})$, as the following edges on vertices of $H$ (note that these edges do \underline{not} belong to $H$):
	\begin{itemize}
		\item For any $i \in [k-1]$, add the edges $(a_{v_i},a_{u_{i+1}})$ and $(b_{v_i},b_{u_{i+1}})$  to $AE$ (by~\Cref{obs:encoded-rs}, these edges do not belong to $H$).
	\end{itemize}
	
	\noindent
	We define an \textbf{augmenting path}, denoted by $AP := \augmentpath(\GRS,X,j,\vec{u})$, as the following path on vertices of $H$ (consisting of edges from $H$ plus augmenting edges): 
	\begin{itemize}
		\item There is a \underline{unique} path from $a_{u_1}$ to either $a_{v_k}$ or $b_{v_k}$ by alternatively following the edges of $R(\MRS)$ in $H$ and augmenting edges in $AE$. 
		We denote this path by $AP(\vec{u})$ and refer to it as an augmenting path. We further use $\start{AP(\vec{u})} = a_{u_1}$ and $\eend{AP(\vec{u})} \in \set{a_{v_k},b_{v_k}}$ to denote the start and end vertex of the path.
	\end{itemize}

\end{definition}
\noindent
\Cref{fig:aug-edge} below gives an illustration. 
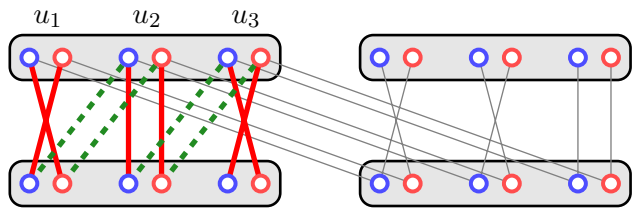
\begin{figure}[h!]
\centering

\begin{tikzpicture}

\tikzset{layer/.style={rectangle, rounded corners=5pt, draw, black, line width=1pt,  fill=black!10, inner sep=4pt}}
\tikzset{vertex/.style={circle, ForestGreen, fill=white, line width=1.5pt, draw, minimum width=6pt, minimum height=6pt, inner sep=0pt}}
\tikzset{vertex1/.style={circle, blue!70, fill=white, line width=1.5pt, draw, minimum width=6pt, minimum height=6pt, inner sep=0pt}}
\tikzset{vertex2/.style={circle, red!70, fill=white, line width=1.5pt, draw, minimum width=6pt, minimum height=6pt, inner sep=0pt}}

\tikzset{choose/.style={rectangle, line width=1pt, rounded corners = 5pt, draw, fill=black!10, minimum height=17pt}}
		
	\node(V1) {};
	\node[vertex1] (a1) [left=-30pt of V1]{};
	\node[vertex1] (a2) [right=30pt of a1]{};
	\node[vertex1] (a3) [right=30pt of a2]{};
	\node[vertex1] (a4) [right=50pt of a3]{};
	\node[vertex1] (a5) [right=30pt of a4]{};
	\node[vertex1] (a6) [right=30pt of a5]{};
	
	\node [above right=5pt and -5pt of a1] {$u_1$};
	\node [above right=5pt and -5pt of a2] {$u_2$};
	\node [above right=5pt and -5pt of a3] {$u_3$};
	
	\node[vertex2] (c1) [right=5pt of a1]{};
	\node[vertex2] (c2) [right=30pt of c1]{};
	\node[vertex2] (c3) [right=30pt of c2]{};
	\node[vertex2] (c4) [right=50pt of c3]{};
	\node[vertex2] (c5) [right=30pt of c4]{};
	\node[vertex2] (c6) [right=30pt of c5]{};
		
	\node (V2) [below = 40pt of V1]{};
	\node[vertex1] (b1) [left=-30pt of V2]{};
	\node[vertex1] (b2) [right=30pt of b1]{};
	\node[vertex1] (b3) [right=30pt of b2]{};
	\node[vertex1] (b4) [right=50pt of b3]{};
	\node[vertex1] (b5) [right=30pt of b4]{};
	\node[vertex1] (b6) [right=30pt of b5]{};
	
	\node[vertex2] (d1) [right=5pt of b1]{};
	\node[vertex2] (d2) [right=30pt of d1]{};
	\node[vertex2] (d3) [right=30pt of d2]{};
	\node[vertex2] (d4) [right=50pt of d3]{};
	\node[vertex2] (d5) [right=30pt of d4]{};
	\node[vertex2] (d6) [right=30pt of d5]{};
	
	\node[choose, fit=(a1) (c3)]{};
	\node[choose, fit=(a4) (c6)]{};
	\node[choose, fit=(b1) (d3)]{};
	\node[choose, fit=(b4) (d6)]{};

	\draw[line width=0.5pt, black!50]	
	
	(a4.center) -- (d4.center)
	(c4.center) -- (b4.center)
	
	(a5.center) -- (d5.center)
	(c5.center) -- (b5.center)
	
	(a6.center) -- (b6.center)
	(c6.center) -- (d6.center)
	
	(a1.center) -- (b4.center)
	(c1.center) -- (d4.center)
	
	(a2.center) -- (b5.center)
	(c2.center) -- (d5.center)
	
	(a3.center) -- (b6.center)
	(c3.center) -- (d6.center);	
	
	\draw[line width=2pt, red]	
	(a1.center) -- (d1.center)
	(c1.center) -- (b1.center)
	
	(a2.center) -- (b2.center)
	(c2.center) -- (d2.center)
	
	(a3.center) -- (d3.center)
	(c3.center) -- (b3.center);

\draw[line width=2pt, ForestGreen, dashed]	
	(b1.center) -- (a2.center)
	(d1.center) -- (c2.center)
	
	(b2.center) -- (a3.center)
	(d2.center) -- (c3.center);
	
	\node(V1) {};
	\node[vertex1] (a1) [left=-30pt of V1]{};
	\node[vertex1] (a2) [right=30pt of a1]{};
	\node[vertex1] (a3) [right=30pt of a2]{};
	\node[vertex1] (a4) [right=50pt of a3]{};
	\node[vertex1] (a5) [right=30pt of a4]{};
	\node[vertex1] (a6) [right=30pt of a5]{};
	
	\node[vertex2] (c1) [right=5pt of a1]{};
	\node[vertex2] (c2) [right=30pt of c1]{};
	\node[vertex2] (c3) [right=30pt of c2]{};
	\node[vertex2] (c4) [right=50pt of c3]{};
	\node[vertex2] (c5) [right=30pt of c4]{};
	\node[vertex2] (c6) [right=30pt of c5]{};
		
	\node (V2) [below = 40pt of V1]{};
	\node[vertex1] (b1) [left=-30pt of V2]{};
	\node[vertex1] (b2) [right=30pt of b1]{};
	\node[vertex1] (b3) [right=30pt of b2]{};
	\node[vertex1] (b4) [right=50pt of b3]{};
	\node[vertex1] (b5) [right=30pt of b4]{};
	\node[vertex1] (b6) [right=30pt of b5]{};
	
	\node[vertex2] (d1) [right=5pt of b1]{};
	\node[vertex2] (d2) [right=30pt of d1]{};
	\node[vertex2] (d3) [right=30pt of d2]{};
	\node[vertex2] (d4) [right=50pt of d3]{};
	\node[vertex2] (d5) [right=30pt of d4]{};
	\node[vertex2] (d6) [right=30pt of d5]{};

\end{tikzpicture}
\caption{An illustration of augmenting edges and paths. Here, $\MRS$ is the left induced matching, thick (red) edges denote the representative matching of $\MRS$ in $H$, and dashed (green) edges are augmenting edges. An augmenting path
here starts from a top vertex corresponding to $u_1$ and follows thick (red) and dashed (green) edges alternatively to end up at a unique bottom vertex.}
\label{fig:aug-edge}
\end{figure}

The following observation  summarizes the main property of encoded-RS graphs and augmenting paths that we use in our proofs. 

\begin{observation}\label{obs:aug-path}
	Consider augmenting paths $AP := \augmentpath(\GRS,X,j,\vec{u})$. Then, 
	\begin{enumerate}[label=$(\roman*)$]
	\item if $X_{i_1,j} \oplus \cdots \oplus X_{i_k,j} = 0$,  we have $\eend{AP(\vec{u})}={a_{v_{i_k}}}$; 
	\item  if $X_{i_1,j} \oplus \cdots \oplus X_{i_k,j} = 1$, we have $\eend{AP(\vec{u})}={b_{v_{i_k}}}$. 
	\end{enumerate}
\end{observation}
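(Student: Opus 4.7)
The plan is to prove both cases uniformly by tracking a binary ``side'' label along the augmenting path. For each vertex of the form $a_w$ or $b_w$ appearing in $H = \encodedrs(\GRS,X)$, assign the label $0$ to $a_w$ and $1$ to $b_w$. I will show by induction on $\ell \in [k]$ that the path, starting at $\start{AP(\vec{u})} = a_{u_1}$ (label $0$), reaches a vertex in $\rep{v_{i_\ell}}$ whose side label equals $X_{i_1,j} \oplus \cdots \oplus X_{i_\ell,j}$. Applying this with $\ell = k$ will give both parts of the statement.

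The key step is the analysis of a single ``unit'' of the path, consisting of one edge of $\rep{\MRS_j}$ followed by one augmenting edge (or, for the final unit, just the edge of $\rep{\MRS_j}$). First I would observe, straight from~\Cref{def:encoded-rs}, that an edge of $\rep{\MRS_j}$ corresponding to $e_{i_\ell} = (u_{i_\ell}, v_{i_\ell})$ preserves the side label when $X_{i_\ell, j} = 0$ (it goes $a_{u_{i_\ell}} \!\to\! a_{v_{i_\ell}}$ or $b_{u_{i_\ell}} \!\to\! b_{v_{i_\ell}}$) and flips it when $X_{i_\ell, j} = 1$ (it goes $a_{u_{i_\ell}} \!\to\! b_{v_{i_\ell}}$ or $b_{u_{i_\ell}} \!\to\! a_{v_{i_\ell}}$). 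Next, from~\Cref{def:augmenting-edges}, every augmenting edge has the form $(a_{v_{i_\ell}}, a_{u_{i_{\ell+1}}})$ or $(b_{v_{i_\ell}}, b_{u_{i_{\ell+1}}})$, so it preserves the side label. Composing these two operations, one unit of the path adds exactly $X_{i_\ell, j}$ (mod $2$) to the current side label, which is the induction step. At the same time this shows the path is well-defined as claimed in~\Cref{def:augmenting-edges}: at every vertex in $\rep{v_{i_\ell}}$ the next edge (either an augmenting edge or the terminating edge) is uniquely determined by the current side, and the next vertex lies on the opposite side of the bipartition, so no vertex is ever revisited.

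Telescoping the per-step parity change from $\ell=1$ to $\ell=k$, the final side label at $\eend{AP(\vec{u})}$ is $X_{i_1,j} \oplus \cdots \oplus X_{i_k,j}$, which gives $a_{v_{i_k}}$ in case (i) and $b_{v_{i_k}}$ in case (ii). I do not expect any real obstacle here: the only thing to be careful about is bookkeeping the two ``types'' of edges at each step and verifying that the $a/b$-labeling is preserved by augmenting edges but toggled by encoded edges exactly when $X_{i_\ell, j} = 1$; everything else is a direct unfolding of~\Cref{def:encoded-rs} and~\Cref{def:augmenting-edges}.
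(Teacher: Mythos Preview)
Your proof is correct and follows essentially the same approach as the paper's own proof: both arguments track whether the current vertex is an $a$-vertex or a $b$-vertex, observe from~\Cref{def:encoded-rs} that an encoded-RS edge flips this side exactly when $X_{i_\ell,j}=1$ and from~\Cref{def:augmenting-edges} that an augmenting edge preserves it, and then proceed inductively along the path. The paper's proof is terser (it spells out the first two steps and then appeals to induction), while you have formalized the same argument by introducing the explicit $0/1$ side label and stating the inductive invariant; otherwise they are the same proof.
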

\begin{proof}
	Consider $AP(\vec{u})$ which starts at $a_{u_{i_1}}$. The next vertex on this path is $a_{v_{i_1}}$ if $X_{i_1,j} = 0$ and $b_{v_{i_1}}$ if $X_{i_1,j} = 1$ (this is by construction of encoded-RS graphs). The vertex after that is $a_{u_{i_2}}$ if we were at $a_{v_{i_1}}$, and $b_{u_{i_2}}$ if we were instead at $b_{v_{i_1}}$ (this is 
	by construction of augmenting paths). Continuing this inductively until the last vertex implies the observation. 
\end{proof}

\subsection{Augmentation Graphs}\label{sec:augmentation-graphs} 

We now define a new construction that builds on top of encoded-RS graphs. We first need a quick notation. For any set $W \subseteq V$ of vertices, we say a collection $\UU$ of $k$-sequences on $W$ is \emph{vertex-disjoint} if it consists of $k$-sequences $\vec{u}_i = (u_{i,1},\ldots,u_{i,k})$ such that the vertices used 
across these all sequences are distinct.

\begin{definition}[\textbf{Augmentation Graph/Vertices}]\label{def:aug-graph}
	For any $(r,t)$-RS graph $\GRS=(L,R,\ERS)$, $(r \times t)$ binary matrix $X$, index $j \in [t]$, and a collection $\UU$ of vertex-disjoint $k$-sequences on $L(\MRS_j)$, we define the \textbf{augmentation graph}, denoted by $A := \augmentgraph(\GRS,X,j,\UU)$ as follows: 
	\begin{itemize}
		\item $A$ is a graph on vertices of $H=\encodedrs(\GRS,X)$ plus two new sets of vertices $P,Q$;
		\item $A$ consists of all augmenting edges $AE_i = \augmentedge(G,X,j,\vec{u}_i)$ for $\vec{u}_i \in \UU$, plus a perfect matching between $P$ and vertices of $H$ \emph{not} matched by $\rep{\MRS_j}$, as well as a perfect matching between $Q$ and vertices of $\start{AP(\vec{u}_i)}$ for $\vec{u}_i$ in $\UU$. We use $\bH$ to denote this set of edges. 
	\end{itemize}
	Finally, we define the following vertices in $A$, referred to collectively as \textbf{augmentation vertices}: 
	\begin{align*}
		\aug{A} := \set{a_{v_{i,k}} \mid a_{v_{i,k}} = \eend{AP(\vec{u}_i)}}; \quad \baug{A} := \set{a_{v_{i,k}} \mid a_{v_{i,k}} \neq \eend{AP(\vec{u}_i)} }.
	\end{align*}
	(in words, $\aug{A}$ are end vertices of augmenting paths that are $a$-vertices, and $\baug{A}$ are those $a$-vertices whose corresponding augmenting paths end in a $b$-vertex instead). 
\end{definition}

\noindent
\Cref{fig:aug-graph} below gives an illustration. 
\begin{figure}[h!]
\centering

\begin{tikzpicture}

\tikzset{layer/.style={rectangle, rounded corners=5pt, draw, black, line width=1pt,  fill=black!10, inner sep=4pt}}
\tikzset{vertex/.style={circle, ForestGreen, fill=white, line width=1.5pt, draw, minimum width=6pt, minimum height=6pt, inner sep=0pt}}
\tikzset{vertex1/.style={circle, blue!70, fill=white, line width=1.5pt, draw, minimum width=6pt, minimum height=6pt, inner sep=0pt}}
\tikzset{vertex2/.style={circle, red!70, fill=white, line width=1.5pt, draw, minimum width=6pt, minimum height=6pt, inner sep=0pt}}
\tikzset{vertex3/.style={circle, ForestGreen, fill=white, line width=1.5pt, draw, minimum width=6pt, minimum height=6pt, inner sep=0pt}}

\tikzset{choose/.style={rectangle, line width=1pt, rounded corners = 5pt, draw, fill=black!10, minimum height=17pt}}
		
\tikzset{group/.style={rectangle, black, fill=white, draw, minimum width=5pt, minimum height=15pt, inner sep=0pt}}
		
\node(V1){};
\node[vertex1](r11) [left=-30pt of V1]{};
\node[vertex2](b11) [right=1.5pt of r11]{};

\foreach \j in {2,...,6}
{
	\pgfmathtruncatemacro{\jp}{\j-1};
	\node[vertex1] (r1\j) [right=3pt of b1\jp]{};
	\node[vertex2] (b1\j) [right=1.5pt of r1\j]{};
}
\begin{scope}[on background layer]
\node[choose, fit=(r11) (b16)] (U1) {};
\end{scope}

\node[vertex1](r21) [right=20pt of b16]{};
\node[vertex2](b21) [right=1.5pt of r21]{};

\foreach \j in {2,...,6}
{
	\pgfmathtruncatemacro{\jp}{\j-1};
	\node[vertex1] (r2\j) [right=3pt of b2\jp]{};
	\node[vertex2] (b2\j) [right=1.5pt of r2\j]{};
}
\begin{scope}[on background layer]
\node[choose, fit=(r21) (b26)] (U2) {};
\end{scope}

\node[vertex1](r31) [below=20pt of r11]{};
\node[vertex2](b31) [right=1.5pt of r31]{};

\foreach \j in {2,...,6}
{
	\pgfmathtruncatemacro{\jp}{\j-1};
	\node[vertex1] (r3\j) [right=3pt of b3\jp]{};
	\node[vertex2] (b3\j) [right=1.5pt of r3\j]{};
}
\begin{scope}[on background layer]
\node[choose, fit=(r31) (b36)] (U3) {};
\end{scope}

\node[vertex1](r41) [right=20pt of b36]{};
\node[vertex2](b41) [right=1.5pt of r41]{};

\foreach \j in {2,...,6}
{
	\pgfmathtruncatemacro{\jp}{\j-1};
	\node[vertex1] (r4\j) [right=3pt of b4\jp]{};
	\node[vertex2] (b4\j) [right=1.5pt of r4\j]{};
}
\begin{scope}[on background layer]
\node[choose, fit=(r41) (b46)](U4) {};
\end{scope}

\node[rectangle, draw, dashed, ForestGreen, line width=1pt, inner sep=1pt, fit=(r11) (b11)] {};
\node[rectangle, draw, dashed, ForestGreen, line width=1pt, inner sep=1pt, fit=(r13) (b13)] {};

\begin{scope}[on background layer]
\draw[line width=2pt, ForestGreen, dashed]	
	(r31.center) -- (r12.center)
	(b31.center) -- (b12.center)
	
	(r33.center) -- (r14.center)
	(b33.center) -- (b14.center);
\end{scope}

\node[vertex3] (q1) [above=20pt of r11]{};
\node[vertex3] (q2) [above=20pt of r13]{};

\node[vertex3](x21) [above=20pt of r21]{};
\node[vertex3](y21) [right=1.5pt of x21]{};

\foreach \j in {2,...,6}
{
	\pgfmathtruncatemacro{\jp}{\j-1};
	\node[vertex3] (x2\j) [right=3pt of y2\jp]{};
	\node[vertex3] (y2\j) [right=1.5pt of x2\j]{};
}

\node[vertex3](x41) [below=20pt of r41]{};
\node[vertex3](y41) [right=1.5pt of x41]{};

\foreach \j in {2,...,6}
{
	\pgfmathtruncatemacro{\jp}{\j-1};
	\node[vertex3] (x4\j) [right=3pt of y4\jp]{};
	\node[vertex3] (y4\j) [right=1.5pt of x4\j]{};
}

\foreach \i in {2,4}
{
\foreach \j in {1,...,6}
{
\begin{scope}[on background layer]
\draw[line width=1pt, ForestGreen]	
	(x\i\j) -- (r\i\j)
	(y\i\j) -- (b\i\j);
\end{scope}
}
}

\begin{scope}[on background layer]
\draw[line width=1pt, ForestGreen]	
	(q1) -- (r11)
	(q2) -- (r13); 
\end{scope}

\node[rectangle, dashed, draw, fit=(U1) (U4)](H){};
\node [right = 5pt of H]{$H$};

\node[rectangle, dashed, draw, fit=(q1) (q2)](Q){};
\node [left = 5pt of Q]{$Q$};

\node[rectangle, dashed, draw, fit=(x21) (y26)](P1){};
\node [left = 5pt of P1]{$P$};

\node[rectangle, dashed, draw, fit=(x41) (y46)](P2){};
\node [left = 5pt of P2]{$P$};

\end{tikzpicture}4
\caption{An illustration of an augmentation graph $A$. All edges drawn belong to $A$ and dashed edges are augmenting edges -- these edges collectively form $\bH$. The middle graph is the encoded-RS graph $H$, whose edges are omitted. 
}
\label{fig:aug-graph}
\end{figure}
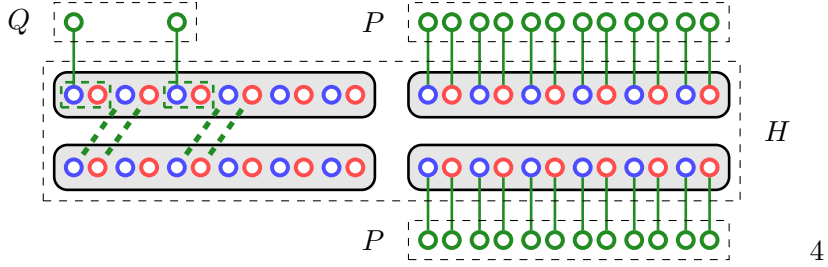

\begin{observation}\label{obs:augmentation-graph}
	For any $(2n)$-vertex $(r,t)$-RS graph $\GRS$ and collection $\UU$ of $\ell$ vertex-disjoint $k$-sequences, the augmentation graph $A = \augmentgraph(\GRS,*,*,\UU)$ has $8n-4r+\ell$ vertices.
\end{observation}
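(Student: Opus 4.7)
The plan is a simple vertex count, partitioning $V(A)$ into the three disjoint pieces that appear explicitly in \Cref{def:aug-graph}: the vertex set of $H=\encodedrs(\GRS,X)$, the new set $P$, and the new set $Q$. These three sets are vertex-disjoint by construction ($P$ and $Q$ are ``new'' vertices introduced by the definition), so the total is just the sum of the three sizes.

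First I would invoke \Cref{obs:encoded-rs}: since $\GRS$ has $2n$ vertices, the encoded-RS graph $H$ has $4n$ vertices (two representatives per vertex of $\GRS$), and for any induced matching $\MRS$ of $\GRS$ of size $r$, the representative matching $\rep{\MRS}$ in $H$ has size $2r$, i.e.\ covers exactly $4r$ vertices of $H$. In particular $\rep{\MRS_j}$ leaves $4n-4r$ vertices of $H$ unmatched, and by \Cref{def:aug-graph} the set $P$ is defined so that there is a perfect matching between $P$ and these unmatched vertices, so $|P|=4n-4r$.

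Next I would count $|Q|$: by \Cref{def:aug-graph}, $Q$ is perfectly matched to $\{\start{AP(\vec{u}_i)} : \vec{u}_i \in \UU\}$. Since $\UU$ is a vertex-disjoint collection of $\ell$ $k$-sequences on $L(\MRS_j)$, the start vertices $\start{AP(\vec{u}_i))}=a_{u_{i,1}}$ are $\ell$ distinct vertices of $H$, giving $|Q|=\ell$ (the perfect matching between $Q$ and this set is well-defined precisely because these vertices are distinct).

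Summing the three disjoint contributions yields
\[
|V(A)| \;=\; |V(H)| + |P| + |Q| \;=\; 4n + (4n-4r) + \ell \;=\; 8n - 4r + \ell,
\]
which is the claim. There is no real obstacle here: the only thing to check carefully is that $P$-vertices do not accidentally coincide with $Q$-vertices or with vertices of $H$, but this is immediate from the fact that $P$ and $Q$ are introduced as fresh vertex sets in \Cref{def:aug-graph}, and that $P$ is matched to the complement (within $H$) of the vertices covered by $\rep{\MRS_j}$ while $Q$ is matched to a subset of the vertices covered by $\rep{\MRS_j}$ (namely the $a_{u_{i,1}}$'s), so there is no double counting to worry about.
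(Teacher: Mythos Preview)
Your proposal is correct and follows essentially the same approach as the paper: count $|V(H)|=4n$ via \Cref{obs:encoded-rs}, $|P|=4(n-r)$, and $|Q|=\ell$, then sum. You simply add a bit more justification for the disjointness of the three pieces and the distinctness of the start vertices, but the argument is identical in substance.
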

\begin{proof}
	$H = \encodedrs(\GRS,*)$ has $4n$ vertices by~\Cref{obs:encoded-rs}; the set $P$ has $4 \cdot (n-r)$ vertices, and $Q$ has $\ell$ vertices. 
\end{proof}
In the following lemmas, we establish the key properties of augmentation graphs that we need. 

\begin{lemma}\label{lem:augmentation-matching}
	For any $(2n)$-vertex $(r,t)$-RS graph $\GRS$ and  $A = \augmentgraph(\GRS,X,j,\UU)$, 
	there is a matching $M^*$ of size $4n-2r$ in $A$ that does \underline{not} match any of the augmentation vertices in $\aug{A}$. 
\end{lemma}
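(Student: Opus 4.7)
My plan is to build $M^*$ by starting from the obvious near-perfect matching of $H \cup P$ and performing a local edge swap along every augmenting path that ends at an $a$-vertex. Set $M_0 := \rep{\MRS_j} \cup M_P$, where $M_P$ is the perfect matching of $\bH$ between $P$ and the $4(n-r)$ vertices of $H$ not covered by $\rep{\MRS_j}$. By~\Cref{obs:encoded-rs}, $\card{\rep{\MRS_j}} = 2r$, so $\card{M_0} = 2r + 4(n-r) = 4n-2r$, and $M_0$ matches every vertex of $H \cup P$, leaving only the $\ell$ vertices of $Q$ unmatched. The initial matching $M_0$ matches every $a$-vertex in $\aug{A}$ via $\rep{\MRS_j}$, so we need to modify it locally to free these vertices while preserving the total size.

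Fix a sequence $\vec{u}_i = (u_{i,1},\ldots,u_{i,k})$ with augmenting path $AP_i$, and write $p_{u_{i,s}}, p_{v_{i,s}}$ for the on-path reps and $o_{u_{i,s}}, o_{v_{i,s}}$ for the off-path reps of $u_{i,s}, v_{i,s}$. The path visits $a_{u_{i,1}}, p_{v_{i,1}}, p_{u_{i,2}}, \ldots, p_{v_{i,k}}$, alternating $k$ on-path rep edges of $\rep{\MRS_j} \subseteq M_0$ with $k-1$ on-path augmenting edges in $\bH$; by~\Cref{obs:aug-path}, $\eend{AP_i} = a_{v_{i,k}}$ exactly when $a_{v_{i,k}} \in \aug{A}$. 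For each such sequence, perform the swap: delete the $k$ on-path rep edges from $M_0$ and insert the $k-1$ on-path augmenting edges of $AP_i$ together with the $Q$-edge $(q_i, a_{u_{i,1}})$. This is zero-sum in size; after the swap $a_{u_{i,1}}$ is matched to $q_i$, the intermediate on-path reps $p_{v_{i,s}}, p_{u_{i,s+1}}$ are covered by augmenting edges for $s = 1,\ldots,k-1$, the off-path reps are still covered by the untouched off-path rep edges of $\rep{\MRS_j}$, and $p_{v_{i,k}} = a_{v_{i,k}}$ is the only vertex of this sequence's area left unmatched. For sequences with $\eend{AP_i} = b_{v_{i,k}}$, no modification is needed: $a_{v_{i,k}} \in \baug{A}$ is already off-path (and still matched by its off-path rep edge), while $q_i$ stays unmatched.

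Because $\UU$ is vertex-disjoint, the local modifications act on pairwise disjoint vertex sets, so $M^*$ remains a matching; since every swap is zero-sum, $\card{M^*} = 4n-2r$; and by construction every vertex of $\aug{A}$ is unmatched. The main point to verify is that, within a single modified sequence, the three edge sets used after the swap---the on-path augmenting edges, the $Q$-edge, and the off-path rep edges of $\rep{\MRS_j}$---are pairwise vertex-disjoint; this is immediate because they touch, respectively, only on-path reps other than $a_{u_{i,1}}$ and $a_{v_{i,k}}$, only $\{q_i, a_{u_{i,1}}\}$, and only off-path reps. The real content of the argument is the identification, via~\Cref{obs:aug-path}, of the ``$a$-endpoint'' of $AP_i$ with membership in $\aug{A}$: this is what guarantees that the vertex freed by the swap is exactly the augmentation vertex we needed to avoid. (If $\UU$ does not cover all of $L(\MRS_j)$, the rep edges of the unused pairs simply remain in $M_0$ and contribute the missing $2(r - \ell k)$ edges to keep the total at $4n-2r$.)
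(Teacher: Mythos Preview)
Your proof is correct and follows essentially the same approach as the paper: start from $\rep{\MRS_j}\cup M_P$ and swap along the alternating paths $(q_i,a_{u_{i,1}})\circ AP(\vec{u}_i)$ to free the end vertices. The only cosmetic difference is that the paper performs the swap for \emph{every} $\vec{u}_i\in\UU$, thereby unmatching all of $\{\eend{AP(\vec{u}_i)}\}_i\supseteq \aug{A}$ without any case analysis, whereas you split into cases and swap only along the paths with $\eend{AP(\vec{u}_i)}=a_{v_{i,k}}$; both variants yield a valid matching of size $4n-2r$ avoiding $\aug{A}$.
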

\begin{proof}
	We construct the matching $M^*$ as follows:
	\begin{itemize}
		\item Add all edges in the perfect matching between $P$ and vertices of $H$ \emph{not} matched by $\rep{\MRS_j}$ to $M^*$; these amount to $4(n-r)$ edges in total. 
		\item Start with the induced matching $\rep{\MRS_j}$ in $H$; for every $\vec{u}_i$ in $\UU$, the edge of the perfect matching between $Q$ and $\start{AP(\vec{u}_i)}$, as well as the remainder of the path $AP(\vec{u}_i)$ form an \emph{alternating} path for $\rep{\MRS_j}$ from $Q$ to $\eend{AP(\vec{u}_i)}$ (because length of $AP(\vec{u}_i)$ is always odd and we added one more edge to it). Add the edges obtained \emph{after} applying these alternating paths\footnote{Given that these 
		paths are ``alternating'' and not ``augmenting'' at this point, our choice of the word ``augmenting paths'' in their definition may sound unnatural; however, in the final construction, which includes further addition to the graph, these paths indeed will 
		become augmenting paths, hence the term (one can think of vertices in $\aug{A}$ as ready to be matched ``outside'').} on $\rep{\MRS_j}$ to $M^*$; these amount to $2r$ edges in total. 
	\end{itemize} 
	It is straightforward to verify that $M^*$ is indeed a matching with size $4n-2r$ since vertices of $\rep{\MRS_j} \cup Q$ are disjoint from the vertices matched in the first part. Moreover, when we apply each alternating path consisting of the $Q$-edge and $AP(\vec{u}_i)$, 
	the last vertex of the path, namely, $\eend{AP(\vec{u}_i)}$ becomes unmatched in $M^*$ as desired.   
\end{proof}

\begin{lemma}\label{lem:augmentation-vc}
	For any $(2n)$-vertex $(r,t)$-RS graph $\GRS$ and $A = \augmentgraph(\GRS,X,j,\UU)$, 
	there is a vertex cover $V^*$ of size $4n-2r$ in $A$ that includes all vertices in $\baug{A}$ and does \underline{not} include any vertex in $\aug{A}$. 	
\end{lemma}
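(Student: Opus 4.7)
The plan is to construct $V^*$ by a decomposition of $A$ that is dual to the matching construction of Lemma 2.8.

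For each $\vec{u}_i \in \UU$, I would first show that the subgraph of $A$ consisting of the $Q$-edge at $Q_i$ together with the rep- and aug-edges incident to the $4k$ representatives involved in $\vec{u}_i$ decomposes into exactly two vertex-disjoint paths. One is the \emph{main} path of length $2k$, going from $Q_i$ through $\start{AP(\vec{u}_i)} = a_{u_{i,1}}$ along $AP(\vec{u}_i)$ to $\eend{AP(\vec{u}_i)}$; the other is the \emph{sibling} path of length $2k-1$, going from $b_{u_{i,1}}$ along the parallel sibling representatives to the $\{a,b\}$-complement of $\eend{AP(\vec{u}_i)}$. A quick degree count forces this structure: $Q_i$, $b_{u_{i,1}}$, $a_{v_{i,k}}$, and $b_{v_{i,k}}$ are the only degree-$1$ vertices in this local subgraph and every other vertex has degree $2$. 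By~\Cref{obs:aug-path}, $a_{v_{i,k}}$ is the terminal vertex of the main path (at even position $2k$) when $a_{v_{i,k}} \in \aug{A}$, and of the sibling path (at odd position $2k-1$) when $a_{v_{i,k}} \in \baug{A}$.

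For each $\vec{u}_i$, I would then take the odd-positioned vertices of both paths into $V^*$, which gives a minimum cover of each ($k+k=2k$ vertices) and exploits the crucial parity coincidence: the even terminal position $2k$ is naturally excluded, so $\aug{A}$-vertices stay out; the odd terminal position $2k-1$ is naturally included, so $\baug{A}$-vertices stay in. The remaining edges are covered independently: add all $4(n-r)$ vertices of $P$ to cover the $P$-edges, and for each of the $r-\ell k$ edges of $\MRS_j$ not touched by any $\vec{u}_i \in \UU$, add one endpoint of each of its two rep-edges in $\rep{\MRS_j}$ (total $2(r-\ell k)$ vertices). Summing gives $|V^*| = 2\ell k + 4(n-r) + 2(r - \ell k) = 4n - 2r$, as required.

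The main subtlety is the parity coincidence just highlighted: the main paths have even length and the sibling paths odd length, so the $\aug{A}$ and $\baug{A}$ terminal vertices land at precisely the positions compatible with a \emph{minimum} vertex cover. Without this alignment, simultaneously including $\baug{A}$ and excluding $\aug{A}$ would force strictly more than $4n-2r$ vertices. One could alternatively invoke K\"onig's theorem on the bipartite graph $A$ to justify the lower bound $|V^*| \geq 4n - 2r$ (its smaller side has exactly $4n-2r$ vertices), but the explicit path-by-path construction is cleaner for enforcing the side constraints.
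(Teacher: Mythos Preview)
Your path decomposition into main and sibling paths is correct and more explicit than the paper's presentation; in fact, the vertices you select on those paths coincide with the paper's choice (the paper phrases it as ``$L$-side vertices on $AP(\vec{u}_i)$'' plus ``$R$-side vertices of $\rep{\MRS_j}$ not on augmenting paths'', which is exactly your odd-position rule once one accounts for the extra $Q_i$ vertex at position $0$).

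There is, however, a genuine gap. The augmentation graph $A$, as the paper intends it and as is required for its later use in Claim~3.7, contains \emph{all} edges of the encoded-RS graph $H$, not only those of $\rep{\MRS_j}$. In particular, every vertex of $\rep{\MRS_j}$ also has incident edges from the other representative matchings $\rep{\MRS_{j'}}$ for $j'\neq j$, and these edges reach into the $4(n-r)$ vertices of $H$ that lie outside $\rep{\MRS_j}$. Your step of adding the $4(n-r)$ vertices of $P$ covers only the $P$-matching edges; it leaves every edge of $H\setminus\rep{\MRS_j}$ uncovered, so your $V^*$ is not a vertex cover of $A$.

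The fix is immediate and costs nothing: replace ``add all $4(n-r)$ vertices of $P$'' by ``add all $4(n-r)$ vertices of $H$ not matched by $\rep{\MRS_j}$''. This is the choice the paper makes. Since $\rep{\MRS_j}$ is an induced matching in $H$, every edge of $H$ outside $\rep{\MRS_j}$ has at least one endpoint among these vertices, and each $P$-edge has its $H$-endpoint there as well. With this one change your construction becomes a valid vertex cover of size $4n-2r$, and your parity argument for the inclusion of $\baug{A}$ and exclusion of $\aug{A}$ goes through unchanged.
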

\begin{proof}
	We construct the vertex cover $V^*$ as follows:
	\begin{itemize}
		\item Add all vertices of $H$ \emph{not} matched by $\rep{\MRS_j}$ to $V^*$; these amount to $4(n-r)$ vertices.  
		\item For any augmenting path $AP(\vec{u}_i)$, starting from $\start{AP(\vec{u}_i)}$, add every other alternating vertex on the path to $V^*$. Add the remaining vertices in $R(\rep{\MRS_j})$ that were not part of  augmenting paths to $V^*$. 
		These vertices amount to $2r$ in total. 
	\end{itemize} 
	
	We first argue that $V^*$ is  a vertex cover. The vertices added in the first part cover all edges except the ones with both endpoints in $\rep{\MRS_j} \cup Q$. Adding  $\start{AP(\vec{u}_i)}$ for $\vec{u}_i \in \UU$ also takes care of all edges incident on $Q$. 
	Picking alternating vertices on the paths cover the edges of the paths. The only remaining edges are the ones in $\rep{\MRS_j}$ and augmenting edges $AE$ that are not part of augmenting paths. They will all be covered by the inclusion of the very last set of vertices in $R(\rep{\MRS_j})$ that are not in augmenting paths. 
	Thus, $V^*$ is  a vertex cover with size $4n-2r$. 
	
	Furthermore, the alternating way of picking vertices in $AP(\vec{u}_i)$ plus the fact that length of these paths are odd, means that $\eend{AP(\vec{u}_i)}$ would not be part of $V^*$. This ensures that $V^*$ does not include any vertex from $\aug{A}$. 
	Finally, since  in the last step, we are picking vertices of $R(\rep{\MRS_j})$ that are not in augmenting paths, we will be picking vertices in $\baug{A}$ in $V^*$. This concludes the proof. 
\end{proof}

By duality of matching and vertex cover (\Cref{fact:vc-matching}),~\Cref{lem:augmentation-matching,lem:augmentation-vc} in particular imply that $M^*$ and $V^*$ are maximum matching and minimum vertex cover in $A$ (although we will not use this observation directly and work with the stronger statements in the lemmas). 

\subsubsection*{A Distribution over Augmentation Graphs} We define the following distribution over augmentation graphs. 

\begin{definition}[\textbf{Distribution $\distaug$}]\label{def:dist-aug}
Fix an $(r,t)$-RS graph $\GRS$, integer $k \geq 1$, and vector $Y \in \set{0,1}^{\ell}$ for some $\ell$ such that $k \cdot \ell < r$. We define $\distaug = \distaug(\GRS,Y,k)$ as the following distribution on augmentation graphs $A = \augmentgraph(\GRS,X,j,\UU)$ where $\UU$ consists of $\ell$ vertex-disjoint $k$-sequences:
\begin{enumerate}
	\item Sample index $j \in [t]$ uniformly at random;
	\item Sample matrix $X$ and collection $\UU$ uniformly at random from all pairs such that: 
	\begin{enumerate}
		\item if $Y_i = 0$, then the vertex $\eend{AP(\vec{u}_i)} \in \aug{A}$; 
		\item otherwise, if $Y_i = 1$, then the vertex $\eend{AP(\vec{u}_i)} \in \baug{A}$. 
	\end{enumerate}
	\item[] (recall that by~\Cref{obs:encoded-rs}, the choice of $\eend{AP(\vec{u}_i)}$ is only a function of $X$ and $\UU$ after we conditioned on the choice of $j \in [t]$). 
\end{enumerate}
\end{definition}

We  list some simple observations about this distribution. 

\begin{observation}\label{obs:dist-aug}
	In $\distaug = \distaug(\GRS,Y,k)$ for graphs $A = \augmentgraph(\GRS,X,j,\UU)$: 
	\begin{enumerate}[label=$(\roman*)$]
		\item The choice of $j$ and $X$ are independent (consequently, $j$ and $H=\encodedrs(\GRS,X)$ are also independent); 
		\item Conditioned on the choice of $j$ and $\UU$, the set $\aug{A} \sqcup \baug{A}$ is already fixed -- the partition between the two sets is then solely determined by $X$.
	\end{enumerate}
\end{observation}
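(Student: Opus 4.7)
The plan is to address the two parts of the observation separately. Part~(ii) reduces to unwinding \Cref{def:aug-graph} together with \Cref{obs:aug-path}, while part~(i) requires a counting argument exploiting the XOR structure of the validity constraint in \distaug.

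\textbf{Part (ii).} Fix $(j, \UU)$. Since $\MRS_j$ is then fixed, for every $\vec{u}_i = (u_{i,1}, \ldots, u_{i,k}) \in \UU$ the partner $v_{i,k} := \MRS_j(u_{i,k})$ is fixed, and so is the representative $a_{v_{i,k}}$. By \Cref{def:aug-graph},
\[
\aug{A} \sqcup \baug{A} \;=\; \{a_{v_{i,k}} : i \in [\ell]\},
\]
which is therefore determined by $(j, \UU)$ alone. The partition of this set into $\aug{A}$ vs $\baug{A}$ depends on whether $\eend{AP(\vec{u}_i)}$ equals $a_{v_{i,k}}$ or $b_{v_{i,k}}$; by \Cref{obs:aug-path} this is dictated by the XOR $X_{i_1,j} \oplus \cdots \oplus X_{i_k,j}$ of $k$ entries of column $j$ of $X$. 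Thus the partition is a function of $X$ once $(j, \UU)$ are conditioned upon.

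\textbf{Part (i).} The plan is to show that for every $X^* \in \{0,1\}^{r \times t}$, the conditional probability $\Pr[X = X^* \mid j]$ is the same across $j \in [t]$. By \Cref{def:dist-aug}, conditional on $j$ the pair $(X, \UU)$ is uniform on valid pairs, so
\[
\Pr[X = X^* \mid j] \;=\; \frac{|\{\UU : (X^*, \UU) \text{ valid for } j\}|}{|\{(X, \UU) : \text{valid for } j\}|}.
\]
The denominator is straightforward: the number of ordered $\ell$-collections of vertex-disjoint $k$-sequences on the $r$-element set $L(\MRS_j)$ is a quantity $N_\UU = N_\UU(r, k, \ell)$ independent of $j$; for any such $\UU$, the $\ell$ XOR constraints involve disjoint entries of column $j$ (by vertex-disjointness), cutting out a codimension-$\ell$ affine subspace of $\{0,1\}^{r \times t}$ and giving $2^{rt - \ell}$ valid $X$'s. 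Thus the denominator equals $N_\UU \cdot 2^{rt - \ell}$, invariant in $j$.

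The main obstacle is the numerator. My plan is to decompose the conditional sampling as: first draw $\UU$ uniformly on its $N_\UU$-sized support, then draw $X$ uniformly on the codimension-$\ell$ affine subspace defined by $\UU$. Marginalizing out $\UU$ writes the law of $X \mid j$ as a uniform mixture of uniform distributions on $\ell$-codimensional affine subspaces in $\{0,1\}^{r \times t}$, each constraining only column $j$. The argument to finish is to show that this mixture coincides with the uniform distribution on $\{0,1\}^{r \times t}$: using the balanced character of XOR (equivalently, the vanishing of nontrivial Fourier coefficients $\hat{\chi}_S$ for $S \neq \emptyset$) together with the $[r]$-symmetry of the uniform choice of $\UU$ over sequences, this should yield $\Pr[X = X^* \mid j] = 2^{-rt}$ for every $X^*$ and $j$, establishing $j \perp X$. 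Finally, since $H = \encodedrs(\GRS, X)$ is a deterministic function of $X$ for the fixed RS graph $\GRS$, independence of $j$ and $H$ is immediate.
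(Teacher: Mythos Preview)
The paper gives no proof here, simply declaring it immediate. Your treatment of part~(ii) is correct and is exactly the unwinding the paper has in mind.

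Part~(i), however, has a real gap. You try to show $\Pr[X = X^* \mid j] = 2^{-rt}$ by arguing that the uniform mixture over $\UU$ of uniform laws on codimension-$\ell$ affine subspaces equals the uniform law on $\{0,1\}^{r\times t}$. This is false under the literal reading of \Cref{def:dist-aug}. Take $r=4$, $k=2$, $\ell=1$, $Y_1=0$, and any $t \ge 2$. Conditioned on $j$, your numerator $|\{\UU:(X^*,\UU)\text{ valid for }j\}|$ counts ordered pairs $(u_1,u_2)$ in $[4]$ with $X^{*,j}_{u_1}=X^{*,j}_{u_2}$; this equals $12$ when $X^{*,j}=(0,0,0,0)$ but only $4$ when $X^{*,j}=(0,0,1,1)$. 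Hence column $X^j$ is not uniform given $j$, and since the non-uniform column tracks $j$, the pair $(j,X)$ is \emph{not} independent when $(X,\UU)$ is sampled uniformly over valid pairs. No $\UU$-symmetry or Fourier cancellation repairs this: the count of valid $\UU$'s genuinely varies with $X^{*,j}$.

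What makes the observation ``immediate'' is the intended reading of $\distaug$: sample $X$ uniformly from $\{0,1\}^{r\times t}$, sample $j$ uniformly, then sample $\UU$ uniformly among those valid for $(X,j,Y)$. This is exactly how $X$ is treated in \Cref{lem:xor-lemma} and in the proof of \Cref{clm:high-entropy-jstar} (``$\rX$ is independently uniform''), and under that convention part~(i) is trivial. Your counting route cannot recover the conclusion from the pairs-uniform formulation; the resolution lies in the sampling convention, not in a sharper argument.
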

The proofs are immediate and we omit them here. 

\subsection{The \textbf{Hidden-Matching} Game}\label{sec:game}

We are finally ready to present our communication game. This is a two player communication game between Alice and Bob, called \game (and follows the same rules described in~\Cref{sec:cc} unless specified otherwise).  
\game goes in two \emph{phases} that loosely correspond to the two passes of streaming algorithms. We start with the parameters and input-independent parts. 

\paragraph{Parameters.} Let $\delta \in (0,1)$ be a fixed constant and $k,n_1 \geq 1$ be integers (think of $n_1$ as \emph{governing} the size of the final graph, and $k$ as a  constant or a slow growing function (doubly-logarithmic) in the size of the graph). Consider a fixed $(2n_1)$-vertex $(r_1,t_1)$-RS graph $\GRS_1$ and another $(2n_2)$-vertex $(r_2,t_2)$-RS graph $\GRS_2$ where $r_2 = (k+\delta) \cdot n_1$. 
We shall emphasize that the parameters $r_1,t_1$ and their connection to $n_1$, as well as $t_2,n_2$ and their connection to $r_2$ are governed by the maximum density of RS graphs we would be able to use in this construction (see~\Cref{cor:stream-lower-RS}). 

These parameters and graphs are known to both players. 

\paragraph{Phase I.} The first phase goes as follows (see~\Cref{fig:phase1} for an illustration of this phase): 

\smallskip

\begin{figure}[h!]
\centering

\begin{tikzpicture}

\tikzset{layer/.style={rectangle, rounded corners=5pt, draw, black, line width=1pt,  fill=black!10, inner sep=4pt}}
\tikzset{vertex/.style={circle, ForestGreen, fill=white, line width=1.5pt, draw, minimum width=6pt, minimum height=6pt, inner sep=0pt}}
\tikzset{vertex1/.style={circle, blue!70, fill=white, line width=1.5pt, draw, minimum width=6pt, minimum height=6pt, inner sep=0pt}}
\tikzset{vertex2/.style={circle, red!70, fill=white, line width=1.5pt, draw, minimum width=6pt, minimum height=6pt, inner sep=0pt}}
\tikzset{vertex3/.style={circle, ForestGreen, fill=white, line width=1.5pt, draw, minimum width=6pt, minimum height=6pt, inner sep=0pt}}

\tikzset{choose/.style={rectangle, line width=1pt, rounded corners = 5pt, draw, fill=black!10, minimum width=32pt, minimum height=17pt}}
		
\tikzset{group/.style={rectangle, black, fill=white, draw, minimum width=5pt, minimum height=15pt, inner sep=0pt}}

\node[choose, minimum width=15pt] (X1){};
\node[choose, minimum width=15pt] (X2) [right=1pt of X1] {};

\node[choose, minimum width=15pt] (Y1) [below=20pt of X1]{};
\node[choose, minimum width=15pt] (Y2) [right=1pt of Y1] {};

\node[choose, inner sep=1pt, fit=(X1) (X2)](L1){};
\node[choose, inner sep=1pt, fit=(Y1) (Y2)](R1){};

\node[choose](A1)[left=10pt of L1]{};
\node[choose](A2)[left=2pt of A1]{};
\node[choose](A3)[left=2pt of A2]{};

\node[choose](A4)[left=5pt of A3]{};
\node[choose](A5)[left=2pt of A4]{};
\node[choose](A6)[left=2pt of A5]{};		
	
\node[choose](B1)[right=10pt of L1]{};
\node[choose](B2)[right=2pt of B1]{};
\node[choose](B3)[right=2pt of B2]{};

\node[choose](B4)[right=5pt of B3]{};
\node[choose](B5)[right=2pt of B4]{};
\node[choose](B6)[right=2pt of B5]{};

\foreach \j in {1,...,6}
{
\node[choose](C\j)[below=20pt of A\j]{};
\node[choose](D\j)[below=20pt of B\j]{};
}

\foreach \i in {A,B,C,D}
{
\node[choose, fit=(\i1) (\i3), inner sep=1pt](\i10){};
\node[choose, fit=(\i4) (\i6), inner sep=1pt](\i20){};
}

\begin{scope}[on background layer]
\draw[red, line width=1pt]
	(X1.center) -- (Y2.center)
	(X1.center) -- (Y1.center)
	(X2.center) -- (Y2.center);
	
\draw[blue, line width=1pt]	
	(A1.center) -- (C1.center)
	(A2.center) -- (C2.center)
	(A3.center) -- (C3.center)
	
	(A4.center) -- (C4.center)
	(A5.center) -- (C5.center)
	(A6.center) -- (C6.center)
	
	(A4.center) -- (C1.center)
	(A5.center) -- (C2.center)
	(A6.center) -- (C3.center)
	
	(B1.center) -- (D1.center)
	(B2.center) -- (D2.center)
	(B3.center) -- (D3.center)
	
	(B4.center) -- (D4.center)
	(B5.center) -- (D5.center)
	(B6.center) -- (D6.center)
	
	(B4.center) -- (D1.center)
	(B5.center) -- (D2.center)
	(B6.center) -- (D3.center);
\end{scope}

\node[rectangle, dashed, draw, fit=(A10) (C20)](HL){};
\node[below=5 pt of HL]{$H_L$};

\node[rectangle, dashed, draw, fit=(B10) (D20)](HR){};
\node[below=5 pt of HR]{$H_R$};

\node[rectangle, dashed, draw, fit=(L1) (R1)](GA){};
\node[below=5 pt of GA]{$G_A$};

\end{tikzpicture}
\caption{An illustration of a the \textbf{first phase} of the \game game. The vertices with zero edges given to Alice and Bob are omitted from this figure. The middle graph (red edges) is given to Alice and the outer graphs (blue edges) are given to Bob as input in the first phase.}
\label{fig:phase1}
\end{figure}
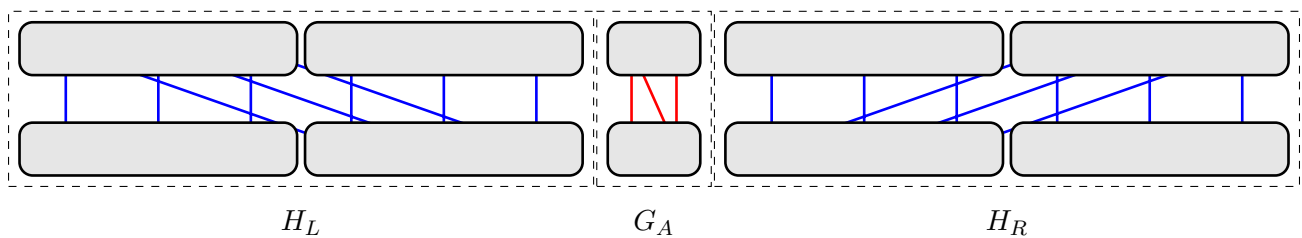

	\begin{itemize}
		\item Initially, Alice receives a copy of $\GRS_1$ such that each edge is removed independently with probability $\delta$. We refer to this graph as $G_A$. 
		\item We sample $j_1 \in [t_1]$ uniformly at random and let:
		\begin{itemize}
			\item $Y_L$ to be the characteristic vector of $L(\MRS_{j_1})$ in $\GRS_1$;
			\item  $Y_R$ to be the characteristic vector of $R(\MRS_{j_1})$ in $\GRS_1$. 
		\end{itemize}
		\item We sample two independent augmentation graphs: 
		\begin{itemize}
		\item $A_L \sim \distaug(\GRS_2,Y_L,k)$ such that $A_L = \augmentgraph(\GRS_2,X_L,j_{L},\UU_L)$, where $A_L$ includes an encoded-RS graph $H_L = \encodedrs(\GRS_2,X_L)$ and remaining edges $\bH_L$; 
		\item $A_R \sim \distaug(\GRS_2,Y_R,k)$ such that $A_R = \augmentgraph(\GRS_2,X_R,j_{R},\UU_R)$, where $A_R$ includes an encoded-RS graph $H_R = \encodedrs(\GRS_2,X_R)$ and remaining edges $\bH_R$. 
		\end{itemize}
		\item Bob receives the edges of $H_L$ and $H_R$ in this phase, referred to as the graph $G_B$. 
		\item At this point, the players run the first round of communication by Alice sending a single message to Bob and Bob responding back with his message. 
	\end{itemize}
This concludes the first phase of the game. Note that at this point, some edges of $A_L$ and $A_R$ have not been given to either player.

\paragraph{Phase II.} We now present the second phase (see~\Cref{fig:phase2} for an illustration of this phase):

\begin{figure}[h!]
\centering

\begin{tikzpicture}

\tikzset{choose/.style={rectangle, line width=1pt, rounded corners = 5pt, draw, fill=black!10, minimum width=32pt, minimum height=17pt}}
		
\tikzset{group/.style={rectangle, black, fill=white, draw, minimum width=5pt, minimum height=15pt, inner sep=0pt}}

\node[choose, minimum width=15pt, fill=red!25] (X1){};
\node[choose, minimum width=15pt] (X2) [right=1pt of X1] {};

\node[choose, minimum width=15pt, fill=red!25] (Y1) [below=20pt of X1]{};
\node[choose, minimum width=15pt] (Y2) [right=1pt of Y1] {};

\begin{scope}[on background layer]
\node[choose, inner sep=1pt, fit=(X1) (X2)](L1){};
\node[choose, inner sep=1pt, fit=(Y1) (Y2)](R1){};
\end{scope}

\node[choose](A1)[left=10pt of L1]{};
\node[choose](A2)[left=2pt of A1]{};
\node[choose](A3)[left=2pt of A2]{};

\node[choose](A4)[left=5pt of A3]{};
\node[choose](A5)[left=2pt of A4]{};
\node[choose](A6)[left=2pt of A5]{};		
	
\node[choose](B1)[right=10pt of L1]{};
\node[choose](B2)[right=2pt of B1]{};
\node[choose](B3)[right=2pt of B2]{};

\node[choose](B4)[right=5pt of B3]{};
\node[choose](B5)[right=2pt of B4]{};
\node[choose](B6)[right=2pt of B5]{};

\foreach \j in {1,...,6}
{
\node[choose](C\j)[below=20pt of A\j]{};
\node[choose](D\j)[below=20pt of B\j]{};
}

\begin{scope}[on background layer]
\foreach \i in {A,B,C,D}
{
\node[choose, fit=(\i1) (\i3), inner sep=1pt](\i10){};
}
\end{scope}

\foreach \i in {A,B,C,D}
{
\node[choose, fit=(\i4) (\i6), inner sep=1pt](\i20){};
}

\draw[red, dashed, line width=2pt]
	(X1.center) -- (Y1.center);
		
\begin{scope}[on background layer]

\draw[blue, dashed, line width=1pt]	
	(A1.center) -- (C1.center)
	(A2.center) -- (C2.center)
	(A3.center) -- (C3.center)
	(B1.center) -- (D1.center)
	(B2.center) -- (D2.center)
	(B3.center) -- (D3.center);
\end{scope}

\node[choose](QL)[above=20pt of A2]{$Q_L$};
\node[choose](QR)[below=20pt of D2]{$Q_R$};

\foreach \j in {4,5,6}
{
	\node[choose](PA\j)[above=20pt of A\j]{};
	\node[choose](PC\j)[below=20pt of C\j]{};
	\node[choose](PB\j)[above=20pt of B\j]{};
	\node[choose](PD\j)[below=20pt of D\j]{};	
}

\foreach \j in {A,C}
{
\node[choose, fit=(P\j4) (P\j6), inner sep=1pt](PX\j){};
\node[below=-10pt of PX\j, anchor= center]{$P_L$};
}

\foreach \j in {B,D}
{
\node[choose, fit=(P\j4) (P\j6), inner sep=1pt](PY\j){};
\node[below=-10pt of PY\j, anchor= center]{$P_R$};
}

\begin{scope}[on background layer]
\foreach \i in {A,B,C,D}
{
	\foreach \j in {4,5,6}
	{
		\draw[ForestGreen, line width=2pt]
			(P\i\j.center) -- (\i\j.center);
	}
}

\draw[ForestGreen, line width=2pt]
	(QL.center) -- (A2.center)
	(QR.center) -- (D2.center)
	(C2.center) -- (A1.center)
	(D1.center) -- (B2.center);
\end{scope}

\node[choose, minimum width=13pt, minimum height=16pt] (ZC1)[left=-16pt of C1]{};
\node[choose, minimum width=13pt, minimum height=16pt, pattern=north west lines] (ZC2)[right=1pt of ZC1]{};

\node[below=10 pt of ZC2](augL){$\aug{A_L}$};
\draw[->] (ZC2) to (augL);

\node[choose, minimum width=13pt, minimum height=16pt] (ZB1)[left=-16pt of B1]{};
\node[choose, minimum width=13pt, minimum height=16pt,  pattern=north west lines] (ZB2)[right=1pt of ZB1]{};

\node[above=10 pt of ZB2](augR){$\aug{A_R}$};
\draw[->] (ZB2) to (augR);

\node[choose, pattern=crosshatch](A3)[left=2pt of A2]{};
\node[choose, pattern=crosshatch](C3)[below=20pt of A3]{};

\node[choose, pattern=crosshatch](B3)[right=2pt of B2]{};
\node[choose, pattern=crosshatch](D3)[below=20pt of B3]{};

\draw[ForestGreen, line width=2pt]
	(ZC1.center) -- (X1.center)
	(ZC2.center) -- (X2.center)
	(ZB1.center) -- (Y1.center)
	(ZB2.center) -- (Y2.center);

\node[rectangle, dashed, draw, fit=(PA4) (PC6) (A10)](HL){};
\node[below=5 pt of HL]{$A_L$};

\node[rectangle, dashed, draw, fit=(PB4) (PD6) (B10)](HR){};
\node[below=5 pt of HR]{$A_R$};

\node[rectangle, dashed, draw, fit=(L1) (R1)](GA){};
\node[below=5 pt of GA]{$G_A$};

\end{tikzpicture}
\caption{An illustration of a the \textbf{second phase} of the \game game. The solid (green) edges are the ones presented to both players in the second phase. The dashed (red or blue) edges correspond to induced matchings in first phase that play a critical role in the second phase (corresponding to indices $j_1,j_L,j_R$). The remaining 
edges from the first phase are omitted in this figure. The hatched part in $A_L$ and $A_R$ correspond to $\aug{A_L}$ and $\aug{A_R}$, respectively, and are connected to the hidden matching with dashed (red) edges in $G_A$. The double-hatched part in $A_L$
and $A_R$ are vertices of $\MRS_{j_L}$ and $\MRS_{j_R}$ that are \underline{not} incident on augmenting edges. }
\label{fig:phase2}
\end{figure}
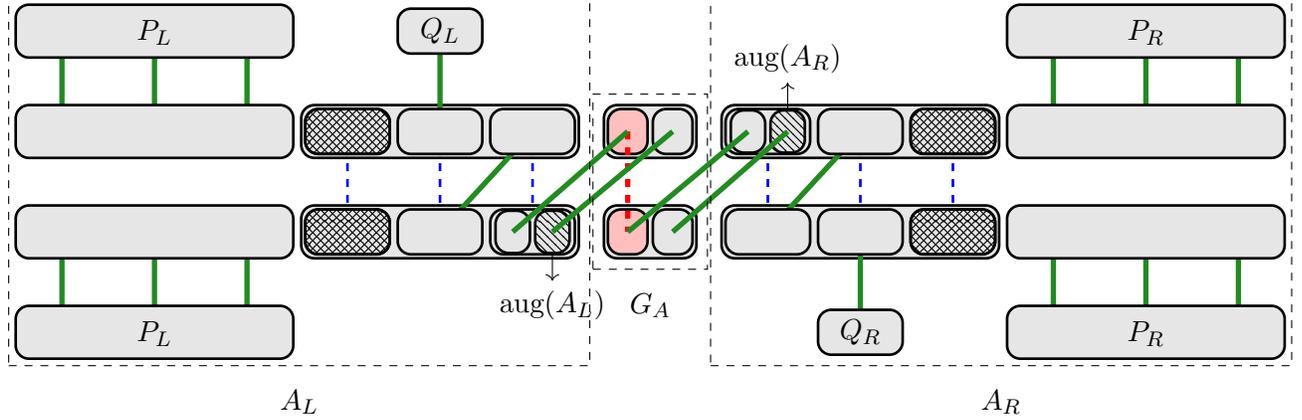

\begin{itemize}
	\item Define the following two matchings between vertices of $G_A$ and $H_L$ and $H_R$, respectively: 
	\begin{itemize}
		\item $M_{L}$: a matching between every $a_i \in \aug{A_L} \cup \baug{A_L}$ and  $v_i \in L(G_A)$;
		\item $M_{R}$: a matching between every  $a_i \in \aug{A_R} \cup \baug{A_R}$ and $v_i \in R(G_A)$.
	\end{itemize}
	\item We give the matchings $M_L$ and $M_R$, as well as edges $\bH_L$ and $\bH_R$ as input to \emph{both} players, denoted by the graph $G_2$. We also reveal the index $j_1$ \emph{but only} to Bob. 
	\item The players run the second round of the protocol by Alice sending a message to Bob, and Bob outputting the following answer defined below. 
	\item  The goal is for Bob to output as many edges as possible from the \textbf{hidden matching} $\MRS_{j_1}$ that appear in the graph $G_A$ of Alice, while outputting no edge that does not belong to $G_A$. 
\end{itemize}

This finalizes the second phase and the overall description of the game.

\begin{observation}\label{obs:game-size}
	For any parameters $(n_1,n_2,r_1,r_2)$ of $\game$, a graph $G$ sampled from $\game$ is a $(2n)$-vertex bipartite graph for $n= 8n_2-4r_2+2n_1$. 
\end{observation}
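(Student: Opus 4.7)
The plan is a direct vertex count plus a short check of the bipartition. By the description of Phases~I and~II, the vertex set of $G$ is the disjoint union of those of $G_A$, $A_L$, and $A_R$: the Phase~II objects $M_L, M_R, \bH_L, \bH_R$ either live entirely inside $A_L$ or $A_R$ (by~\Cref{def:aug-graph}) or connect existing vertices of $A_L, A_R$ to existing vertices of $G_A$, and thus contribute no new vertices.

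First, $G_A$ is a subgraph of the $(2n_1)$-vertex RS graph $\GRS_1$, contributing $2n_1$ vertices. Second, each of $A_L, A_R$ has the form $\augmentgraph(\GRS_2, \cdot, \cdot, \UU)$ with $\GRS_2$ being $(2n_2)$-vertex and $\UU$ consisting of $\ell = n_1$ vertex-disjoint $k$-sequences (one per coordinate of the length-$n_1$ characteristic vector $Y_L$, resp.\ $Y_R$). Applying~\Cref{obs:augmentation-graph} with $n \gets n_2$, $r \gets r_2$, $\ell \gets n_1$ gives $|V(A_L)| = |V(A_R)| = 8n_2 - 4r_2 + n_1$, so
\[
|V(G)| \;=\; 2n_1 + 2\paren{8n_2 - 4r_2 + n_1} \;=\; 4n_1 + 16n_2 - 8r_2 \;=\; 2\paren{8n_2 - 4r_2 + 2n_1} \;=\; 2n.
\]

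To verify the bipartition (with equal sides, per the convention in the Preliminaries), I would place $L(G_A)$ on the left and $R(G_A)$ on the right, then propagate across $M_L, M_R$: since $\aug{A_L} \cup \baug{A_L}$ consists of $a$-representatives of $R(\GRS_2)$-vertices in $H_L$ and is matched by $M_L$ to $L(G_A)$, the internal bipartition of $A_L$ inherited from the split $\rep{L(\GRS_2)}/\rep{R(\GRS_2)}$ of $H_L$ (with $Q_L$ placed opposite the starts of augmenting paths, and each half of $P_L$ placed opposite its $H_L$-neighbors) is forced into a unique assignment consistent with that of $G$; the case of $A_R$ is symmetric. A routine count using $|\rep{L(\GRS_2)}| = |\rep{R(\GRS_2)}| = 2n_2$, $|P_L|/2 = 2n_2 - 2r_2$, and $|Q_L| = n_1$ then confirms each side has exactly $n$ vertices. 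No substantive obstacle arises; the proof is mechanical bookkeeping against the definitions in Sections~\ref{sec:encoded-rs}--\ref{sec:augmentation-graphs}.
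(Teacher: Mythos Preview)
Your proof is correct and follows essentially the same approach as the paper: count vertices of $G_A$, $A_L$, and $A_R$ separately via \Cref{obs:augmentation-graph} (with $\ell=n_1$) and sum. The paper's own proof is just two sentences and simply asserts the equal bipartition, whereas you spell out the bipartition explicitly; your extra bookkeeping is sound but not required.
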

\begin{proof}
	$A_L$ and $A_R$ each has $8n_2-4r_2+n_1$ vertices by~\Cref{obs:augmentation-graph}, and $G_A$ has $2n_1$ vertices. The bipartition of $G$ into $L$ and $R$
	has equal size, thus the bound follows. 
\end{proof}

Finally, we need the following independence property. 

\begin{observation}\label{obs:game-ind}
	In $\game$, the graphs $G_A$ and $(G_B,G_2)$ are chosen {independently}.  
\end{observation}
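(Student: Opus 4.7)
The plan is to simply unpack the constructions in Phase I and Phase II and check that $G_A$ and $(G_B,G_2)$ are deterministic functions of two disjoint, independent sources of randomness. The only subtlety worth flagging --- and the closest thing to an obstacle --- is that the matchings $M_L, M_R$ in $G_2$ are literally defined using $G_A$, so one must verify that they touch $G_A$ only through its (deterministic) vertex set and not through its (random) edges.

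First I would identify the source of randomness that defines $G_A$: it is an i.i.d.\ family of Bernoulli$(1-\delta)$ indicators, one per edge of the fixed graph $\GRS_1$, determining which edges of $\GRS_1$ survive. Crucially, this edge-deletion process does not alter the vertex set, so $L(G_A)=L(\GRS_1)$ and $R(G_A)=R(\GRS_1)$ remain fixed (deterministic) sets. By the description of Phase I, these edge-retention coins are drawn independently of all other random quantities in the game, namely $j_1 \in [t_1]$ (uniform) and, given $j_1$, the samples $A_L \sim \distaug(\GRS_2,Y_L,k)$ and $A_R \sim \distaug(\GRS_2,Y_R,k)$ --- concretely realized by the tuples $(X_L,j_L,\UU_L)$ and $(X_R,j_R,\UU_R)$. (Recall $Y_L,Y_R$ are themselves deterministic functions of $j_1$ and of the fixed $\GRS_1$.)

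Next I would verify that $(G_B,G_2)$ is a deterministic function of the bundle $(j_1, X_L, j_L, \UU_L, X_R, j_R, \UU_R)$ alone (along with the fixed inputs $\GRS_1,\GRS_2$). By Phase I, $G_B = H_L \cup H_R = \encodedrs(\GRS_2,X_L) \cup \encodedrs(\GRS_2,X_R)$, which depends only on $X_L,X_R$. By Phase II, $G_2$ consists of the edges $\bH_L,\bH_R$ --- determined by $(X_L,j_L,\UU_L)$ and $(X_R,j_R,\UU_R)$ via~\Cref{def:aug-graph} --- together with the matchings $M_L,M_R$. The matchings pair $\aug{A_L}\cup\baug{A_L}$ and $\aug{A_R}\cup\baug{A_R}$ (which are themselves functions of $(X_L,j_L,\UU_L)$ and $(X_R,j_R,\UU_R)$) with the vertex sets $L(G_A)$ and $R(G_A)$; since $L(G_A)=L(\GRS_1)$ and $R(G_A)=R(\GRS_1)$ are fixed, $M_L,M_R$ do not depend on the edge-retention coins of $G_A$.

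Hence $G_A$ and $(G_B,G_2)$ are deterministic functions of two disjoint, independent sources of randomness, and the standard fact that such functions preserve independence gives the claim.
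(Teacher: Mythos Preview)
Your proof is correct and follows the same approach as the paper, which simply asserts in one line that ``the choice of $G_A$ from $\GRS_1$ is independent of all other variables in the game.'' Your version is considerably more careful, in particular by explicitly flagging and resolving the subtlety that $M_L,M_R$ reference $L(G_A),R(G_A)$ but only through the deterministic vertex sets $L(\GRS_1),R(\GRS_1)$; this is exactly the point one should check, and the paper leaves it implicit.
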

\begin{proof}
	The choice of $G_A$ from $\GRS_1$ is independent of all other variables in the game. 
\end{proof}

\subsubsection*{Cost and Value of Protocols for $\game$} 
We conclude with the following  definitions on the performance of protocols for \game. 

\begin{definition}
	Let $\prot$ be a protocol for the $\game$ game. We define: 
	\begin{itemize}
		\item $\cost{\prot}$: the \textbf{communication cost} of $\prot$, which is the \underline{worst-case} number of bits communicated by Alice and Bob in $\prot$ on any input to $\game$.  
		\item $\out{\prot,G}$: the \textbf{output} of $\prot$ on input $G$ sampled from $\game$, which is the set of edges output by Bob that belong to the hidden matching. We will denote $\out{\prot,G} = \emptyset$ if Bob outputs an
		edge that does \underline{not} belong to the input $G_A$ of Alice. 
		\item $\val{\prot}$: The \textbf{value} of $\prot$ is the \underline{expected value} of size of outputs of $\prot$ on inputs sampled from $\game$, i.e., 
		\[
			\val{\prot} := \Exp_{G} \card{\out{\prot,G}}.
		\]
	\end{itemize}
\end{definition}

Our goal in analyzing $\game$ is then to understand the tradeoff between the communication cost and the value obtained by protocols for this game. 

\subsection{Hidden-Matching Game and Streaming Maximum Matching}\label{sec:game-stream} 

We conclude this section by establishing a connection between best possible performance of protocols for $\game$ and the streaming complexity of maximum matching. This will in turn allows us to prove lower bounds for 
streaming matching via lower bounding communication cost of protocols for $\game$. Formally, 

\begin{lemma}\label{lem:game-stream}
	Consider the parameters $(n_1,n_2,r_1,r_2,\delta)$ of $\game$. 
	Suppose there exists a two-pass streaming algorithm with space $s(n)$ on $(2n)$-vertex bipartite graphs for $n= 8n_2-4r_2+2n_1$ that with probability at least $2/3$ achieves a $\paren{1-\beta}$-approximation
	to maximum matching for
	\[
		\beta = \beta(n_1,n_2,r_1,r_2,\delta) = \frac{(1-4\delta) \cdot r_1}{n - (1+2\delta) \cdot r_1}.
	\]
	Then, there is a protocol $\prot$ for $\game$ with:
	\[ 
	\cost{\prot} = O(s(n)) \qquad \text{and} \qquad  \val{\prot} \geq \delta \cdot r_1.
	\] 
\end{lemma}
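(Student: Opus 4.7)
The plan is to simulate the two-pass streaming algorithm $\mathcal{A}$ inside a three-message protocol for $\game$ by presenting it with the stream $G_A \circ G_B \circ G_2$ and shipping $\mathcal{A}$'s memory state between the players whenever ownership of the ``current edges'' changes. Concretely, in Phase I Alice runs $\mathcal{A}$ on $G_A$ and sends its memory (of $s(n)$ bits) to Bob; Bob continues the first pass on $G_B = H_L \cup H_R$ and returns the state to Alice. In Phase II, Alice completes the first pass on the shared edges $G_2 = \bH_L \cup \bH_R \cup M_L \cup M_R$, begins the second pass on $G_A$, and sends the resulting state to Bob. Bob finishes the second pass on $G_B$ and then on $G_2$, extracts the matching $M$ produced by $\mathcal{A}$, and outputs $M \cap \MRS_{j_1}$ (which he can compute since $j_1$ is revealed to him in Phase II). Since any edge of $\MRS_{j_1}$ lies inside $G_A$, every edge Bob returns automatically belongs to Alice's input, so $\out{\prot,G} = M \cap \MRS_{j_1}$. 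Three messages of $s(n)$ bits each give $\cost{\prot} = O(s(n))$.

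The heart of the analysis is a tight characterization of $|M^*(G)|$. Let $X$ denote the number of $\MRS_{j_1}$-edges surviving in $G_A$. On the lower bound side, \Cref{lem:augmentation-matching} applied to $A_L$ (resp.\ $A_R$) provides a matching of size $4n_2 - 2r_2$ avoiding $\aug{A_L}$ (resp.\ $\aug{A_R}$); combining these with the $X$ present hidden edges, the $n_1-r_1$ edges of $M_L$ (resp.\ $M_R$) that reach $\aug{A_L}$ (resp.\ $\aug{A_R}$), and a local reroute that uses $M_L$ (resp.\ $M_R$) on the $r_1-X$ endpoints whose $\MRS_{j_1}$-partner was dropped at the price of giving up exactly $r_1-X$ edges of the internal matching of $A_L$ (resp.\ $A_R$), one obtains a matching in $G$ of size $X + n - 2r_1$. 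On the upper bound side, the vertex cover of \Cref{lem:augmentation-vc} containing $\baug{A_L}$ (resp.\ $\baug{A_R}$), augmented with $L(G_A)\setminus L(\MRS_{j_1})$ and $R(G_A)\setminus R(\MRS_{j_1})$ to cover all $M_L,M_R$ edges reaching $\aug{A_L},\aug{A_R}$ and with the endpoints of any used $\MRS_{j_1}$-edges, yields, for every matching $M$ in $G$, the bound $|M| \leq j_M + (n - 2r_1)$, where $j_M$ is the number of $\MRS_{j_1}$-edges in $M$. Thus $|M^*(G)| = X + n - 2r_1$ and every $(1-\beta)$-approximate matching $M$ satisfies
\[
j_M \;\geq\; (1-\beta)(X + n - 2r_1) - (n - 2r_1) \;=\; X - \beta(X + n - 2r_1) \;\geq\; X - \beta(n - r_1),
\]
where the last inequality uses $X \leq r_1$.

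To finish, I take expectations over $G$ and the internal randomness of $\mathcal{A}$. Since each edge of $\GRS_1$ survives independently with probability $1-\delta$, $\Exp[X] = (1-\delta)r_1$, while $\mathcal{A}$ is $(1-\beta)$-approximate with probability at least $2/3$ (boostable to $1-o(1)$ by a constant-factor space overhead whenever the bookkeeping requires it). The specific choice $\beta = (1-4\delta)r_1 / (n - (1+2\delta)r_1)$ ensures that $\beta(n-r_1)$ exceeds $(1-4\delta) r_1$ only by the mild factor $(n-r_1)/(n-(1+2\delta)r_1)$; combined with $\Exp[X] = (1-\delta) r_1$ this leaves a $\approx 3\delta r_1$ gap, which comfortably absorbs the $2/3$ success discount to yield $\val{\prot} = \Exp\card{\out{\prot,G}} \geq \delta \cdot r_1$. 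The main obstacle is the structural analysis in the middle paragraph: one must verify that occupying any $r_1-X$ vertices of $\baug{A_L}$ by $M_L$ reduces the maximum matching in $A_L$ by exactly $r_1-X$, which critically relies on those vertices lying inside the minimum vertex cover provided by \Cref{lem:augmentation-vc} and is precisely what makes the upper and lower bounds on $|M^*(G)|$ meet and drives the linear relationship $j_M \gtrsim X - \beta(n-r_1)$ that the value bound hinges on.
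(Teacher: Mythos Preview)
Your protocol construction, cost bound, and structural analysis are exactly the paper's; in particular your upper bound $|M|\le j_M+(n-2r_1)$ is precisely the paper's vertex-cover claim, and your lower bound $|M^*(G)|\ge X+n-2r_1$ is the paper's matching claim before applying Chernoff to $X$. The ``local reroute'' you describe is a red herring: swapping $r_1-X$ edges of $M_L$ for $r_1-X$ internal edges of $A_L$ leaves the matching size unchanged, so the lower bound already follows without it, and the equality $|M^*(G)|=X+n-2r_1$ is never needed.

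The only real rough edge is your last paragraph. The claim ``boostable to $1-o(1)$ by a constant-factor space overhead'' is false (that requires $\omega(1)$ parallel copies) and, more importantly, unnecessary. Your expectation argument also glosses over two points: $X-\beta(n-r_1)$ can be negative, and the success event $S$ is correlated with $X$ through $G$. Both are easily handled---use $\Pr[S\mid G]\ge 2/3$ for each fixed $G$, write $\val{\prot}\ge (2/3)\,\Exp_G[\max(0,X-\beta(n-r_1))]\ge (2/3)\,((1-\delta)r_1-\beta(n-r_1))$ by convexity, and then check that the last quantity is $\ge \tfrac{3}{2}\delta r_1$ using $n\ge 6r_1$---but you did not do this calculation. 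The paper sidesteps all of this by first conditioning on the Chernoff event $X\ge(1-2\delta)r_1$ and then exploiting the exact identity $(1-\beta)(n-(1+2\delta)r_1)=n-2r_1+2\delta r_1$, which is precisely why $\beta$ was defined as it was; this directly gives $j_M\ge 2\delta r_1$ on an event of probability $>1/2$, hence $\val{\prot}\ge \delta r_1$.
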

\begin{proof}
	The proof follows the standard simulation of streaming algorithms via communication protocols. 
	
	Let $\ALG$ be the given streaming algorithm. Define the stream $\sigma = G_A \circ G_B \circ G_2$. We create the protocol $\prot$ as follows: 
\begin{tbox}	
	\underline{The protocol $\prot$ for reduction between $\game$ and streaming maximum matching:} 
	
	\begin{itemize}
	\item \textbf{Phase I:} Alice runs $\ALG$ on $G_A$ and sends  the memory content to Bob; Bob then runs $\ALG$ on $G_B$ and sends back the memory content to Alice. 
	\item \textbf{Phase II:} Alice runs $\ALG$ on $G_2$; at this point, $\ALG$ has made one pass over $\sigma$. Then, Alice runs $\ALG$ again on $G_A$ and sends the memory content to Bob. Bob continues running $\ALG$ on $G_B \circ G_2$, to finish two passes of $\ALG$ on $\sigma$. 
	\item \textbf{Answer:} Let $M$ be the matching returned by $\ALG$ on $\sigma$. Bob outputs all edges of $M$ which belong to $\MRS_{j_1}$ in $\GRS_1$; recall that the index $j_1$ is revealed to Bob in the second phase and $\GRS_1$ is known a-priori.  
	\end{itemize}
\end{tbox}

	It is straightforward to verify that $\prot$ is a valid protocol for $\game$. Moreover, as messages of players in $\prot$ corresponds to the memory content of $\ALG$, we have that $\cost{\prot} = O(s(n))$. We now analyze the value of this protocol in the following two claims. 
	
	\begin{claim}\label{clm:game-perfect}
		W.p. $1-o(1)$, a graph $G \sim \game$ has a matching $M^*$ of size $n-(1+2\delta) \cdot r_1$. 
	\end{claim}
	\begin{proof}
		We create the matching $M^*$ in the following steps: 
		\begin{enumerate}[label=$(\roman*)$]
			\item By~\Cref{lem:augmentation-matching}, there are matchings $M^*_L$ and $M^*_R$ in subgraphs $A_L$ and $A_R$ of $G$, respectively, with size $4n_2 - 2r_2$ each. We add these edges to $M^*$.  
			\item By~\Cref{lem:augmentation-matching}, the matching $M^*$ so far leaves augmentation vertices in $\aug{A_L}$ and $\aug{A_R}$ unmatched. We additionally match these augmentation vertices to $n_1-r_1$ vertices of $L(\GRS_1) \setminus L(\MRS_{j_1})$ and $R(\GRS_1) \setminus R(\MRS_{j_1})$, respectively, using the edges in $M_L$ and $M_R$. 
			\item At this point, the only unmatched vertices in $M^*$ are vertices of $\MRS_{j_1}$;  we can match these to each other using the edges of $\MRS_{j_1}$ that appear in $G_A$. 
			Since each edge is deleted independently with probability $\delta$, and by Chernoff bound ($r_1 \gg \delta^{-1}$ as the latter is constant), w.p. $1-o(1)$, we can match at least $(1-2\delta) \cdot r_1$ edges here as well. This concludes the construction of $M^*$. 
		\end{enumerate}
		The size of $M^*$ is now: 
		\[
			\card{M^*} = 2 \cdot (4n_2 - 2r_2) + 2 \cdot (n_1 - r_1) + (1-2\delta) \cdot r_1 = n - (1+2\delta) \cdot r_1,  
		\]
		w.p. $1-o(1)$, as desired. \Qed{clm:game-perfect} 
		
	\end{proof}
	\begin{claim}\label{clm:game-alg}
		For any graph $G \sim \game$, size of any matching in $G$ that does not use edges of the hidden matching, i.e., size of maximum matching in $G \setminus \MRS_{j_1}$, is at most $n-2r_1$. 
	\end{claim}
	\begin{proof}
		Define $\tG := G \setminus \MRS_{j_1}$. We prove that the minimum vertex cover size in $\tG$ is of size $n-2r_1$, which immediately proves the lemma by the duality of maximum matching and minimum vertex cover (\Cref{fact:vc-matching}). 
		We create this vertex cover $V^*$ (of $\tG$) as follows: 
		\begin{enumerate}[label=$(\roman*)$]
			\item By~\Cref{lem:augmentation-vc}, there are vertex covers $V^*_L$ and $V^*_R$ for subgraphs $A_L$ and $A_R$ of $\tG$, respectively, with size $4n_2 - 2r_2$ each. We add these to $V^*$
			\item Again by~\Cref{lem:augmentation-vc}, the vertex cover $V^*$ currently includes all vertices in $\baug{A_L}$ and $\baug{A_R}$. Recall that these vertices are connected by matchings $M_L$ and $M_R$ to 
			vertices of $L(\MRS_{j_1})$ and $R(\MRS_{j_1})$. Thus, these subset of edges of $M_L$ and $M_R$ are also already covered. 
	
			\item We further add vertices in $L(\GRS_1) \setminus L(\MRS_{j_1})$ and $R(\GRS_1) \setminus R(\MRS_{j_1})$ to $V^*$. This will cover all remaining edges of $M_L$ and $M_R$, as well as any edge in $G_A$ 
			which does not belong to $\MRS_{j_1}$. As such $V^*$ at this point is a vertex cover of $\tG$. 
		\end{enumerate}
		The size of $V^*$ is now: 
		\[
			\card{V^*} = 2 \cdot (4n_2 - 2r_2) + 2 \cdot (n_1 - r_1) = n-2r_1,
		\]
		as desired. \Qed{clm:game-alg}
		
	\end{proof}
	We continue with the proof of~\Cref{lem:game-stream}. Conditioned on the event of~\Cref{clm:game-perfect} and that $\ALG$ outputs a $(1-\beta)$-approximation (which happens w.p. $2/3$), we have that with probability $2/3 - o(1) > 1/2$, 
	$\ALG$ outputs a matching of size at least 
	\[
		(1-\beta) \cdot \paren{n-(1+2\delta) \cdot r_1} = (1-\frac{(1-4\delta) \cdot r_1}{n - (1+2\delta) \cdot r_1}) \cdot \paren{n-(1+2\delta) \cdot r_1} = n-2r_1 + 2\delta \cdot r_1. 
	\]
	Combining this with~\Cref{clm:game-alg}, we have that, with probability at least $1/2$, the matching output by $\ALG$ contains $2\delta \cdot r_1$ edges from $\MRS_{j_1}$ that belong to $G_A$ (and no edge that does not belong to $G$ as $\ALG$ does not err in this case). Given that $\Prot$ will output 
	all these edges in this case, we have, 
	\[
		\val{\Prot} = \Exp_{G} \card{\out{\Prot,G}} \geq \frac{1}{2} \cdot 2\delta \cdot r_1 = \delta \cdot r_1. 
	\]
	This concludes the proof of~\Cref{lem:game-stream}. \Qed{lem:game-stream}
	
\end{proof}


\section{The Lower Bound for the \textbf{Hidden-Matching} Game}\label{sec:lower}

We prove our main lower bound for the $\game$ in this section. 

\begin{theorem}\label{thm:game}
	Any protocol $\prot$ (deterministic or randomized) for $\game$ with 
	\[
	\cost{\prot} = \min\set{o(t_2 \cdot r_2^{1-2/k}), o(t_1 \cdot r_1)},  
	\]
	can only have $\val{\prot} = o(r_1)$. 
\end{theorem}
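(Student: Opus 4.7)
The plan is to decouple the hardness of $\game$ into two complementary ``hiding'' obstacles acting on each player during Phase I, and then argue that, conditioned on these hiding events, no amount of Phase II communication (still of total cost $o(t_1 \cdot r_1)$) can produce many valid edges of $\MRS_{j_1}$. The overall contrapositive I would establish is: if $\val{\prot} = \Omega(r_1)$, then either Alice's total communication must reveal $\Omega(t_1 \cdot r_1)$ bits about $G_A$ (averaged over the uniform choice of $j_1$), or Bob's Phase I message must encode $\Omega(t_2 \cdot r_2^{1-2/k})$ bits about the partition $\aug{A_L} \sqcup \baug{A_L}$ (and its right-hand counterpart).

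For Alice's side, I would carry out a GKK-style~\cite{GoelKK12} entropy argument. Because $G_A$ is obtained from $\GRS_1$ by independent $\delta$-sampling of each edge, its entropy decomposes as a sum over the $t_1$ induced matchings of $\GRS_1$, each contributing $\Omega(r_1)$ bits. Together with~\Cref{obs:game-ind}, this makes $G_A$ independent of Bob's Phase I input. I would then use the chain rule to show that when Alice's total communication is $o(t_1 \cdot r_1)$ bits, the expected entropy of the edges of $\MRS_{j_1} \cap G_A$ conditioned on the full transcript is still within $o(1)$ of its unconditional value; in particular, Bob's posterior over any individual edge of $\MRS_{j_1} \cap G_A$ is essentially uniform over its support. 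Hence Bob cannot output more than $o(r_1)$ surviving edges of $\MRS_{j_1}$ without incurring a noticeable error probability, which would collapse $\out{\prot,G}$ to $\emptyset$.

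The technically hard side is Bob's. Even granting Alice the ability to encode $G_A$, Bob still needs to identify which vertices of $L(G_A)$ and $R(G_A)$ correspond to $L(\MRS_{j_1})$ and $R(\MRS_{j_1})$; by construction of $M_L, M_R$, this is exactly the partition $\aug{A_L} \sqcup \baug{A_L}$ and its right counterpart, which by~\Cref{obs:aug-path} is determined by XORs of $k$ entries from a single row of $X_L$ (resp.\ $X_R$) along the unknown tuples in $\UU_L$ inside the unknown special induced matching $\MRS_{j_L}$. The bulk of the proof is a new XOR lemma asserting that any $o(t_2 \cdot r_2^{1-2/k})$-bit first-phase message by Bob predicts these $\Theta(r_2/k)$ XOR gates with only $o(1)$ total advantage over uniform guessing. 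The key obstacle is that, unlike in~\cite{AssadiN21,ChenKPS0Y21}, these XOR gates are \emph{not} independent: they are all imposed on a single induced matching of one RS graph, and ``blind guessing of the induced matching'' already yields a $1/t_2$ success probability, precluding any low-probability-style bound. I would instead combine $(a)$ an information-theoretic direct-sum step using the uniform choice of $j_L$ and the subadditivity of entropy to isolate, with constant probability, an induced matching whose conditional entropy given the transcript is $(1-o(1))$ of its unconditional value, with $(b)$ a Fourier analysis argument in the style of~\cite{GavinskyKKRW07} on this ``high-entropy event'': the level-$k$ Fourier mass of any function that depends on $o(t_2 \cdot r_2^{1-2/k})$ bits of the transcript is small, and the XOR predicates we need are precisely level-$k$ characters, hence have small Fourier coefficients.

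Finally, I would combine the two hiding claims and handle Phase II. Using~\Cref{obs:game-ind} and the two arguments above, the joint posterior after Phase I satisfies: $(i)$ Bob cannot determine the partition $\aug{A_L} \sqcup \baug{A_L}$ (and similarly on the right) for more than $o(r_1)$ indices, so even knowing $j_1$ in Phase II and receiving $M_L, M_R, \bH_L, \bH_R$, Bob can identify at most $o(r_1)$ ``candidate'' edges on $L(\MRS_{j_1}) \times R(\MRS_{j_1})$ as those he would output; and $(ii)$ for any candidate edge, Alice's Phase II message of length $o(t_1 \cdot r_1)$ (together with her Phase I message and the symmetric Phase II data, which by~\Cref{obs:game-ind} carry no further information about $G_A$) can inform Bob whether the edge survives in $G_A$ only with constant error probability. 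Since any single incorrect output zeroes out $\out{\prot,G}$, a careful expectation calculation then gives $\val{\prot} = o(r_1)$, completing the proof.
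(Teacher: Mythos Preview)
Your proposal has the right technical ingredients---the XOR lemma via an information-theoretic direct-sum plus a KKL/Fourier step, and a GKK-style entropy average over the $t_1$ induced matchings---but the roles of Alice and Bob are inverted, and this is not cosmetic.

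In Phase~II, Bob is \emph{given} $j_1$ directly, and he already holds $H_L,H_R$ from Phase~I together with $\bH_L,\bH_R$ from $G_2$; hence Bob can trace every augmenting path and fully reconstruct $\aug{A_L}\sqcup\baug{A_L}$ (and the right-side analogue). Your claim~$(i)$ that ``Bob cannot determine the partition'' is therefore false, and the whole ``candidate edges'' step built on it collapses.

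The XOR lemma is actually a statement about what Bob's Phase-I message reveals \emph{to Alice}: $\Prot_{B1}$ is an $o(t_2\cdot r_2^{1-2/k})$-bit compression of $(H_L,H_R)$, and your (a)+(b) argument shows that this compression together with $\bH_L,\bH_R$ does not let \emph{Alice} learn $Y_L,Y_R$, hence $j_1$. This is exactly the paper's \Cref{lem:phase-one} via \Cref{lem:xor-lemma}. Crucially, this is a \emph{prerequisite} for your Alice-side direct-sum, not a parallel obstacle: if Alice knew $j_1$ before sending $\Prot_{A2}$, she could spend her entire $o(t_1 r_1)$ budget on $\MRS_{j_1}\cap G_A$ alone and convey all $\Theta(r_1)$ bits, so the average over $j_1$ would fail. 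The paper makes this dependency explicit through the conditional-independence \Cref{clm:A2-j1-ZA,clm:GA-j1-ZA-ProtA2} and then \Cref{clm:index-lb}. Once the correct chain is in place---XOR lemma $\Rightarrow$ Alice is oblivious to $j_1$ $\Rightarrow$ direct-sum gives $\en{\rG_A\cap\MRS_{\rJ_1}\mid \rZ_B}\geq H_2(\delta)r_1 - o(r_1)$ $\Rightarrow$ Bob cannot output many surviving edges---your ingredients do assemble into the paper's proof. (Also, in your Alice-side paragraph, ``within $o(1)$ of its unconditional value'' should read ``within $o(r_1)$''.)
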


As a direct corollary of this theorem and~\Cref{lem:game-stream}, we obtain the following result for semi-streaming maximum matching problem. 

\begin{corollary}\label{cor:stream-lower-RS}
	Suppose that for infinitely many choices of $N \geq 1$, 
	there exists  $(r,t)$-RS $(2N)$-vertex bipartite graphs such that $r = \alpha \cdot N$ and $t=N^{\beta}$ for some parameters $\alpha$ and $\beta$. The parameters $\alpha$ and $\beta$ can depend on $N$ and we only 
	assume that $\alpha = \Omega(1/\log{N})$ and $\beta = \Omega(1/\log\log{N})$ \footnote{Given that there is already an RS graph construction with $\alpha = 1/2 - o(1)$ and $\beta = \Omega(1/\log\log{N})$ by~\cite{GoelKK12}, this assumption is without loss of generality -- this assumption is only made to simplify the calculations in the proof and in general is not needed.}.
	
	Then, any two-pass semi-streaming algorithm for the maximum matching problem that outputs a correct answer with probability at least $2/3$ cannot achieve 
	an approximation factor better than 
	\[
		1- \frac{\alpha}{\frac{16}{\alpha \cdot \beta} -  \frac{8}{\beta} + 2 - \alpha} \cdot (1-o(1)). 
	\]
\end{corollary}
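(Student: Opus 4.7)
The plan is to instantiate \game with the hypothesized RS graphs and then combine \Cref{thm:game} with \Cref{lem:game-stream} via contradiction. Concretely, I take $\GRS_1$ to be a $(2n_1)$-vertex $(r_1,t_1)$-RS graph with $r_1 = \alpha \cdot n_1$ and $t_1 = n_1^{\beta}$, and $\GRS_2$ to be a $(2n_2)$-vertex $(r_2,t_2)$-RS graph with $r_2 = \alpha \cdot n_2$ and $t_2 = n_2^{\beta}$, setting $n_2 := (k+\delta)\,n_1/\alpha$ so that the game's requirement $r_2 = (k+\delta)\,n_1$ is met. The crucial parameter to tune is $k$: I take $k$ to be any integer strictly greater than $2/\beta$ (e.g.\ $k = \lceil 2/\beta\rceil + 1$), which is $O(\log\log N)$ under the hypothesis $\beta = \Omega(1/\log\log N)$, consistent with the ``doubly-logarithmic'' regime permitted by \game. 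By \Cref{obs:game-size}, the resulting bipartite graphs in \game have $2n$ vertices with
\[
n = 8n_2 - 4r_2 + 2n_1 = n_1 \cdot D, \qquad D := \tfrac{8(k+\delta)}{\alpha} - 4(k+\delta) + 2.
\]

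Next, I argue by contradiction. Suppose some two-pass semi-streaming algorithm achieves a $(1-\beta_{apx})$-approximation with probability $\geq 2/3$ in space $s(n) = O(n\polylog n)$, where $\beta_{apx}$ is strictly less than the bound claimed by the corollary. Under our parameter choices, the quantity $\beta(n_1,n_2,r_1,r_2,\delta)$ of \Cref{lem:game-stream} simplifies to $\tfrac{(1-4\delta)\alpha}{D - (1+2\delta)\alpha}$, which I will show approaches the corollary's bound from above as $\delta\to 0$ and $k\to 2/\beta$. By \Cref{lem:game-stream}, the algorithm yields a protocol $\prot$ for \game with $\cost{\prot} = O(s(n)) = O(n\polylog n)$ and $\val{\prot} \geq \delta \cdot r_1 = \Omega(r_1)$. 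To contradict \Cref{thm:game}, I verify both $\cost{\prot} = o(t_1 r_1)$ and $\cost{\prot} = o(t_2 \cdot r_2^{1-2/k})$. The first is immediate: $t_1 r_1 = \alpha \cdot n_1^{1+\beta} = \Theta(n^{1+\beta})$, which dominates $n\polylog n$ by a polynomial factor $n^{\beta - o(1)}$. The second is precisely where the choice $k > 2/\beta$ is needed: since $r_2 = \Theta(n_2) = \Theta(n)$,
\[
t_2 \cdot r_2^{1-2/k} = \Theta\Paren{n^{\beta + 1 - 2/k}} = \Theta\Paren{n^{1 + (\beta - 2/k)}},
\]
and $\beta - 2/k > 0$ by construction, again giving a polynomial gap over $n\polylog n$. \Cref{thm:game} then forces $\val{\prot} = o(r_1)$, contradicting $\val{\prot} = \Omega(r_1)$.

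Finally, I would extract the quantitative bound by examining $\beta_{apx} = \tfrac{(1-4\delta)\alpha}{D - (1+2\delta)\alpha}$ as $\delta\to 0$ and $k$ is driven down to $2/\beta$: the denominator $D - (1+2\delta)\alpha$ tends to $\tfrac{16}{\alpha\beta} - \tfrac{8}{\beta} + 2 - \alpha$ up to a $(1+o(1))$ multiplicative factor, and the numerator $(1-4\delta)\alpha$ to $\alpha\cdot(1-o(1))$, yielding
\[
\beta_{apx} \;\geq\; \frac{\alpha}{\tfrac{16}{\alpha\beta} - \tfrac{8}{\beta} + 2 - \alpha} \cdot (1-o(1)),
\]
which is the corollary's bound (recall the approximation ratio is $1-\beta_{apx}$). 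The main obstacle I anticipate is the delicate asymptotic bookkeeping among the four slowly-varying quantities $\alpha,\beta,k,\delta$: one must confirm that $\alpha = \Omega(1/\log N)$ and $\beta = \Omega(1/\log\log N)$ make the gaps $\beta$ and $\beta - 2/k$ polynomially large enough for the ``$o$'' in \Cref{thm:game} to absorb the semi-streaming polylog factors, and simultaneously ensure $\delta \cdot r_1$ is genuinely $\Omega(r_1)$ (rather than merely $\omega(\polylog n)$). Since $k$ is doubly-logarithmic and can be chosen so that $\beta - 2/k = \Omega(\beta) = \Omega(1/\log\log N)$, both requirements reduce to the fact that any $n^{\Omega(1/\log\log n)}$ dominates $n\polylog n$; the rest is routine substitution.
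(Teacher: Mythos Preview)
Your proposal is correct and follows essentially the same approach as the paper's own proof: instantiate $\game$ with the hypothesized RS graphs, choose $k$ just above $2/\beta$ and $\delta$ vanishingly small, then combine \Cref{lem:game-stream} with \Cref{thm:game}. The paper sets $k = \tfrac{2}{(1-\delta)\beta}$ (so $\beta - 2/k = \delta\beta$) and verifies the single bound $n\polylog n = o(t_1 r_1^{1-2/k})$, noting that $t_1 \leq t_2$, $r_1 \leq r_2$ makes this dominate both terms in \Cref{thm:game}; you instead verify the two cost conditions separately, which is equivalent. Two cosmetic points: your claims $t_1 r_1 = \Theta(n^{1+\beta})$ and $r_2 = \Theta(n)$ suppress factors of $\alpha$ (and $k$), but since $\alpha = \Omega(1/\log N)$ these are absorbed by the $n^{\Omega(1/\log\log n)}$ slack, so correctness is unaffected.
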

\begin{proof}
	The proof of this corollary is simply by calculating the values of various parameters in $\game$ and then applying~\Cref{lem:game-stream} to get the semi-streaming lower bound. For our proof, 
	we use $\game$ with the following parameters: 
		\begin{align*}
			&\delta = \text{a vanishingly small {constant}}; \\ 
			&k = \frac{2}{(1-\delta) \cdot \beta}; \\
			&n_1 = N, \quad r_1 = \alpha \cdot N, \quad t_1 = N^{\beta}; \\
			&n_2 = (k+\delta) \cdot \frac{N}{\alpha}, \quad r_2 = (k+\delta) N, \quad t_2 = ((k+\delta) \cdot \frac{N}{\alpha})^{\beta}; \\
			&n = 8n_2-4r_2+2n_1 = \Theta(k \cdot N/\alpha); 
		\end{align*}
	By the promise of corollary statement on the existence of the prescribed RS graph, and using this graph family as $\GRS_1$ and $\GRS_2$, one can verify that the above parameters match those the construction of $\game$. 
	
	By~\Cref{thm:game}, any protocol $\prot$ with 
	\[
		\cost{\prot} = n \cdot \poly\!\log{(n)} \ll n^{1+\Omega(1/\log\log{n})} \ll o(N^{\beta} \cdot (\alpha \cdot N)^{1-2/k}) = o(t_1 \cdot r_1^{1-2/k}) 
	\]
	will have $\val{\prot} = o(r_1)$ (note that  $t_1 \leq t_2$ and $r_1 \leq r_2$ in the parameters above and so $\cost{\prot}$ is even smaller than the min-term in~\Cref{thm:game}). Plugging in this bound in~\Cref{lem:game-stream}, 
	implies that the best approximation ratio achievable by any semi-streaming algorithm will be 
	\[
		1-\frac{(1-4\delta) \cdot r_1}{n - (1+2\delta) \cdot r_1} = 1- \frac{r_1}{n-r_1} \cdot (1-o(1)) = 1-\frac{\alpha}{\frac{16}{\alpha \cdot \beta} - \frac{8}{\beta} + 2 - \alpha} \cdot (1-o(1)).
	\]
	This concludes the proof. 
\end{proof}

\paragraph{Implications of~\Cref{cor:stream-lower-RS}.} Before we move on, let us instantiate this lower bound for different choices of $\alpha$ and $\beta$, given the state-of-the-art on density of RS graphs. 

\begin{itemize}
	\item The \textbf{current best construction} of RS graphs with $\alpha = \Omega(1)$, allows for setting $\beta = \Omega(\frac{1}{\log\log{N}})$. \emph{In case this construction turns out to be the best possible}, 
	then the lower bound in~\Cref{cor:stream-lower-RS} would be $\paren{1-(\frac{\theta}{\log\log{n}})}$-approximation for some absolute constant $\theta \in (0,1)$. 
	\item The \textbf{current best upper bound} on the density of RS graphs for $\alpha = 1/2-o(1)$  forces $\beta$ to be at most $1-\Omega(\frac{\log\log{N}}{\log{N}})$. \emph{In case this upper bound turns out to be  the best possible},
	then the lower bound in~\Cref{cor:stream-lower-RS} would be (at least) $0.98$-approximation. 
	\item In general, \emph{in case there is any RS graph with both $\alpha,\beta$ being a constant}, then the lower bound in~\Cref{cor:stream-lower-RS} would be $(1-\Omega(1))$-approximation. This can be seen as either: 
	\begin{itemize}
		\item a \emph{conditional} lower bound
	that rules out small-constant approximation algorithms for matching under the plausible hypothesis that both $\alpha,\beta$ can be constant; or alternatively, 
	\item a \emph{barrier} result showing that getting (sufficiently) small-constant approximation algorithms to matching, requires (at the very least) improving the current best bounds on density of RS graphs  from $O(N^2/\log{N})$ (for $\alpha=1/2-o(1)$ case) and $O(N^2/2^{O(\log^*{N})})$ (for arbitrary constant $\alpha$)  all the way to $N^{1+o(1)}$ edges.  
	\end{itemize}
\end{itemize}



\subsection{Setup and Notation} \label{sec:lower-1}

In the following, we fix a choice of parameters $(k,n_1,n_2,r_1,r_2,\delta)$ for the $\game$. Let $\prot$ be any protocol for \game with $\cost{\prot}$ as in~\Cref{thm:game}. 
Since $\game$ is a distributional game, we can assume without loss of generality by the easy direction of Yao's minimax principle that $\prot$ is deterministic. 
We shall upper bound $\val{\prot}$ in our proof. 

We will use the following notation: 
\begin{itemize}[leftmargin=10pt]
	\item $\Prot_{A1}$: the message of Alice in phase one -- $\Prot_{A1}$ is a deterministic function of $G_A$; 
	\item $\Prot_{B1}$: the message of Bob in phase one -- $\Prot_{B1}$ is a deterministic function of $\Prot_{A1}$ and $G_B$; 
	\item $Z_A = (\Prot_{A1},\Prot_{B1},G_2)$: the \emph{extra} information known to Alice in the second phase; 
	\item $\Prot_{A2}$: the message of Alice in phase two -- $\Prot_{A2}$ is a deterministic function of $G_A$ and $Z_A$;
	\item $Z_B = (Z_A, \Prot_{A2}, j_1)$: the \emph{extra} information known to Bob in the second phase; 
	\item $\Prot_{B2}$: the output edges of Bob in phase two -- $\Prot_{B2}$ is a deterministic function of $G_B$ and $Z_B$. 
\end{itemize}
(Recall that sans serif fonts will refer to random variables for above parameters, e.g., $\rProt_{A1}$ denote the random variable for $\Prot_{A1}$).

\subsubsection*{Conditional Independence Properties}
 We establish the following conditional independence properties between the random variables above that will be crucial for our proofs. 

\begin{claim}[``Alice's second message does not depend on the hidden matching'']\label{clm:A2-j1-ZA}
	\[\rProt_{A2} \perp \rJ_1 \mid \rZ_A.\] 
\end{claim}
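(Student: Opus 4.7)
The plan is to reduce the claim to a statement purely about Alice's private randomness. My key observation will be that $\rProt_{A2}$ is a deterministic function of $(\rG_A,\rZ_A)$, so its conditional distribution given $\rZ_A$ is just the pushforward of that of $\rG_A$; it therefore suffices to establish the stronger assertion
\[
    \rG_A \;\perp\; \rJ_1 \;\Big|\; \rZ_A.
\]

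I will prove this by splitting the randomness of $\game$ into two independent blocks: block (a) is the vector of edge-deletion coins that produce $\rG_A$ from $\GRS_1$, and block (b) consists of $\rJ_1$ together with the ``right-side'' random choices $\rv{W} := (j_L,X_L,\UU_L,j_R,X_R,\UU_R)$ used to sample $\rv{A}_L$ and $\rv{A}_R$. These two blocks are independent by construction (a direct consequence of \Cref{obs:game-ind}, since $\rG_A$ is produced from coins entirely separate from those generating $\rv{A}_L,\rv{A}_R$). Under this split I will verify that $\rG_A$ and $\rProt_{A1}=\rProt_{A1}(\rG_A)$ depend only on block (a); that $\rG_B=(\rv{H}_L,\rv{H}_R)$ and $\rG_2=(\rv{M}_L,\rv{M}_R,\bar\rv{H}_L,\bar\rv{H}_R)$ depend only on $\rv{W}$; and that $\rProt_{B1}$ is a function of $(\rProt_{A1},\rG_B)$, hence of $(\rProt_{A1},\rv{W})$. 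A one-line Bayes calculation, using $\rG_A\perp (\rJ_1,\rv{W})$ together with the fact that $\rProt_{A1}$ is a function of $\rG_A$, will then yield
\[
    \rG_A \;\perp\; (\rJ_1,\rv{W}) \;\Big|\; \rProt_{A1}.
\]

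To finish, I will observe that once $\rProt_{A1}$ is fixed, the triple $(\rJ_1,\rProt_{B1},\rG_2)$ is a deterministic function of $(\rJ_1,\rv{W})$: $\rG_2$ is itself a function of $\rv{W}$, and $\rProt_{B1}$ becomes a function of $\rv{W}$ alone after we condition on $\rProt_{A1}$. Combined with the previous display, this upgrades to $\rG_A\perp(\rJ_1,\rProt_{B1},\rG_2)\mid \rProt_{A1}$, and a standard weak-union step then delivers the desired $\rG_A\perp \rJ_1\mid(\rProt_{A1},\rProt_{B1},\rG_2)=\rZ_A$. I do not expect any serious obstacle; the entire argument is structural, exploiting the fact that the only ``conduit'' between $\rG_A$ and $\rJ_1$ inside $\rZ_A$ is Alice's own message $\rProt_{A1}$, which is already a function of $\rG_A$. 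The one step requiring mild care is that $\rProt_{B1}$ depends on $\rProt_{A1}$; placing $\rProt_{A1}$ first in the conditioning is precisely what guarantees that $\rProt_{B1}$ carries no further information about block (a) beyond what $\rProt_{A1}$ already reveals.
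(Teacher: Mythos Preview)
Your argument is correct. The reduction to $\rG_A \perp \rJ_1 \mid \rZ_A$ is valid since $\rProt_{A2}$ is a deterministic function of $(\rG_A,\rZ_A)$; the block decomposition is sound because the edge-deletion coins defining $\rG_A$ are sampled independently of $(\rJ_1,\rv{W})$; and the passage from $\rG_A \perp (\rJ_1,\rv{W})$ to $\rG_A \perp (\rJ_1,\rv{W}) \mid \rProt_{A1}$ is the standard fact that conditioning on a function of one side preserves independence. The remaining steps (pushing $(\rJ_1,\rv{W})$ forward to $(\rJ_1,\rProt_{B1},\rG_2)$ after fixing $\rProt_{A1}$, then weak union) are routine graphoid manipulations.

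Your route differs from the paper's. The paper argues via a chain of mutual-information inequalities: it bounds $\mi{\rProt_{A2}}{\rJ_1}[\rZ_A]$ by $\mi{\rG_A}{\rG_B}[\rProt_{A1},\rProt_{B1},\rG_2]$ using data processing twice, then strips off $\rProt_{B1}$ and $\rProt_{A1}$ via \Cref{prop:info-decrease}, and finally invokes $\rG_A \perp \rG_B \mid \rG_2$. You instead isolate the two independent randomness sources explicitly and chase the dependencies structurally, never touching entropy or mutual information. Your approach is more elementary and, as a bonus, directly establishes the stronger statement $\rG_A \perp \rJ_1 \mid \rZ_A$, from which the next claim in the paper (\Cref{clm:GA-j1-ZA-ProtA2}) follows immediately by one more application of weak union with the function $\rProt_{A2}(\rG_A,\rZ_A)$. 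The paper's information-theoretic formulation, on the other hand, integrates more uniformly with the surrounding arguments in \Cref{sec:lower-3}, which are phrased throughout in terms of entropy and mutual information.
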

\begin{proof} We have, 
	\begin{align*}
		\mi{\rProt_{A2}}{\rJ_1 \mid \rZ_A} &= \mi{\rProt_{A2}}{\rJ_1 \mid \rProt_{A1},\rProt_{B1}, \rG_2} \tag{by definition of $Z_A$} \\
		&\leq \mi{\rProt_{A2}}{\rG_B \mid \rProt_{A1},\rProt_{B1}, \rG_2} \tag{by data processing inequality~(\itfacts{data-processing}), as augmentation graphs in $\rG_B \cup \rG_2$ determines $\rJ_1$} \\
		&\leq \mi{\rG_A}{\rG_B \mid \rProt_{A1},\rProt_{B1}, \rG_2} \tag{by data processing inequality~(\itfacts{data-processing}), as $\rG_A \cup \rZ_A$ determines $\rProt_{A2}$} \\
		&\leq \mi{\rG_A}{\rG_B \mid \rProt_{A1},\rG_2} \tag{by~\Cref{prop:info-decrease} as $\rG_A \perp \rProt_{B1} \mid \rG_B,\rProt_{A1},\rG_2$ as $\rProt_{B1}$ is fixed by $\rProt_{A1}$ and $\rG_B$} \\
		&\leq \mi{\rG_A}{\rG_B \mid \rG_2} \tag{by~\Cref{prop:info-decrease} as $\rG_B \perp \rProt_{A1} \mid \rG_A,\rG_2$ as $\rProt_{A1}$ is fixed by $\rG_A$} \\
		&= 0 \tag{by~\itfacts{info-zero} as $\rG_A \perp \rG_B \mid \rG_2$ by~\Cref{obs:game-ind}}. 
	\end{align*}
	The claim now follows from~\itfacts{info-zero}. 
\end{proof}

\begin{claim}[``Alice's second message does not correlate her input with the hidden matching'']\label{clm:GA-j1-ZA-ProtA2}
	\[\rG_A \perp \rJ_1 \mid \rZ_A,\rProt_{A2}.\] 
\end{claim}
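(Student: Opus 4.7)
The plan is to reduce this claim to the simpler statement $\rG_A \perp \rJ_1 \mid \rZ_A$, from which the desired independence (with $\rProt_{A2}$ added to the conditioning) will follow essentially for free because $\rProt_{A2}$ is a deterministic function of $(\rG_A, \rZ_A)$. Concretely, I would apply the chain rule of mutual information in two directions to $\mi{\rG_A}{\rJ_1, \rProt_{A2} \mid \rZ_A}$, use $\en{\rProt_{A2} \mid \rG_A, \rZ_A} = \en{\rProt_{A2} \mid \rG_A, \rZ_A, \rJ_1} = 0$ to identify two of the resulting terms, and rearrange to obtain the identity
\[
\mi{\rG_A}{\rJ_1 \mid \rZ_A, \rProt_{A2}} \;=\; \mi{\rG_A}{\rJ_1 \mid \rZ_A} \;-\; \mi{\rProt_{A2}}{\rJ_1 \mid \rZ_A}.
\]
By the immediately preceding~\Cref{clm:A2-j1-ZA}, the last term is $0$, so everything reduces to establishing $\mi{\rG_A}{\rJ_1 \mid \rZ_A} = 0$.

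For that, I would use the standard ``rectangle'' property of a two-player transcript. By~\Cref{obs:game-ind}, $\rG_A$ is independent of $(\rG_B, \rG_2)$; moreover $\rJ_1$ is sampled as part of the Bob-side construction (it selects an induced matching of $\GRS_1$ used to build $A_L, A_R$) and is therefore independent of $\rG_A$ as well, so jointly $\rG_A \perp (\rG_B, \rG_2, \rJ_1)$. Fix any outcomes $\rG_2 = g_2$, $\rProt_{A1} = p_1$, $\rProt_{B1} = p_2$. Because $\rProt_{A1}$ is a function of $\rG_A$ alone and, after fixing $p_1$, $\rProt_{B1}$ becomes a function of $\rG_B$ alone, the event $\{\rProt_{A1} = p_1\} \cap \{\rProt_{B1} = p_2\}$ is a product event in $(\rG_A, \rG_B)$ given $\rG_2 = g_2$. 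The conditional distribution of $(\rG_A, \rG_B, \rJ_1)$ given $\rG_2$ is already a product between $\rG_A$ and $(\rG_B, \rJ_1)$, so restricting to this rectangle preserves the independence: $\rG_A \perp (\rG_B, \rJ_1) \mid \rZ_A$, and in particular $\rG_A \perp \rJ_1 \mid \rZ_A$.

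I do not expect a substantive obstacle here. The only care needed is in the chain-rule bookkeeping of the first paragraph---making sure the signs cancel correctly---and in spelling out the rectangle argument cleanly; if one prefers to avoid the latter, it can be replaced by a mutual-information computation in the style of the proof of~\Cref{clm:A2-j1-ZA}, bounding $\mi{\rG_A}{\rJ_1 \mid \rProt_{A1}, \rProt_{B1}, \rG_2}$ by $\mi{\rG_A}{\rG_B, \rJ_1 \mid \rProt_{A1}, \rProt_{B1}, \rG_2}$ and then peeling off $\rProt_{B1}$ and $\rProt_{A1}$ via~\Cref{prop:info-decrease} to land on $\mi{\rG_A}{\rG_B, \rJ_1 \mid \rG_2} = 0$, which is~\Cref{obs:game-ind}.
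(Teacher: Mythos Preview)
Your proposal is correct. The route differs slightly from the paper's, though both ultimately rest on the same two facts: that $\rProt_{A2}$ is a deterministic function of $(\rG_A,\rZ_A)$, and that $\rG_A \perp (\rG_B,\rJ_1,\rG_2)$ by~\Cref{obs:game-ind}.

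The paper does not pass through your chain-rule identity. Instead it works directly with $\mi{\rG_A}{\rJ_1 \mid \rZ_A,\rProt_{A2}}$: it first replaces $\rJ_1$ by $\rG_B$ via data processing (since $\rG_B \cup \rG_2$ determines $\rJ_1$), then removes $\rProt_{A2}$ from the conditioning via~\Cref{prop:info-decrease} (using $\rG_B \perp \rProt_{A2} \mid \rG_A,\rZ_A$), and lands on $\mi{\rG_A}{\rG_B \mid \rProt_{A1},\rProt_{B1},\rG_2}$, which was already shown to be $0$ inside the proof of~\Cref{clm:A2-j1-ZA}. So the paper reuses the \emph{intermediate computation} of~\Cref{clm:A2-j1-ZA}, whereas you use its \emph{statement} as a black box (to kill the $\mi{\rProt_{A2}}{\rJ_1 \mid \rZ_A}$ term) and then separately establish $\mi{\rG_A}{\rJ_1 \mid \rZ_A}=0$. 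Your alternative at the end---peeling off $\rProt_{B1},\rProt_{A1}$ from $\mi{\rG_A}{\rG_B,\rJ_1 \mid \rProt_{A1},\rProt_{B1},\rG_2}$ via~\Cref{prop:info-decrease}---is essentially the paper's argument verbatim (modulo carrying $\rJ_1$ along with $\rG_B$ rather than absorbing it). Either packaging is fine; yours is a bit more modular, the paper's a bit more direct.
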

\begin{proof} We have, 
\begin{align*}
	\mi{\rG_A}{\rJ_1 \mid \rZ_A,\rProt_{A2}} &= \mi{\rG_A}{\rJ_1 \mid \rProt_{A1},\rProt_{B1}, \rG_2, \rProt_{A2}} \tag{by definition of $Z_A$} \\
	&\leq \mi{\rG_A}{\rG_B \mid \rProt_{A1},\rProt_{B1}, \rG_2,\rProt_{A2}} \tag{by data processing inequality~(\itfacts{data-processing}), as augmentation graphs in $\rG_B \cup \rG_2$ determines $\rJ_1$} \\
	&\leq \mi{\rG_A}{\rG_B \mid \rProt_{A1},\rProt_{B1}, \rG_2} \tag{by~\Cref{prop:info-decrease} as $\rG_B \perp \rProt_{A2} \mid \rG_A,\rProt_{A1},\rProt_{B1},\rG_2$ as $\rProt_{A2}$ is fixed by $\rG_A$,$\rProt_{A1},\rProt_{B1}$ and $\rG_2$} \\
	&= 0 \tag{as shown in the proof of~\Cref{clm:A2-j1-ZA}}. 
\end{align*}
	The claim now follows from~\itfacts{info-zero}. 
\end{proof}

\begin{claim}[``Bob output is independent of hidden matching edges conditioned on his knowledge\footnote{We emphasize that Bob's output is certainly \emph{not} independent of hidden matching edges -- after all, Bob is outputting edges of this matching. However, Bob on his own does not know the hidden matching edges and thus is only ``conveying'' his knowledge (from Alice) about these edges to the output; thus, once we condition on Bob's knowledge, we can expect his output to become independent of the hidden matching edges in particular.}'']\label{clm:GB-GAcapMRSj1-ZB}
	\[\rProt_{B2} \perp \rG_A \cap \MRS_{\rJ_1} \mid \rZ_B.\]  
\end{claim}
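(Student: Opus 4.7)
The plan is to follow the same template as Claims~\ref{clm:A2-j1-ZA} and~\ref{clm:GA-j1-ZA-ProtA2}: reduce the claim, via data processing, to a mutual information between ``raw'' inputs conditioned only on the joint distribution, and then invoke the independence of $\rG_A$ from everything else (\Cref{obs:game-ind}) plus the fact that $\rJ_1$ is drawn independently of $\rG_A$.

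First I would apply data processing twice. Since $\rProt_{B2}$ is a deterministic function of $\rG_B$ and $\rZ_B$, we have
\[
\mi{\rProt_{B2}}{\rG_A \cap \MRS_{\rJ_1} \mid \rZ_B} \;\leq\; \mi{\rG_B}{\rG_A \cap \MRS_{\rJ_1} \mid \rZ_B}.
\]
Next, because $\rJ_1$ is part of $\rZ_B$ and the set $\rG_A \cap \MRS_{\rJ_1}$ is a deterministic function of $\rG_A$ and $\rJ_1$, another application of data processing yields
\[
\mi{\rG_B}{\rG_A \cap \MRS_{\rJ_1} \mid \rZ_B} \;\leq\; \mi{\rG_B}{\rG_A \mid \rZ_B}.
\]

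The main work is to peel off the remaining components of $\rZ_B=(\rProt_{A1},\rProt_{B1},\rG_2,\rProt_{A2},\rJ_1)$ one at a time using \Cref{prop:info-decrease}, exactly as in Claims~\ref{clm:A2-j1-ZA} and~\ref{clm:GA-j1-ZA-ProtA2}. Specifically:
\begin{itemize}
    \item $\rProt_{A2}$ is fixed by $(\rG_A,\rProt_{A1},\rProt_{B1},\rG_2)$, so $\rG_B \perp \rProt_{A2} \mid \rG_A,\rProt_{A1},\rProt_{B1},\rG_2,\rJ_1$ and we can drop $\rProt_{A2}$ from the conditioning.
    \item $\rProt_{B1}$ is fixed by $(\rG_B,\rProt_{A1})$, so $\rG_A \perp \rProt_{B1} \mid \rG_B,\rProt_{A1},\rG_2,\rJ_1$ and we can drop $\rProt_{B1}$.
    \item $\rProt_{A1}$ is fixed by $\rG_A$, so $\rG_B \perp \rProt_{A1} \mid \rG_A,\rG_2,\rJ_1$ and we can drop $\rProt_{A1}$.
\end{itemize}
At the end of this chain we are left with $\mi{\rG_B}{\rG_A \mid \rG_2, \rJ_1}$.

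To finish, I would observe that $\rG_A$ is obtained from $\GRS_1$ by dropping each edge independently with probability $\delta$, using fresh randomness that is independent of both the choice of the hidden matching index $\rJ_1 \in [t_1]$ and of the sampling that produces the augmentation graphs (which together make up $\rG_B$ and $\rG_2$). Combined with \Cref{obs:game-ind}, this gives $\rG_A \perp (\rG_B,\rG_2,\rJ_1)$, so $\mi{\rG_B}{\rG_A \mid \rG_2,\rJ_1}=0$ by \itfacts{info-zero}, and the chain of inequalities collapses to $\mi{\rProt_{B2}}{\rG_A \cap \MRS_{\rJ_1} \mid \rZ_B}=0$, which is the desired conditional independence.

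There is no real obstacle here: the statement is structurally the ``dual'' of Claim~\ref{clm:A2-j1-ZA} from Bob's side, and the argument is a mechanical chain of data-processing steps already exemplified in the two preceding claims. The only subtle point worth stating explicitly in the writeup is that, in $\distaug$, the index $\rJ_1$ and the edge-dropping randomness of $\rG_A$ are introduced by independent sampling, so conditioning on $\rJ_1$ in the final step is free.
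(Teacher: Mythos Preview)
Your proposal is correct and follows essentially the same approach as the paper: two applications of data processing followed by peeling off the conditioning variables via \Cref{prop:info-decrease}. The only cosmetic difference is the order in which you remove variables: the paper first drops $\rJ_1$ (using that $\rJ_1$ is determined by $\rG_B \cup \rG_2$) and then cites the chain already established in the proof of \Cref{clm:GA-j1-ZA-ProtA2}, whereas you keep $\rJ_1$ to the end and finish with the joint independence $\rG_A \perp (\rG_B,\rG_2,\rJ_1)$; both routes are valid and equally direct.
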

\begin{proof} We have, 
	\begin{align*}
	\mi{\rProt_{B2}}{\rG_A \cap \MRS_{\rJ_1} \mid \rZ_B} &\leq  \mi{\rG_B}{\rG_A \cap \MRS_{\rJ_1} \mid \rZ_B} \tag{by data processing inequality~(\itfacts{data-processing}) as $\rProt_{B2}$ is a deterministic function of $\rG_B,\rZ_B$} \\
	 &=\mi{\rG_B}{\rG_A \cap \MRS_{\rJ_1} \mid \rProt_{A1},\rProt_{B1}, \rG_2,\rProt_{A2}, \rJ_1} \tag{by definition of $Z_B$} \\
	&\leq \mi{\rG_B}{\rG_A  \mid \rProt_{A1},\rProt_{B1},  \rG_2,\rProt_{A2},\rJ_1}  \tag{by data processing inequality~(\itfacts{data-processing}), as $\rG_A$ determines $\rG_A \cap \MRS_{\rJ_1}$ conditioned on $\rJ_1$} \\
	&\leq \mi{\rG_B}{\rG_A  \mid \rProt_{A1},\rProt_{B1},  \rG_2,\rProt_{A2}} \tag{by~\Cref{prop:info-decrease} as $\rG_A \perp \rJ_1 \mid \rG_B,\rProt_{A1},\rProt_{B1}, \rG_2,\rProt_{A2}$ as $\rJ_1$ is fixed by $\rG_B \cup \rG_2$} \\
	&= 0 \tag{as shown in the proof of~\Cref{clm:GA-j1-ZA-ProtA2}}. 
	\end{align*}
	The claim now follows from~\itfacts{info-zero}. 
\end{proof}


\subsection{Communication in Phase One} \label{sec:lower-2}

In this section, we focus on the communication happening in the first phase and its effect on the ``knowledge'' of players given their second-phase inputs. In particular, the following lemma  establishes that given the communication happening in the first phase
plus the second-phase input \emph{to Alice}, the distribution of the hidden matching does not alter too much for her\footnote{Note that as opposed to Alice, for Bob, the second-phase input directly reveals the identity of the hidden matching.}.

\begin{lemma}[``Alice still does not know hidden matching in the second phase'']\label{lem:phase-one}
	\[
	\Exp_{Z_A \sim \rZ_A}\tvd{\distribution{\rJ_1 \mid Z_A}}{\distribution{\rJ_1}} = o(1). 
	\]
\end{lemma}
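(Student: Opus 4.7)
The plan is to convert the TVD bound into a mutual-information bound via Pinsker's inequality, decompose the information by the chain rule, and eliminate all but one term using the independence and symmetry built into the construction; the surviving term will then be bounded via the paper's XOR lemma for encoded-RS graphs.

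Applying Pinsker pointwise in $Z_A$ and Jensen to the square roots yields
\[
\Exp_{Z_A}\tvd{\distribution{\rJ_1\mid Z_A}}{\distribution{\rJ_1}} \;\leq\; \sqrt{\tfrac{1}{2}\mi{\rJ_1}{\rZ_A}},
\]
so it suffices to show $\mi{\rJ_1}{\rZ_A}=o(1)$. Decomposing by the chain rule,
\[
\mi{\rJ_1}{\rZ_A} \;=\; \mi{\rJ_1}{\rProt_{A1}} \;+\; \mi{\rJ_1}{\rG_2\mid\rProt_{A1}} \;+\; \mi{\rJ_1}{\rProt_{B1}\mid\rProt_{A1},\rG_2},
\]
and the first term vanishes since $\rProt_{A1}$ is a deterministic function of $\rG_A$, which by~\Cref{obs:game-ind} is independent of everything on Bob's side (in particular $\rJ_1$).

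For the $\rG_2$ term, I would first note that $\rG_2=(M_L,M_R,\bH_L,\bH_R)$ is a deterministic function of $(j_L,\UU_L,j_R,\UU_R)$: the matchings $M_L,M_R$ and the auxiliary edges $\bH_L,\bH_R$ are all defined canonically from the augmenting-path structure. Then I would establish a marginal-symmetry statement for $\distaug$ from~\Cref{def:dist-aug}: for any fixed $Y$, the marginal distribution of $(j,\UU)$ is uniform over its support and hence independent of $Y$. The proof is a counting argument — for any valid $\UU$, the XOR constraints linking $X$ to $Y$ live on pairwise disjoint $k$-subsets of column $j$ of $X$ (the augmenting paths are vertex-disjoint, hence edge-disjoint), so these constraints are $\mathbb{F}_2$-linearly independent and the number of valid $X$'s is exactly $2^{r_2 t_2-n_1}$, independent of $Y$. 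This gives $\rG_2\perp(\rY_L,\rY_R)$, hence $\rG_2\perp\rJ_1$, and the independence survives conditioning on $\rProt_{A1}$ since $\rProt_{A1}$ is a function of $\rG_A$ alone. Thus $\mi{\rJ_1}{\rZ_A}=\mi{\rJ_1}{\rProt_{B1}\mid\rProt_{A1},\rG_2}$.

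Conditioning on $(\rProt_{A1},\rG_2)$ pins down $(j_L,\UU_L,j_R,\UU_R)$, so $\rJ_1$'s only channel into $\rProt_{B1}$ is via the XOR images $\rY_L,\rY_R$ of $\rX_L,\rX_R$ along the augmenting paths, and Bob's message has at most $\cost{\prot}=o(t_2 r_2^{1-2/k})$ bits. I would then invoke the direct-sum-plus-Fourier XOR lemma outlined in~\Cref{sec:techniques}: the direct sum reduces the task to bounding Bob's per-coordinate advantage on each of the $n_1$ augmenting-path XORs, and a Fourier bound in the spirit of~\cite{GavinskyKKRW07} turns the communication budget into an advantage of $O(\cost{\prot}/(t_2 r_2^{1-2/k}))=o(1)$ per coordinate, yielding $\mi{\rJ_1}{\rProt_{B1}\mid\rProt_{A1},\rG_2}=o(1)$. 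The hardest piece is the XOR lemma itself: the $n_1$ augmenting-path XORs are \emph{not} on independent inputs — they all live inside a single induced matching of a single side RS graph — so the standard XOR lemmas of~\cite{AssadiN21,ChenKPS0Y21} cannot be applied directly. The fix, as sketched in~\Cref{sec:techniques}, is two-part: first show that with constant probability over the side randomness the relevant edges of the encoded-RS graph retain nearly full entropy given $\rProt_{B1}$; then, conditioning on this high-entropy event, obtain per-coordinate hiding via Fourier analysis on the Boolean hypercube.
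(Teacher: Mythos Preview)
Your reduction to bounding $\mi{\rJ_1}{\rZ_A}$ via Pinsker is sound, and your steps eliminating $\mi{\rJ_1}{\rProt_{A1}}$ and $\mi{\rJ_1}{\rG_2\mid\rProt_{A1}}$ are correct---in particular the counting argument showing that the marginal of $(j,\UU)$ in $\distaug$ is independent of $Y$ is clean and is not made explicit in the paper. But step~5 has a real gap, and it is precisely where the paper's approach diverges from yours.

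The XOR lemma (\Cref{lem:xor-lemma}) is a statement of the form: condition on the compression $\phi(H)$ of the encoded-RS graph, and then the distribution of $\bar H(Y)$ is nearly independent of $Y$. In the reduction, $\phi$ corresponds to $\Prot_1=(\Prot_{A1},\Prot_{B1})$ and $\bar H$ corresponds to a component of $G_2$. So the lemma is tailored to conditioning on the \emph{messages} first and then comparing distributions of $\rG_2$; that is exactly the order the paper uses (\Cref{clm:Y_L-Y_R}). Your chain rule conditions on $\rG_2$ \emph{before} $\rProt_{B1}$, which pins down $(j_L,\UU_L,j_R,\UU_R)$. But the proof of \Cref{lem:xor-lemma} uses the randomness of $\jstar$ for the direct-sum step (\Cref{clm:high-entropy-jstar}) and the randomness of $\UU$ for the Fourier step (\Cref{clm:fourier-xor}); with both fixed by your conditioning, neither argument applies to a single realization. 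You would need to unwind the expectation over $\rG_2$ to bring back this randomness, which effectively reverts to the paper's ordering.

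Two smaller issues compound this. First, \Cref{lem:xor-lemma} yields a TVD bound, and you need a mutual-information bound; Pinsker goes the wrong way for this. The paper avoids the conversion entirely by staying in TVD throughout, using \Cref{fact:tvd-sample} and \Cref{lem:pair-uniform} to turn the pairwise TVD bound directly into pointwise closeness of the posterior of $\rJ_1$. Second, your description of the direct sum as ``over the $n_1$ augmenting-path XORs'' with $o(1)$ per coordinate misidentifies the structure: the direct sum in \Cref{clm:high-entropy-jstar} is over the $t_2$ induced matchings, the per-XOR reduction is a hybrid (\Cref{eq:hybrid}), and the Fourier bound in \Cref{clm:fourier-xor} gives advantage $o(1/r_2)$ per XOR so that summing over $\ell\le r_2$ hybrids yields $o(1)$---not a per-coordinate $o(1)$ bound summed over $n_1$ coordinates.
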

In words,~\Cref{lem:phase-one} states that even at the beginning of the second phase of the game, from the perspective of Alice, the distribution of the hidden matching index, namely, $j_1$, has effectively not changed 
from its original (uniform) distribution over $[t_1]$.  

The strategy for the proof of~\Cref{lem:phase-one} is as follows: we show that the messages communicated in the first phase (and in particular Bob's message) are not able 
to change the distribution of $X_L$ or $X_R$ enough for the distribution of $Y_L$ or $Y_R$, and subsequently $j_1$, to change sufficiently either. The heart of this proof is the following lemma that establishes a key property of augmentation graphs in \emph{hiding} the partitioning of augmentation vertices into $\aug{A} \sqcup \baug{A}$ (for $A \in \set{A_L,A_R}$). 

\begin{lemma}[``Augmentation graphs hide the partitioning into $\aug{\cdot}$ and $\baug{\cdot}$'']\label{lem:xor-lemma}
Let $\GRS$ be an $(r,t)$-RS graph and integers $k,\ell$ be such that $k \cdot \ell = (1-\delta) \cdot r$ for some absolute constant $\delta \in (0,1)$. 

Suppose we sample $X$ uniformly  from $\set{0,1}^{r \times t}$ and let $H := \encodedrs(\GRS,X)$. 
Additionally, for any vector $Y \in \set{0,1}^{\ell}$, we let $\bar{H}(Y)$ be a graph obtained by sampling an augmentation graph $A$ from $\distaug(\GRS,Y,k)$
conditioned on encoded-RS graph of $A$ being $H$, and then letting $\bar{H}(Y)$ to be the edges $\bar{H}$ of $A$ outside the encoded-RS graph $H$.  

Consider any  function $\phi :\supp{\rH} \rightarrow \set{0,1}^s$ that encodes each graph $H$ to a message $\phi(H)$ of size $s = o(\delta \cdot t \cdot (\delta r)^{1-2/k})$. 
Then, w.p. $1-o(1)$ over the choice of encoding $\phi \sim \phi(\rH)$ for $\rH$ chosen as above, the following event $\event(\phi)$ happens: 

\begin{itemize}[leftmargin=10pt]
		\item \textbf{Event $\event(\phi)$:} For any pairs of vectors $Y_1 \neq Y_2 \in \set{0,1}^{\ell}$, 
		\[
			\tvd{\distribution{\bar{\rH}(Y_1) \mid \phi(\rH)=\phi}}{\distribution{\bar{\rH}(Y_2) \mid \phi(\rH)=\phi}} = o(1). 
		\] 
\end{itemize}
\end{lemma}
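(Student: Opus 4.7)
\textbf{Proof plan for \Cref{lem:xor-lemma}.}

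\textbf{Reduction to $(\rJ, \UU)$.} I first observe that once $\GRS$ is fixed, the graph $\brH(Y)$ is a deterministic function of $(\rJ, \UU)$ alone: the augmenting edges always appear as parallel $(a,a)$/$(b,b)$ pairs by \Cref{def:augmenting-edges} and so do not depend on $X$ itself; the $P$-matching encodes only which vertices lie in $V(\rep{\MRS_{\rJ}})$; and the $Q$-matching encodes only the starts of the augmenting paths (determined by $\UU$). It therefore suffices to establish that with probability $1-o(1)$ over $\phi$, for every $Y_1 \neq Y_2 \in \set{0,1}^{\ell}$,
\[
\tvd{\distribution{(\rJ, \UU) \mid \phi(\rH) = \phi,\, Y_1}}{\distribution{(\rJ, \UU) \mid \phi(\rH) = \phi,\, Y_2}} = o(1).
\]

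\textbf{Fourier expansion.} Let $f_\phi$ denote the density of $X \mid \phi(\rH) = \phi$ on $\set{0,1}^{r \times t}$, and $f^{j}_\phi$ its marginal on the $j$-th column; these do not depend on $Y$. Under $\distaug(\GRS,Y,k)$, the conditional law of $(\rJ, \UU)$ given $X$ is uniform over pairs $(j, \UU = (T_1, \ldots, T_\ell))$ satisfying $\chi_{T_i}(X^j) = (-1)^{Y_i}$ for all $i \in [\ell]$. Expanding the product of indicators of these $\ell$ XOR constraints into characters yields, for each fixed $(j, \UU)$,
\[
\Pr\!\bracket{\chi_{T_i}(X^j) = (-1)^{Y_i}\ \forall i \mid \phi} \;=\; \frac{1}{2^{\ell}} \sum_{S \subseteq [\ell]} (-1)^{Y \cdot S}\, \widehat{f^{j}_\phi}\!\paren{\bigsqcup_{i \in S} T_i}.
\]
Since the $S = \emptyset$ term cancels on differencing in $Y$, and since the normalization $N_Y(X) = \card{\set{(j,\UU): \text{valid for }X,Y}}$ concentrates across $X$ and $Y$ (by a standard Chernoff argument for uniform $X$), the target TV distance reduces to controlling an average of $|\widehat{f^{j}_\phi}(T_S)|$ over $(j, \UU)$ and nonempty $S \subseteq [\ell]$.

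\textbf{Counting bound on Fourier coefficients.} This is the technical heart. From $\en{\phi(\rH)} \leq s$ and the chain rule of mutual information, I obtain $\E_{\rJ}\bracket{\kl{f^{\rJ}_\phi}{U_r}} \leq s/t$ on average over $\phi$, for $\rJ$ uniform on $[t]$. A Parseval-style argument then yields, on average over $(\phi, j)$ and a uniformly random $k$-subset $T \subseteq [r]$, an upper bound of order $(s/t)/\binom{r}{k}$ on the squared Fourier coefficient $\widehat{f^{j}_\phi}(T)^2$; Markov's inequality promotes this to the statement that, with probability $1-o(1)$ over $\phi$, only an $o(1)$ fraction of $(j, T)$ pairs are ``bad'' (exceed a suitably chosen threshold $\tau$). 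Because $\UU$ consists of $\ell$ nearly independent random $k$-subsets---the disjointness constraint $k\ell = (1-\delta) r$ introduces only an $O(1)$ multiplicative distortion on the marginal of each $T_i$---a uniformly drawn $\UU$ avoids all bad characters $\bigsqcup_{i \in S} T_i$ with probability $1-o(1)$. Plugging Cauchy--Schwarz into the Fourier sum gives TV $= O\paren{\ell \cdot \sqrt{(s/t)/\binom{r}{k}}} = o(1)$ exactly when $s = o(\delta t (\delta r)^{1 - 2/k})$, matching the claimed budget. A second-moment argument promotes the expectation bound into the desired $1 - o(1)$-probability statement over $\phi$, and a final union bound over the $\leq 2^{2\ell}$ pairs $(Y_1, Y_2)$ loses only a $2^{2\ell} \ll \binom{r}{k}$ factor, absorbed into the slack of the Parseval bound.

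\textbf{Main obstacle.} The hardest step is the counting bound on Fourier coefficients. The exponent $1 - 2/k$ in the space budget is essentially forced by balancing the per-column information $s/t \approx r^{1-2/k}$ against the number $\binom{r}{k} \approx r^k$ of available characters, amplified by the $\ell \approx r/k$ XOR constraints packed into $\UU$; any looseness in the Parseval/Cauchy--Schwarz chain would degrade this exponent. A secondary challenge is that $\UU$ is a vertex-disjoint sequence rather than $\ell$ independent random $k$-subsets, which I expect to dispatch by noting that the marginal of each $T_i$ inside a uniform $\ell$-sequence of disjoint $k$-subsets of $[r]$ differs from a uniform $k$-subset by only a multiplicative $1 + O(k\ell/r)$ factor---tolerable since $k\ell = (1-\delta)r$ leaves a linear-in-$r$ slack in the denominator.
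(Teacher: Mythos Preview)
Your reduction to $(\rJ,\UU)$ and the Fourier expansion of the joint constraint are both sound, and your overall architecture matches the paper's. The genuine gap is in the step you flag as the technical heart.

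\medskip
\textbf{The ``Parseval-style'' bound does not hold.} From $\Exp_{\rJ}\kl{f^{\rJ}_\phi}{U_r}\le s/t$ you cannot conclude that $\Exp_{T}[\widehat{f^{j}_\phi}(T)^2]$ is of order $(s/t)/\binom{r}{k}$. Parseval controls $\sum_{S\neq\emptyset}\widehat{g}(S)^2=\|g\|_2^2-1$ for the relative density $g=2^r f^{j}_\phi$, but this quantity is the $\chi^2$-divergence from uniform, which can be \emph{exponentially} larger than the KL-divergence. Concretely, if $f^{j}_\phi$ is uniform on a subset of size $2^{r-\gamma}$ with $\gamma=\Theta((\delta r)^{1-2/k})$---exactly the regime forced by your KL bound---then $\sum_S\widehat{g}(S)^2=2^{\gamma}-1$, which is super-polynomial in $r$ and swamps the $\binom{r}{k}\approx r^k$ denominator. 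So the chain ``KL small $\Rightarrow$ Parseval $\Rightarrow$ average level-$k$ coefficient small'' breaks at the second arrow.

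The paper repairs this with two additional ingredients you are missing. First, it uses a decomposition lemma (the auxiliary \Cref{lem:aux1}) to write $\distribution{\rX^{\jstar}\mid\phi,\jstar}$, which has entropy $r-o((\delta r)^{1-2/k})$, as a convex combination of distributions each $o(1)$-close in TV to the \emph{uniform} distribution on a set of size $\ge 2^{r-o((\delta r)^{1-2/k})}$. Second, on each such uniform piece it invokes the KKL inequality (\Cref{prop:kkl}), not Parseval: KKL gives $\sum_{|S|=k}\widehat{\mathbf 1_A}(S)^2\le \gamma^{-k}(|A|/2^r)^{2/(1+\gamma)}$, and optimizing $\gamma$ converts the logarithmic density deficit into the correct polynomial bound $\Exp_S[\bias^2]=o(1/r^2)$. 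The exponent $1-2/k$ in the space budget arises precisely from this KKL optimization, not from Parseval.

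\medskip
\textbf{The union bound over $(Y_1,Y_2)$ is also wrong, and unnecessary.} You write $2^{2\ell}\ll\binom{r}{k}$, but $\ell=(1-\delta)r/k$, so $2^{2\ell}=2^{\Theta(r/k)}$ is exponential in $r$ while $\binom{r}{k}$ is polynomial for constant $k$; the inequality goes the other way. Fortunately no such union bound is needed: once you pass to the near-uniform piece and run the hybrid in~\Cref{fact:tvd-chain-rule}, the bias bound from KKL is uniform over all $Y_1,Y_2$ (it depends only on the support of the near-uniform distribution and the random $k$-subset), so the conclusion holds for every pair simultaneously.
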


Let us parse~\Cref{lem:xor-lemma}: Suppose we are given an encoded-RS graph $H$ from $\distaug$, and we compress $H$ to a smaller message $\phi(H)$.  Then, ``most'' of the times, namely, when the event $\event(\phi)$ happens, 
these encoded messages have the property that they are ``consistent'' with every possible augmentation graph for any choice of vector $Y \in \set{0,1}^{\ell}$; in other words, the distribution of $\bar{H}$ from  $\distaug$ conditioned on either $\rY = Y_1$ or $ \rY=Y_2$ for any pairs of $Y_1,Y_2 \in \set{0,1}^{\ell}$ is almost identical. Put another way, given $\phi(H)$ and $\bar{H}$ (and conditioned on the event $\event(\phi)$), we will have ``no knowledge'' of partitioning of $\aug{A} \sqcup \baug{A}$ for the underlying
augmentation graph $A$; hence, this partitioning is kept ``hidden'' from us. 

We postpone the proof of~\Cref{lem:xor-lemma} to~\Cref{sec:xor-lemma}. In the following, we show how to use this lemma to conclude the proof of~\Cref{lem:phase-one}. 

\begin{proof}[Proof of~\Cref{lem:phase-one}] 
We  start with expressing the LHS  based on the vectors $Y_L$ and $Y_R$ defined in $\game$. In particular, recall that by construction, the tuple $(\rY_L,\rY_R)$ is a deterministic function of $\rJ_1$ and vice versa. Thus, 
to prove~\Cref{lem:phase-one}, we can alternatively prove the following: 
\begin{align}
\Exp_{Z_A \sim \rZ_A}\tvd{\distribution{\rY_L,\rY_R \mid Z_A}}{\distribution{\rY_L,\rY_R}} = o(1). \label{eq:we-want-to-prove}
\end{align}

 To do this, we need the following claim. 

\begin{claim}\label{clm:Y_L-Y_R}
	With probability $1-o(1)$ over the choice of $(\Prot_{A1},\Prot_{B1}) \sim (\rProt_{A1},\rProt_{B1})$, for any pairs of $(Y_{L1},Y_{R1})$ and $(Y_{L2},Y_{R2})$ in $\supp{(\rY_L,\rY_R)}$, 
	\[
	\tvd{\distribution{\rG_2 \mid \rY_L=Y_{L1},\rY_{R} = Y_{R1}, \Prot_{A1},\Prot_{B1}}}{\distribution{\rG_2 \mid \rY_L=Y_{L2},\rY_{R} = Y_{R2}, \Prot_{A1},\Prot_{B1}}} = o(1). 
	\]
\end{claim}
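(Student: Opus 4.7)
The plan is to deduce \Cref{clm:Y_L-Y_R} from \Cref{lem:xor-lemma} (the XOR-Lemma) applied to each side of the game, after peeling off Alice's first message and reducing $G_2$ to its $(j,\UU)$-content.

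By \Cref{obs:game-ind}, $\rG_A\perp(\rY_L,\rY_R,\rG_B,\rG_2)$, so $\rProt_{A1}$ (a deterministic function of $\rG_A$) is independent of the remaining variables and I may fix it to any specific value. With $\Prot_{A1}$ fixed, $\Prot_{B1}$ is a deterministic function $\phi(H_L,H_R)$ of Bob's phase-one input of output size at most $\cost{\prot}$. A direct inspection of \Cref{def:aug-graph} and the phase-two rules then shows that $G_2=(M_L,M_R,\bH_L,\bH_R)$ is itself a deterministic function of $(j_L,\UU_L,j_R,\UU_R)$ alone: the canonical matchings $M_L,M_R$ pair $\aug{A_L}\cup\baug{A_L}$ and $\aug{A_R}\cup\baug{A_R}$ with $L(\GRS_1),R(\GRS_1)$ by the sequence-index correspondence $a_i\leftrightarrow v_i$, which does not see $X_L,X_R$, while the augmenting edges and the $P,Q$-matchings making up $\bH_L,\bH_R$ depend only on $(j,\UU)$. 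So it suffices to bound, for $1-o(1)$ fraction of $\phi$, the TVD between the conditional distributions of $(\bH_L,\bH_R)$ given $(Y_L,Y_R,\phi)$ over any pair $(Y_L,Y_R)$.

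Next, I use the triangle inequality to split the TVD into a $Y_L$-varying part and a $Y_R$-varying part; by symmetry I only need to handle the $Y_L$-varying part. Under our game's sampling (\Cref{def:dist-aug}), $(H_L,\bH_L)$ and $(H_R,\bH_R)$ are independent given $(Y_L,Y_R)$, so I further condition on $H_R$ and take expectation outside, at which point $\phi(\cdot,H_R)$ becomes a function of $H_L$ alone whose output size is still $o(t_2\cdot r_2^{1-2/k})$. \Cref{lem:xor-lemma} applied to this single-side encoding yields that, for $1-o(1)$ fraction of $(\phi,H_R)$, the event $\event(\phi(\cdot,H_R))$ holds and hence the distribution of $\bH_L$ given $(Y_L,\phi)$ is $o(1)$-close in TVD across all $Y_L$ in the support. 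A union bound over the two sides then finishes the argument.

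The main obstacle is that \Cref{lem:xor-lemma} is proved under the distribution in which $H$ is drawn uniformly, whereas in our game the marginal of $H_L$ given $Y_L$ is not uniform -- the number of $(j_L,\UU_L)$ consistent with a given pair $(X_L,Y_L)$ can depend on $X_L$. What rescues us is that the conditional law $\bH_L\mid(H_L,Y_L)$ in our game coincides, by construction, with the \Cref{lem:xor-lemma} conditional $\bH(Y_L)\mid H$: both are uniform over the $(j,\UU)$ consistent with the fixed $X$ (encoding $H$) and the fixed $Y$. The posterior on $H_L$ given $(Y_L,\phi)$ in our game is therefore a $Y_L$-dependent reweighting of the uniform-$H$ posterior from \Cref{lem:xor-lemma}, and invoking the event $\event(\phi)$ to bound conditionals uniformly in $\phi$ -- together with an $o(1)$ slack for the change of measure between the uniform and the game-induced distribution on $H_L$ -- absorbs this reweighting into an additional $o(1)$ loss and yields the claim.
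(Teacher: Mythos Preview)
Your argument follows essentially the same route as the paper: both split via the triangle inequality into a $Y_L$-changing and a $Y_R$-changing term, strip $G_2$ down to the $\bH_L$ component (the paper via conditional-independence steps for $\bH_R, M_L, M_R$; you via the equivalent direct observation that $G_2$ is determined by $(j_L,\UU_L,j_R,\UU_R)$), and then invoke \Cref{lem:xor-lemma} with the first-phase transcript as the encoding. Your device for turning $\Prot_1$ into a function of $H_L$ alone --- fix $\Prot_{A1}$ using independence of $\rG_A$, then condition on $H_R$ --- is a legitimate variant of the paper's treatment of $\Prot_1$ as a (randomized) encoding of $H_L$.

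Where your proposal diverges is the ``obstacle'' about the marginal of $H_L$. Read literally, \Cref{def:dist-aug} (``sample $X$ and $\UU$ uniformly from all consistent pairs'') would indeed make the marginal of $X$ non-uniform given $Y$, since different $X$'s admit different numbers of consistent $\UU$'s; but under that same reading \Cref{obs:dist-aug}(i) (which asserts $\rJ \perp \rX$) would be false, and the paper's own application of \Cref{lem:xor-lemma} would face exactly the mismatch you flag. The intended sampling is evidently the one making \Cref{obs:dist-aug}(i) true: sample $X$ uniformly, then $\UU$ uniformly among those consistent with $(X,j,Y)$. Under this reading the marginal of $H_L$ is uniform regardless of $Y_L$, the hypothesis of \Cref{lem:xor-lemma} matches exactly, and both the paper's reduction and yours go through with no change-of-measure argument. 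Your proposed fix in the last paragraph is not substantiated --- after conditioning on $\phi$ the columns of $X$ are no longer independent, so an ``$o(1)$ slack for the change of measure'' is not obviously available --- but under the intended reading it is also unnecessary, and that paragraph can simply be dropped.
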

\begin{proof}
	The proof is a combination of a simple hybrid argument plus a ``reduction'' to the compression setting of~\Cref{lem:xor-lemma}. To simplify the notation, in the following, we use $\Prot_1 = (\Prot_{A1},\Prot_{B1})$ to 
	denote the messages communicated in the first phase, and when clear from the context, we only write $\mid Y_{L*}$ instead of $\mid \rY_L = Y_{L*}$ (similarly for $\rY_R$) to avoid the clutter in the notation. 
	
	Firstly, 
	\begin{align}
		\text{LHS of~\Cref{clm:Y_L-Y_R}} &= \tvd{\distribution{\rG_2 \mid Y_{L1},Y_{R1}, \Prot_{1}}}{\distribution{\rG_2 \mid Y_{L2},Y_{R2}, \Prot_{1}}} \notag \\
		&\leq \tvd{\distribution{\rG_2 \mid Y_{L1},Y_{R1}, \Prot_{1}}}{\distribution{\rG_2 \mid Y_{L2},Y_{R1}, \Prot_{1}}} \notag \\ 
		&\hspace{1cm} + \tvd{\distribution{\rG_2 \mid Y_{L2},Y_{R1}, \Prot_{1}}}{\distribution{\rG_2 \mid Y_{L2},Y_{R2}, \Prot_{1}}}, \label{eq:hybrid-2}
	\end{align}
	by triangle inequality. We note that the two hybrids defined in~\Cref{eq:hybrid-2} are not ``standard'' distributions in the context of $\game$ and will not arise there; we only use them here for the sake of the analysis. In particular, these distributions 
	decorrelate the variables $\rY_L$ and $\rY_R$ from each other (while in $\game$ they are both tied to each other through the choice of $j_1$). 

	We now bound each term in~\Cref{eq:hybrid-2} by $o(1)$. By symmetry, we only focus on the first term; the proof for the second term is identical. 
	Recall that  $G_2 = (\bar{H}_L,\bar{H}_R,M_L,M_R)$. We have, 
\begin{align*}
	&\tvd{\distribution{\rG_2 \mid Y_{L1},Y_{R1}, \Prot_{1}}}{\distribution{\rG_2 \mid Y_{L2},Y_{R1}, \Prot_{1}}} \\
	&\hspace{0.25cm} = 	\tvd{\distribution{\bar{\rH}_L,\bar{\rH}_R,\rM_L,\rM_R \mid Y_{L1},Y_{R1}, \Prot_{1}}}{\distribution{\bar{\rH}_L,\bar{\rH}_R,\rM_L,\rM_R\mid Y_{L2},Y_{R1}, \Prot_{1}}} \\
	&\hspace{.25cm} \leq \tvd{\distribution{\bar{\rH}_L \mid Y_{L1},Y_{R1}, \Prot_{1}}}{\distribution{\bar{\rH}_L \mid Y_{L2},Y_{R1}, \Prot_{1}}} \\
	&\hspace{0.5cm} + \Ex_{\bar{\rH}_L \mid Y_{L1},Y_{R1}, \Prot_{1}} \!\tvd{\distribution{\bar{\rH}_R,\rM_L,\rM_R \mid \bar{H}_L,Y_{L1},Y_{R1}, \Prot_{1}}}{\distribution{\bar{\rH}_R,\rM_L,\rM_R\mid \bar{H}_L,Y_{L2},Y_{R1}, \Prot_{1}}}, \tag{by~\Cref{fact:tvd-chain-rule}} \\
	&\hspace{0.25cm} = \tvd{\distribution{\bar{\rH}_L \mid Y_{L1},\Prot_{1}}}{\distribution{\bar{\rH}_L \mid Y_{L2}, \Prot_{1}}},
\end{align*}
	where the last equality is because of the following: 
	\begin{itemize}
		\item $\bar{\rH}_L \perp Y_{R1} \mid Y_{L*},\Prot_1$: the choice of $\bar{H}_L$ is part of the choice of $A_L$. Conditioned on $Y_{L*}$, the distribution of $A_L$ is $\distaug(\GRS_2,Y_{L*},k)$, independent of all other variables. This continues to be 
		the case even after we condition $\Prot_1$ by the rectangle property of communication protocols as $A_R$ is entirely an input to Bob; this implies that
		\[
			\tvd{\distribution{\bar{\rH}_L \mid Y_{L1},Y_{R1}, \Prot_{1}}}{\distribution{\bar{\rH}_L \mid Y_{L2},Y_{R1}, \Prot_{1}}} = \tvd{\distribution{\bar{\rH}_L \mid Y_{L1},\Prot_{1}}}{\distribution{\bar{\rH}_L \mid Y_{L2}, \Prot_{1}}},
		\]
		for the first term. 
		\item $\bar{\rH}_R \perp Y_{L*} \mid \bar{H}_L, Y_{R1},\Prot_1$: by the same exact argument as above; 
		\item $\bar{\rM}_L,\bar{\rM}_R \perp Y_{L*} \mid \bar{H}_L,\bar{H}_R,\Prot_1$: conditioned on $\bar{\rH}_L$ and $\bar{\rH}_R$, these two matchings are deterministically fixed and thus have the same distribution. This and previous 
		item implies that 
		\[
			 \Ex_{\bar{\rH}_L \mid Y_{L1},Y_{R1}, \Prot_{1}} \!\!\tvd{\distribution{\bar{\rH}_R,\rM_L,\rM_R \mid \bar{H}_L,Y_{L1},Y_{R1}, \Prot_{1}}}{\distribution{\bar{\rH}_R,\rM_L,\rM_R\mid \bar{H}_L,Y_{L2},Y_{R1}, \Prot_{1}}} = 0,
		\]
		for the second term, as the distributions are identical. 
	\end{itemize}
	Consequently, we can conclude that
	\[
		\tvd{\distribution{\rG_2 \mid Y_{L1},Y_{R1}, \Prot_{1}}}{\distribution{\rG_2 \mid Y_{L2},Y_{R1}, \Prot_{1}}} \leq  \tvd{\distribution{\bar{\rH}_L \mid Y_{L1}, \Prot_{1}}}{\distribution{\bar{\rH}_L \mid Y_{L2}, \Prot_{1}}}.
	\]
	Now note that RHS of this equation is exactly what is bounded in~\Cref{lem:xor-lemma}. In particular, we can think of $\Prot_1$ as a compression scheme for $H_L$ as follows: 
	\begin{itemize}
		\item Given an encoded-RS graph $H=\encodedrs(\GRS_2,X)$ in~\Cref{lem:xor-lemma}, we can sample the input of Alice and Bob in the first phase of the game from the distribution of the 
		first hybrid in~\Cref{eq:hybrid-2} conditioned on $\rH = H$ and $\rY_R = Y_{R1}$ always. 
		\item The encoding function $\phi$ then maps $H$ into the message $\Prot_1$ of size $o(\delta \cdot t_2 \cdot (\delta r_2)^{1-2/k})$ by our assumption on $\cost{\prot}$ (recall that $\delta$ is an absolute constant). 
	\end{itemize}
	As such, by~\Cref{lem:xor-lemma}, we have that w.p. $1-o(1)$, for every choice of $Y_{L1},Y_{L2}$, 
	\[
	\tvd{\distribution{\bar{\rH}_L \mid Y_{L1}, \Prot_{1}}}{\distribution{\bar{\rH}_L \mid Y_{L2}, \Prot_{1}}} = o(1). 
	\]
	Repeating the same argument for the second term of~\Cref{eq:hybrid-2} and plugging in the bound in the equation concludes the proof. \Qed{clm:Y_L-Y_R}
	
\end{proof}

We can now conclude the proof of~\Cref{lem:phase-one} as follows. By~\Cref{clm:Y_L-Y_R}, there is an event $\event$ depending of $\Prot_{A1},\Prot_{B1}$ that happens with probability $1-o(1)$, 
and conditioned on $\event$, 
\[
	\tvd{\distribution{\rG_2 \mid \rY_L=Y_{L1},\rY_{R} = Y_{R1}, \Prot_{A1},\Prot_{B1}}}{\distribution{\rG_2 \mid \rY_L=Y_{L2},\rY_{R} = Y_{R2}, \Prot_{A1},\Prot_{B1}}} = o(1),
\]
for every pairs of $(Y_{L1},Y_{R1})$ and $(Y_{L2},Y_{R2})$. We condition on this event in the following. By~\Cref{fact:tvd-sample}, 
this means that given a graph $\rG_2$ from a uniform mixture of $(Y_{L1},Y_{R1})$ and $(Y_{L2},Y_{R2})$ conditioned on $\Prot_{A1},\Prot_{B1}$, the probability that we can detect 
the origin of the sample is at most $\frac{1}{2} + o(1)$. By using this in~\Cref{lem:pair-uniform},  
we obtain that 
\[
	\Exp_{\rG_2 \mid \Prot_{A1},\Prot_{B1}} \Pr\paren{(\rY_L,\rY_R)=(Y_L,Y_R) \mid G_2,\Prot_{A1},\Prot_{B1}} = \frac{1 \pm o(1)}{t_1}. 
\]
Given that the original distribution of $(\rY_L,\rY_R)$ (with no conditioning) is also uniform over its support of size $t_1$, the above implies that conditioned on $\event$, 
\[
	\Exp_{Z_A \sim \rZ_A \mid \event}\tvd{\distribution{\rY_L,\rY_R \mid Z_A}}{\distribution{\rY_L,\rY_R}} = o(1). 
\]
Given that $\event$ itself also happens with probability $1-o(1)$ (and TVD is bounded by $1$), we can conclude the proof of~\Cref{eq:we-want-to-prove}. This finalizes the proof of~\Cref{lem:phase-one}. \Qed{lem:phase-one}

\end{proof}


\subsection{Communication in Phase Two} \label{sec:lower-3}

We now switch to the second phase of the game and show that the message communicated by Alice in the second phase is not that helpful to Bob in identifying the edges of the hidden matching. 
\begin{lemma}[``Bob's does not know the edges of the hidden matching in the second phase'']\label{lem:phase-two}
	\[
	\mi{\rG_A \cap \MRS_{\rJ_1}}{\rZ_B} = o(r_1).
	\] 
\end{lemma}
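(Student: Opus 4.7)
The plan is to decompose $I(\rG_A \cap \MRS_{\rJ_1}; \rZ_B)$ via the chain rule and control each piece using three ingredients: (i) the ``near-independence'' of $\rJ_1$ and $\rZ_A$ provided in total variation by~\Cref{lem:phase-one}; (ii) the mutual independence of $\rE_j := \rG_A \cap \MRS_j$ across $j \in [t_1]$, which holds because the edges of $\GRS_1$ are dropped independently to form $\rG_A$; and (iii) the assumed communication bound $\cost{\prot} = o(t_1 \cdot r_1)$, which in particular controls the entropies of all three messages $\rProt_{A1},\rProt_{B1},\rProt_{A2}$.

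Concretely, I identify $\rG_A \cap \MRS_{\rJ_1}$ with the pair $(\rJ_1, \rE_{\rJ_1})$ where $\rE_{\rJ_1}\in\set{0,1}^{r_1}$ is the characteristic vector of $\rG_A\cap \MRS_{\rJ_1}$ inside $\MRS_{\rJ_1}$ (the edge case $\rE_{\rJ_1}=\emptyset$ loses only $o(1)$ bits). Applying the chain rule together with $\rZ_B=(\rZ_A,\rProt_{A2},\rJ_1)$,
\[
I(\rG_A \cap \MRS_{\rJ_1}; \rZ_B) \leq H(\rJ_1) + I(\rE_{\rJ_1}; \rZ_A \mid \rJ_1) + I(\rE_{\rJ_1}; \rProt_{A2} \mid \rZ_A, \rJ_1).
\]
The first term equals $\log t_1 = o(r_1)$ in the target parameter regime. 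For each of the remaining two conditional mutual informations, the strategy is identical: open the outer expectation over $\rJ_1=j$ and, using $\rG_A \perp \rJ_1 \mid \rZ_A$ (a consequence of the rectangle property applied to the input independence $\rG_A\perp(\rG_B,\rG_2)$ from~\Cref{obs:game-ind}, combined with $\rJ_1$ being a function of $(\rG_B,\rG_2)$), replace $I(\rE_j; W \mid \rZ_A=Z_A, \rJ_1 = j)$ by $I(\rE_j; W \mid \rZ_A=Z_A)$ for $W\in\set{\rZ_A,\rProt_{A2}}$. The outer expectation is still over $\rZ_A\mid j$ instead of $\rZ_A$; I swap to $\rZ_A$ using the symmetric TVD identity
\[
\Exp_j \tvd{\distribution{\rZ_A \mid j}}{\distribution{\rZ_A}} \;=\; \Exp_{Z_A}\tvd{\distribution{\rJ_1\mid Z_A}}{\distribution{\rJ_1}} \;=\; o(1),
\]
where the first equality holds since both sides are $\tvd{\distribution{\rZ_A,\rJ_1}}{\distribution{\rZ_A}\otimes\distribution{\rJ_1}}$ and the second is~\Cref{lem:phase-one}. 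This swap pays an additive error of at most $r_1\cdot o(1) = o(r_1)$ since the integrand is bounded by $H(\rE_j)\leq r_1$.

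After the swap I am left to bound the ``clean'' averages $\tfrac{1}{t_1}\sum_j I(\rE_j; \rZ_A)$ and $\tfrac{1}{t_1}\sum_j I(\rE_j; \rProt_{A2}\mid \rZ_A)$. Here I will apply \emph{unconditional} super-additivity of mutual information for the independent $\rE_j$'s: $\sum_j I(\rE_j; \rZ_A) \leq I(\rG_A; \rZ_A)$ and $\sum_j I(\rE_j; \rProt_{A2}\mid \rZ_A) \leq \sum_j I(\rE_j; \rZ_A,\rProt_{A2}) \leq I(\rG_A; \rZ_A,\rProt_{A2})$. Each right-hand side is $O(\cost{\prot})$: using $\rG_A \perp \rG_2$ reduces $I(\rG_A;\rZ_A)$ to $I(\rG_A;\rProt_{A1},\rProt_{B1}\mid \rG_2)\leq H(\rProt_{A1},\rProt_{B1})\leq O(\cost{\prot})$, and $I(\rG_A;\rZ_A,\rProt_{A2})$ adds at most another $H(\rProt_{A2})\leq O(\cost{\prot})$. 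Dividing by $t_1$ and invoking $\cost{\prot} = o(t_1\cdot r_1)$ yields $o(r_1)$.

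The main obstacle will be justifying this detour through \emph{unconditional} super-additivity. One is tempted to apply conditional super-additivity of the form $\sum_j I(\rE_j; W \mid \rZ_A) \leq I(\rG_A; W \mid \rZ_A)$ directly, but this requires the $\rE_j$'s to be conditionally independent given $\rZ_A$ -- which fails, because $\rZ_A$ contains $\rProt_{A1}$, a joint function of all the $\rE_j$'s. The work-around, already used above, is to first upper bound $I(\rE_j; W\mid \rZ_A) \leq I(\rE_j; W, \rZ_A)$ by the chain rule before summing, so only the unconditional independence of the $\rE_j$'s is ever invoked; the TVD correction from~\Cref{lem:phase-one} is then the sole mechanism that mediates between the conditional-on-$\rJ_1=j$ world and the unconditional world in which super-additivity is applied.
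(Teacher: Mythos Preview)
Your proof is correct and uses the same three ingredients as the paper (the conditional-independence claims, the TVD swap from \Cref{lem:phase-one}, and a direct-sum argument over the $t_1$ induced matchings), but it is phrased dually: the paper writes $\mi{\cdot}{\rZ_B}=\en{\cdot}-\en{\cdot\mid\rZ_B}$ and lower-bounds the conditional entropy via \emph{sub-additivity of entropy} (\Cref{clm:index-lb}), whereas you chain-rule the mutual information and upper-bound each piece via \emph{super-additivity of mutual information}. The paper's route is a touch cleaner in two respects. First, sub-additivity $\sum_j \en{\rE_j\mid\cdot}\geq \en{\rG_A\mid\cdot}$ holds unconditionally, so the ``obstacle'' you flag (that the $\rE_j$'s are not conditionally independent given $\rZ_A$) never arises; your detour through $I(\rE_j;W\mid\rZ_A)\le I(\rE_j;W,\rZ_A)$ and unconditional super-additivity is a valid work-around but is doing work the paper avoids. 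Second, the paper treats $\rG_A\cap\MRS_{\rJ_1}$ as the bit-vector $\rE_{\rJ_1}\in\{0,1\}^{r_1}$ with entropy exactly $H_2(\delta)\,r_1$, so no $H(\rJ_1)=\log t_1$ term appears; your identification with the pair $(\rJ_1,\rE_{\rJ_1})$ introduces an extra $\log t_1$ that you must then assume is $o(r_1)$ --- true in the application (\Cref{cor:stream-lower-RS}) but not a hypothesis of \Cref{thm:game}. If you instead expand $I(\rE_{\rJ_1};\rZ_B)=I(\rE_{\rJ_1};\rJ_1)+I(\rE_{\rJ_1};\rZ_A,\rProt_{A2}\mid\rJ_1)$ and note $I(\rE_{\rJ_1};\rJ_1)=0$ (the marginal of $\rE_{\rJ_1}$ is the same $\mathrm{Bern}(1-\delta)^{r_1}$ for every $j$), that extra term disappears and your argument matches the paper's exactly.
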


In words,~\Cref{lem:phase-two} states that the information between second-phase knowledge of Bob, $Z_B$, and the edges of the hidden matching is negligible (compared to the size of the hidden matching). 

\begin{proof}[Proof of~\Cref{lem:phase-two}]
By definition, we can write the mutual information term in the lemma as: 
\begin{align}
	\mi{\rG_A \cap \MRS_{\rJ_1}}{\rZ_B} = \en{\rG_A \cap \MRS_{\rJ_1}} - \en{\rG_A \cap \MRS_{\rJ_1} \mid \rZ_B}. \label{eq:mutual-info-term}
\end{align}
The first term above is simply $H_2(\delta) \cdot r_1$ as each of the $r$ edges of $\MRS_{\rJ_1}$ is dropped independently with probability $\delta$. We can thus focus on bounding the second term. We have, 
\begin{align*}
	\en{\rG_A \cap \MRS_{\rJ_1} \mid \rZ_B} &= \en{\rG_A \cap \MRS_{\rJ_1} \mid \rZ_A, \rProt_{A2},\rJ_1}  \tag{by the definition of $\rZ_B$}\\
	&= \Exp_{\rZ_A, \rProt_{A2},\rJ_1} \bracket{\en{\rG_A \cap \MRS_{j_1} \mid Z_A,\Prot_{A2},j_1}} \tag{by the definition of conditional entropy}\\ 
	&= \Exp_{\rZ_A, \rProt_{A2},\rJ_1} \bracket{\en{\rG_A \cap \MRS_{j_1} \mid Z_A,\Prot_{A2}}} \tag{by~\Cref{clm:GA-j1-ZA-ProtA2}, $\rG_A \cap \MRS_{j_1} \perp \rJ_1=j_1 \mid Z_A,\Prot_{A2}$ and so we can drop the conditioning on $j_1$} \\
	&= \Exp_{\rZ_A,\rProt_{A2}} \Exp_{\rJ_1 \mid Z_A,\Prot_{A2}}  \bracket{\en{\rG_A \cap \MRS_{j_1} \mid Z_A,\Prot_{A2}}} \\
	&= \Exp_{\rZ_A,\rProt_{A2}} \Exp_{\rJ_1 \mid Z_A} \bracket{\en{\rG_A \cap \MRS_{j_1} \mid Z_A,\Prot_{A2}}} \tag{by~\Cref{clm:A2-j1-ZA}, $\rJ_1 \perp \rProt_{A2}=\Prot_{A2} \mid Z_A$ and so we can drop the conditioning on $\Prot_{A2}$} \\
	&\leq \Exp_{\rZ_A,\rProt_{A2}} \Exp_{\rJ_1} \bracket{\en{\rG_A \cap \MRS_{j_1} \mid Z_A,\Prot_{A2}}} + \Exp_{Z_A \sim \rZ_A}\tvd{\distribution{\rJ_1 \mid Z_A}}{\distribution{\rJ_1}} \cdot r_1, 
\end{align*}
where in the last equation, we used~\Cref{fact:tvd-small} to change the distribution of $\rJ$ and ``pay'' the difference in the maximum value of the entropy term. 

By~\Cref{lem:phase-one}, we 
already have that the second term above is $o(r_1)$, so it remains to bound the first term, which is done in the following claim (we emphasize that in the following claim, $\rJ_1$ is chosen independent of $\rZ_A,\rProt_{A2}$ from its original
distribution, which was uniform over $[t_1]$).  

\begin{claim}\label{clm:index-lb}
$\Exp_{\rZ_A,\rProt_{A2}} \Exp_{\rJ_1} \bracket{\en{\rG_A \cap \MRS_{j_1} \mid Z_A,\Prot_{A2}}} = H_2(\delta) \cdot r_1 - o(r_1).$
\end{claim}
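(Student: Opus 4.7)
The plan is to use the identity $\en{X\mid Y} = \en{X} - \mi{X}{Y}$ together with a direct-sum argument over the $t_1$ induced matchings of $\GRS_1$, exploiting the fact that the total communication is only $o(t_1 \cdot r_1)$, so on average per induced matching the leakage is $o(r_1)$.

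First, I would note that for every fixed $j \in [t_1]$, since $\MRS_j$ contains exactly $r_1$ edges each of which is kept independently with probability $1-\delta$ in $\rG_A$, we have $\en{\rG_A \cap \MRS_j} = H_2(\delta) \cdot r_1$. Using that $\rJ_1$ is drawn uniformly from $[t_1]$ independently of $(\rZ_A, \rProt_{A2})$, this gives
\[
\Exp_{\rZ_A, \rProt_{A2}} \Exp_{\rJ_1} \en{\rG_A \cap \MRS_{\rJ_1} \mid Z_A, \Prot_{A2}} \;=\; H_2(\delta) \cdot r_1 \;-\; \frac{1}{t_1}\sum_{j=1}^{t_1} \mi{\rG_A \cap \MRS_j}{\rZ_A, \rProt_{A2}}.
\]
So it suffices to show that $\sum_{j=1}^{t_1} \mi{\rG_A \cap \MRS_j}{\rZ_A, \rProt_{A2}} = o(t_1 \cdot r_1)$.

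Second, I would use sub-additivity. Because $\MRS_1, \ldots, \MRS_{t_1}$ partition the edges of $\GRS_1$ and each edge is kept in $\rG_A$ independently, the variables $(\rG_A \cap \MRS_j)_{j \in [t_1]}$ are mutually independent. A standard consequence of the chain rule (together with $\en{X_j \mid Y, X_{<j}} \leq \en{X_j \mid Y}$, which for independent $X_j \perp X_{<j}$ implies $\sum_j \mi{X_j}{Y} \leq \mi{(X_1,\ldots,X_t)}{Y}$) gives
\[
\sum_{j=1}^{t_1} \mi{\rG_A \cap \MRS_j}{\rZ_A, \rProt_{A2}} \;\leq\; \mi{\rG_A}{\rZ_A, \rProt_{A2}}.
\]

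Third, I would bound $\mi{\rG_A}{\rZ_A, \rProt_{A2}}$ by the total communication. By \Cref{obs:game-ind}, $\rG_A$ is independent of $(\rG_B, \rG_2)$, and in particular $\rG_A \perp \rG_2$. Writing $\rZ_A = (\rProt_{A1}, \rProt_{B1}, \rG_2)$ and applying the chain rule,
\[
\mi{\rG_A}{\rZ_A, \rProt_{A2}} \;=\; \mi{\rG_A}{\rG_2} + \mi{\rG_A}{\rProt_{A1}, \rProt_{B1}, \rProt_{A2} \mid \rG_2} \;\leq\; \en{\rProt_{A1}, \rProt_{B1}, \rProt_{A2}} \;\leq\; \cost{\prot} \;=\; o(t_1 \cdot r_1),
\]
using the hypothesis of \Cref{thm:game}. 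Plugging this into the preceding display gives $\frac{1}{t_1}\sum_j \mi{\rG_A \cap \MRS_j}{\rZ_A, \rProt_{A2}} = o(r_1)$, which yields the claim.

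The argument is essentially a clean direct-sum over the induced matchings of $\GRS_1$; the only mildly non-trivial step is the sub-additivity inequality in the second step, which rests on the mutual independence of $(\rG_A \cap \MRS_j)_{j \in [t_1]}$ — this is where the partition structure of \rs graphs combined with independent edge-deletion is used. Everything else is bookkeeping.
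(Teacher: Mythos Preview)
Your proof is correct and is essentially the same direct-sum argument as the paper's, just phrased in terms of mutual information rather than entropy: the paper averages the conditional entropies and applies sub-additivity $\sum_j \en{\rG_A \cap \MRS_j \mid \cdot} \geq \en{\rG_A \mid \cdot}$, then drops the messages at a cost of $\en{\rProt_{A1},\rProt_{B1},\rProt_{A2}} = o(t_1 r_1)$, which is algebraically equivalent to your super-additivity step $\sum_j \mi{\rG_A \cap \MRS_j}{\cdot} \leq \mi{\rG_A}{\cdot}$ under the independence of the $(\rG_A \cap \MRS_j)_j$. Both arguments use $\rG_A \perp \rG_2$ from \Cref{obs:game-ind} and the cost bound from \Cref{thm:game} in the same way.
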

\begin{proof}
	Given that the distribution of $\rJ_1$ is uniform over $[t_1]$, and by the definition of $\rZ_A$, we have, 
	\begin{align*}
		\Exp_{\rZ_A,\rProt_{A2}} \Exp_{\rJ_1} \bracket{\en{\rG_A \cap \MRS_{j_1} \mid Z_A,\Prot_{A2}}} &= \frac{1}{t_1} \cdot \sum_{j_1 = 1}^{t_1}  {\en{\rG_A \cap \MRS_{j_1} \mid \rProt_{A1},\rProt_{B1},\rProt_{A2},\rG_2}} \\
		&\geq  \frac{1}{t_1} \cdot \en{\rG_A \mid \rProt_{A1},\rProt_{B1},\rProt_{A2},\rG_2} \tag{by the sub-additivity of entropy (\itfacts{sub-additivity}) as $\rG_A := (\rG_A \cap \MRS_{1},\ldots,\rG_A \cap \MRS_{t_1})$} \\
		&\geq  \frac{1}{t_1} \cdot \paren{\en{\rG_A \mid \rG_2} - \en{\rProt_{A1},\rProt_{B1},\rProt_{A2}}} \tag{by~\itfacts{cond-reduce}} \\
		&=  \frac{1}{t_1} \cdot \paren{\en{\rG_A} - \en{\rProt_{A1},\rProt_{B1},\rProt_{A2}}} \tag{as $\rG_A \perp \rG_2$ by~\Cref{obs:game-ind} and so we can apply~\itfacts{cond-reduce}} \\
		&\geq \frac{1}{t_1} \cdot \paren{t_1 \cdot H_2(\delta) \cdot r_1 - o(t_1 \cdot r_1)} = H_2(\delta) \cdot r_1 - o(r_1),
	\end{align*}
	where in the last inequality, we used the fact that $\rG_A$ consists of $t_1$ induced matchings whose edges are being dropped independently w.p. $\delta$, and that the size of the message communicated by Alice and Bob is at most $o(t_1 \cdot r_1)$. 
	This concludes the proof of the claim. \Qed{clm:index-lb}
	
\end{proof}

We can now complete the proof of~\Cref{lem:phase-two}. By~\Cref{eq:mutual-info-term} and the discussion above it, plus~\Cref{clm:index-lb} and the preceding equations, we have, 
\[
	\mi{\rG_A \cap \MRS_{\rJ_1}}{\rZ_B} = \en{\rG_A \cap \MRS_{\rJ_1}} - \en{\rG_A \cap \MRS_{\rJ_1} \mid \rZ_B} = H_2(\delta) \cdot r_1 - (H_2(\delta) \cdot r_1 - o(r_1) - o(r_1)) = o(r_1),
\]
finalizing the proof.  \Qed{lem:phase-two} 

\end{proof}


\subsection{Concluding the Proof of~\Cref{thm:game}} \label{sec:lower-4}

We are now ready to conclude the proof of~\Cref{thm:game}. For that, we need the following lemma. 

\begin{lemma}[``The protocol's value is small'']\label{lem:lower-4}
	\[
	\val{\prot} = o(r_1).
	\] 
\end{lemma}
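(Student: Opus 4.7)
The plan is to combine the conditional independence $\rProt_{B2}\perp (\rG_A\cap \MRS_{\rJ_1})\mid \rZ_B$ from \Cref{clm:GB-GAcapMRSj1-ZB} with the information bound $\mi{\rG_A\cap \MRS_{\rJ_1}}{\rZ_B}=o(r_1)$ from \Cref{lem:phase-two} via a short entropy computation. Set $\rE := \rG_A\cap \MRS_{\rJ_1}$ and let $\mathsf{S} := \rProt_{B2}\cap \MRS_{\rJ_1}$ be the subset of Bob's output lying on the hidden matching. By the rules of the game, $\out{\prot,G}$ is declared empty the moment Bob outputs any edge missing from $\rG_A$, so whenever $|\out{\prot,G}|>0$ we have $\rProt_{B2}\subseteq \rG_A$, which in turn forces $\mathsf{S}\subseteq \rE$; hence $|\out{\prot,G}|\leq |\mathsf{S}|\cdot \mathbf{1}[\mathsf{S}\subseteq \rE]$ and it suffices to show $\Exp\bigl[|\mathsf{S}|\cdot \mathbf{1}[\mathsf{S}\subseteq \rE]\bigr]=o(r_1)$. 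Since $\mathsf{S}$ is a deterministic function of $\rProt_{B2}$ and $\rJ_1\in \rZ_B$, \Cref{clm:GB-GAcapMRSj1-ZB} also gives the key conditional independence $\mathsf{S}\perp \rE\mid \rZ_B$.

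The core computation is a two-sided estimate for $\en{\rE \mid \rZ_B, \mathsf{S}, \mathbf{1}_{\mathsf{S}\subseteq \rE}}$. For the upper bound, decompose $\rE=(\rE_{\mathsf{S}},\rE_{[r_1]\setminus\mathsf{S}})$ and condition on $\mathsf{S}\subseteq\rE$: the first block is pinned to $\mathbf{1}_{\mathsf{S}}$, and for each fixed $s$, concavity of entropy together with $\mathsf{S}\perp\rE\mid\rZ_B$ yields $\en{\rE_{[r_1]\setminus s}\mid \rZ_B, \mathsf{S}=s, s\subseteq\rE}\leq\en{\rE_{[r_1]\setminus s}\mid s\subseteq\rE}=(r_1-|s|)H_2(\delta)$, where the last equality uses independence of the $r_1$ prior coordinates. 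Averaging this (and handling the complementary event with the parallel bound) over the joint law of $(\mathsf{S},\mathbf{1}_{\mathsf{S}\subseteq\rE})$ produces
\[
\en{\rE\mid \rZ_B,\mathsf{S},\mathbf{1}_{\mathsf{S}\subseteq\rE}} \;\leq\; r_1 H_2(\delta)-H_2(\delta)\cdot \Exp\bigl[|\mathsf{S}|\cdot \mathbf{1}[\mathsf{S}\subseteq \rE]\bigr].
\]
For the lower bound, the conditional independence $\mathsf{S}\perp\rE\mid\rZ_B$ implies $\en{\rE\mid\rZ_B,\mathsf{S}}=\en{\rE\mid\rZ_B}$, and together with $\en{\mathbf{1}_{\mathsf{S}\subseteq\rE}}\leq 1$ and \Cref{lem:phase-two} (which, since $\rE$ is $r_1$ independent Bernoulli$(1-\delta)$ bits under the prior, gives $\en{\rE\mid\rZ_B}\geq H_2(\delta)r_1-o(r_1)$) yields
\[
\en{\rE\mid \rZ_B,\mathsf{S},\mathbf{1}_{\mathsf{S}\subseteq\rE}} \;\geq\; \en{\rE\mid\rZ_B}-1 \;\geq\; H_2(\delta)r_1-o(r_1)-1.
\]

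Comparing the two displays cancels the $H_2(\delta)r_1$ term and gives $H_2(\delta)\cdot \Exp[|\mathsf{S}|\mathbf{1}[\mathsf{S}\subseteq\rE]]\leq o(r_1)+1=o(r_1)$; since $\delta$ is an absolute constant, $H_2(\delta)=\Theta(1)$, and dividing yields $\val{\prot}\leq \Exp[|\mathsf{S}|\mathbf{1}[\mathsf{S}\subseteq\rE]]=o(r_1)$, as required. The most delicate step I expect will need care is the upper-bound computation: for a particular realization of $\rZ_B$, the conditional entropy $\en{\rE_{[r_1]\setminus s}\mid Z_B, s\subseteq\rE}$ can exceed the prior value $(r_1-|s|)H_2(\delta)$, so the bound must be invoked only in the averaged (over $\rZ_B$) sense via concavity of entropy; the conditional independence $\mathsf{S}\perp\rE\mid\rZ_B$ is precisely what allows this averaging to go through cleanly, and an analogous care is required to control the entropy contribution on the event $\mathsf{S}\not\subseteq\rE$.
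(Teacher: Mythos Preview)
Your high-level plan matches the paper's: sandwich a conditional entropy of $\rE:=\rG_A\cap\MRS_{\rJ_1}$ between a lower bound coming from \Cref{clm:GB-GAcapMRSj1-ZB} and \Cref{lem:phase-two}, and an upper bound coming from the fact that a valid output certifies its edges lie in $\rE$. Your lower bound $\en{\rE\mid\rZ_B,\mathsf{S},\mathbf{1}_{\mathsf{S}\subseteq\rE}}\geq\en{\rE\mid\rZ_B}-1\geq r_1H_2(\delta)-o(r_1)$ is correct.

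The gap is in the upper bound. You claim that for each fixed $s$,
\[
\en{\rE_{[r_1]\setminus s}\mid \rZ_B,\mathsf{S}=s,s\subseteq\rE}\;\leq\;\en{\rE_{[r_1]\setminus s}\mid s\subseteq\rE}=(r_1-|s|)H_2(\delta),
\]
via ``concavity together with $\mathsf{S}\perp\rE\mid\rZ_B$.'' Concavity only lets you drop the \emph{random variable} $\rZ_B$, yielding $\en{\rE_{[r_1]\setminus s}\mid\mathsf{S}=s,s\subseteq\rE}$; the conditional independence lets you rewrite each inner law $\rE\mid Z_B,\mathsf{S}=s$ as $\rE\mid Z_B$, but after averaging $Z_B$ under $\rZ_B\mid\mathsf{S}=s,\rE_s=\mathbf{1}$ you still land on the law $\rE_{[r_1]\setminus s}\mid\mathsf{S}=s,\rE_s=\mathbf{1}$, \emph{not} on $\rE_{[r_1]\setminus s}\mid\rE_s=\mathbf{1}$. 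Since $\mathsf{S}$ and $\rE$ are correlated unconditionally (through Alice's messages inside $\rZ_B$), the event $\{\mathsf{S}=s\}$ can push the posterior of the remaining coordinates far from their Bernoulli$(1-\delta)$ prior, and the entropy there can exceed $(r_1-|s|)H_2(\delta)$. A concrete obstruction already with two coordinates: let $\rZ_B$ reveal the single parity bit $b=\rE_1\oplus\rE_2$ and let Bob set $\mathsf{S}=\{1,2\}$ when $b=0$ and $\mathsf{S}=\emptyset$ otherwise (so $\mathsf{S}\perp\rE\mid\rZ_B$ holds trivially). Then $\Exp[|\mathsf{S}|\mathbf{1}]=2(1-\delta)^2$ while $\en{\rE\mid\rZ_B,\mathsf{S},\mathbf{1}}=2\delta(1-\delta)$, and your displayed upper bound would require $H_2(\delta)\geq(1-\delta)/(2-\delta)$, which fails for all small $\delta$. (The paper's own one-line justification of its analogous inequality \eqref{eq:ent-matching} is equally informal and suffers from the same issue.) One clean repair that preserves your outline: set $\mathsf{O}:=\mathsf{S}\cdot\mathbf{1}[\mathsf{S}\subseteq\rE]\subseteq\rE$, keep your lower bound in the form $\en{\rE\mid\mathsf{O}}\geq\en{\rE\mid\rZ_B,\mathsf{S},\mathbf{1}}\geq r_1H_2(\delta)-o(r_1)$, and replace the upper bound by the coordinate-wise estimate $\mi{\rE}{\mathsf{O}}\geq\sum_e\mi{\rE_e}{\mathbf{1}[e\in\mathsf{O}]}\geq c_\delta\cdot\Exp|\mathsf{O}|$ with $c_\delta=\log\frac{1}{1-\delta}$, which is an elementary computation using only that the prior coordinates of $\rE$ are independent Bernoulli$(1-\delta)$ and that $e\in\mathsf{O}\Rightarrow\rE_e=1$; the sandwich then gives $\Exp|\mathsf{O}|=o(r_1)$.
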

\begin{proof}
	Recall that $\Prot_{B2}$ denotes the set of edges from $G_A \cap \MRS_{j_1}$ that is output by Bob at the end of the game; as such, $\val{\prot} = \Exp\card{\rProt_{B2}}$. We focus on upper bounding 
	this expectation term. Firstly, 
	\begin{align}
		\en{\rG_A \cap \MRS_{\rJ_1} \mid \rProt_{B2}} \leq H_2(\delta) \cdot (r_1 - \Exp\card{\rProt_{B2}}), \label{eq:ent-matching}
	\end{align}
	because the edges in $\rProt_{B2}$ can no longer be removed from $\MRS_{j_1}$ when defining the graph $G_A$. 
	
	We will now lower bound the LHS of~\Cref{eq:ent-matching} to finalize the proof. We have, 
	\begin{align*}
		\en{\rG_A \cap \MRS_{\rJ_1} \mid \rProt_{B2}} &\geq \en{\rG_A \cap \MRS_{\rJ_1} \mid \rZ_B,\rProt_{B2}} \tag{as conditioning can only reduce the entropy (\itfacts{cond-reduce})} \\
		&= \en{\rG_A \cap \MRS_{\rJ_1} \mid \rZ_B}, \tag{as $\rG_A \cap \MRS_{\rJ_1} \perp \rProt_{B2} \mid \rZ_B$ by~\Cref{clm:GB-GAcapMRSj1-ZB} so we can apply~\itfacts{cond-reduce}} \\
		&=  \en{\rG_A \cap \MRS_{\rJ_1}} - \mi{\rG_A \cap \MRS_{\rJ_1}}{\rZ_B} \tag{by the definition of mutual information in~\Cref{eq:mi}} \\
		&= H_2(\delta) \cdot r_1 - o(r_1) \tag{by the distribution of $G_A \cap \MRS_{j_1}$ for the first term and~\Cref{lem:phase-two} for the second}.
	\end{align*}
	Plugging in this bound in~\Cref{eq:ent-matching}, we get that, 
	\[
		\val{\prot} = \Exp\card{\rProt_{B2}} = o(r_1),
	\]
	as desired. 
\end{proof}

\noindent
\Cref{thm:game} now follows  from~\Cref{lem:lower-4} and our assumption in~\Cref{sec:lower-1} on $\cost{\prot}$. 


\newcommand{\rS}{\rv{S}}
\newcommand{\rz}{\rv{z}}

\section{The Hiding Property of Augmentation Graphs (\Cref{lem:xor-lemma})}\label{sec:xor-lemma}

In this section, we prove~\Cref{lem:xor-lemma} used in~\Cref{sec:lower-1} which was the missing part of the proof of our main lower bound in~\Cref{thm:game}. 


\begin{lemma*}[Re-statement of~\Cref{lem:xor-lemma}]
Let $\GRS$ be an $(r,t)$-RS graph and integers $k,\ell$ be such that $k \cdot \ell = (1-\delta) \cdot r$ for some absolute constant $\delta \in (0,1)$. 

Suppose we sample $X$ uniformly  from $\set{0,1}^{r \times t}$ and let $H := \encodedrs(\GRS,X)$. 
Additionally, for any vector $Y \in \set{0,1}^{\ell}$, we let $\bar{H}(Y)$ be a graph obtained by sampling an augmentation graph $A$ from $\distaug(\GRS,Y,k)$
conditioned on encoded-RS graph of $A$ being $H$, and then letting $\bar{H}(Y)$ to be the edges of $A$ outside the encoded-RS graph $H$.  

Consider any  function $\phi :\supp{\rH} \rightarrow \set{0,1}^s$ that encodes each graph $H$ to a message $\phi(H)$ of size $s = o(\delta \cdot t \cdot (\delta r)^{1-2/k})$. 
Then, w.p. $1-o(1)$ over the choice of encoding $\phi \sim \phi(\rH)$ for $\rH$ chosen as above, the following event $\event(\phi)$ happens: 

\begin{itemize}[leftmargin=10pt]
		\item \textbf{Event $\event(\phi)$:} For any pairs of vectors $Y_1 \neq Y_2 \in \set{0,1}^{\ell}$, 
		\[
			\tvd{\distribution{\bar{\rH}(Y_1) \mid \phi(\rH)=\phi}}{\distribution{\bar{\rH}(Y_2) \mid \phi(\rH)=\phi}} = o(1). 
		\] 
\end{itemize}
\end{lemma*}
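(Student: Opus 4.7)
The plan is to analyze $\distribution{\bar{\rH}(Y) \mid \phi(\rH)=\phi}$ by first identifying it, up to the canonical labeling of the auxiliary vertex sets $P$ and $Q$, with the distribution of the pair $(j, \UU)$ underlying the augmentation graph, conditioned on $\phi(\rH)$ and on the XOR pattern $\chi_{j, \UU}(X) := (\bigoplus_{p=1}^{k} X_{u_{i,p},j})_{i \in [\ell]}$ being equal to $Y$. Since $(j, \UU)$ is drawn independently of $X$ and then post-selected by $Y$, Bayes' rule gives
\[
\mu_Y(j_0, \UU_0) \;:=\; \Pr\bigl[(j, \UU) = (j_0, \UU_0) \bigm| Y, \phi\bigr] \;\propto\; p_Y(j_0, \UU_0), \qquad p_Y(j_0, \UU_0) := \Pr_{X \in V_\phi}\!\bigl[\chi_{j_0, \UU_0}(X)=Y\bigr],
\]
where $V_\phi := \phi^{-1}(\phi) \subseteq \set{0,1}^{r \times t}$. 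By the triangle inequality it suffices to exhibit a single $Y$-independent distribution $\mu^*$ (I will take $\mu^*$ uniform on $(j, \UU)$) such that $\tvd{\mu_Y}{\mu^*} = o(1)$ for \emph{every} $Y$.

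Next, I would Fourier-expand the constraint indicator on $\set{0,1}^{r \times t}$. Setting $F_\phi := \mathbf{1}_{V_\phi}/\Pr[V_\phi]$ and $T_{j_0, \UU_0}(S) := \set{(u_{i,p}, j_0) : i \in S,\, p \in [k]}$ for $S \subseteq [\ell]$, the vertex-disjointness of $\UU_0$ forces $|T_{j_0, \UU_0}(S)| = k|S|$ and makes the sets $T_{j_0, \UU_0}(\set{i})$ pairwise disjoint, yielding
\[
p_Y(j_0, \UU_0) \;=\; 2^{-\ell}\sum_{S \subseteq [\ell]} (-1)^{Y \cdot S}\,\widehat{F_\phi}\bigl(T_{j_0, \UU_0}(S)\bigr).
\]
The $S=\emptyset$ term is exactly $2^{-\ell}$, and every deviation $p_Y - q_Y$ is a character sum in $Y$ whose coefficients are the Fourier coefficients $\widehat{F_\phi}(T_{j_0, \UU_0}(S))$ at structured subsets indexed by nonempty $S$. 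Thus $\tvd{\mu_Y}{\mu^*}$ is controlled by these Fourier coefficients averaged over $(j_0, \UU_0)$.

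The information-theoretic direct-sum input is the bound $\mi{\rX}{\phi(\rH)} \le s$. Discarding a $2^{-\Omega(s)}$-measure set of ``light'' $\phi$'s lets me assume $\Pr[V_\phi] \ge 2^{-O(s)}$, so that $\|F_\phi\|_2^2 \le 2^{O(s)}$ and $\ln(1/\Pr[V_\phi]) = O(s)$. On this ``heavy'' event, the Bonami--Beckner level-$d$ inequality gives $\sum_{|T|=ku} \widehat{F_\phi}(T)^2 \le (Cs/(ku))^{ku}$ for each $u \ge 1$. Averaging the level-$ku$ bound over the random column $j_0 \in [t]$ supplies a factor of $1/t$ for the per-column Fourier mass (since each $T_{j_0,\UU_0}(S)$ lies entirely inside column $j_0$), and averaging over $\UU_0$ inside column $j_0$ contributes the combinatorial ratio $\binom{\ell}{u}/\binom{r}{ku}$ (counting how many triples $(j_0, \UU_0, S)$ produce a given $T$ of size $ku$ in that column). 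Multiplying these three ingredients and summing over $u \in [\ell]$ gives a series whose head term is of the order $\ell \cdot (Cs/r)^k / t$; the hypothesis $s = o\bigl(\delta \cdot t \cdot (\delta r)^{1-2/k}\bigr)$ is calibrated precisely so that this head term (and hence the whole geometric series) is negligible. Markov's inequality over $(j_0, \UU_0)$ then ensures the per-pair Fourier sum $\sum_{S \ne \emptyset} \widehat{F_\phi}(T_{j_0, \UU_0}(S))^2$ is correspondingly small for all but an $o(1)$-fraction of $(j_0, \UU_0)$, and the ``bad'' $(j_0, \UU_0)$ contribute negligibly via the trivial bound $p_Y \le 1$.

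The main obstacle I expect is the interplay between the $2^\ell$-scale of the distribution on $Y$ (since $p_Y, q_Y = \Theta(2^{-\ell})$) and the $\ell$-variable character sum controlling $p_Y - q_Y$: a blunt Cauchy--Schwarz across the $2^\ell$ Fourier modes at a fixed $(j_0, \UU_0)$ loses a $2^\ell$ factor, which would demand an impossibly tight bound on the averaged Fourier mass. Overcoming this requires exploiting the near-orthogonality of the coefficients $\widehat{F_\phi}(T_{j_0,\UU_0}(S))$ across different $S$ (the $T(\set{i})$'s being disjoint makes the characters behave independently with respect to a uniform-in-$X$ baseline) and performing the $Y$-averaging via Parseval on $\set{0,1}^\ell$ rather than an $L^\infty$-in-$Y$ bound, so that the $2^\ell$ and $2^{-\ell}$ factors cancel. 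The precise bookkeeping between the level-$ku$ hypercontractive bound, the column-averaging factor $1/t$, and the combinatorial ratio $\binom{\ell}{u}/\binom{r}{ku}$ is exactly where the quantitative form $s = o(t \cdot r^{1-2/k})$ of the hypothesis is consumed.
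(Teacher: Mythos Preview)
Your overall Fourier set-up is sound, and the identification of $\bar{H}(Y)$ with the law of $(j,\UU)$ conditioned on $\phi$ and on $\chi_{j,\UU}(X)=Y$ is correct. But the last paragraph contains a genuine gap that breaks the argument as written.

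The lemma requires the TVD bound to hold \emph{for every} pair $Y_1,Y_2\in\{0,1\}^\ell$; equivalently, you need $\tvd{\mu_Y}{\mu^*}=o(1)$ \emph{uniformly in $Y$}. Your proposed fix, ``performing the $Y$-averaging via Parseval on $\{0,1\}^\ell$ rather than an $L^\infty$-in-$Y$ bound,'' can only give
\[
\Exp_{Y}\bigl[(p_Y-2^{-\ell})^2\bigr] \;=\; 2^{-2\ell}\sum_{S\neq\emptyset}\widehat{F_\phi}\bigl(T_{j_0,\UU_0}(S)\bigr)^2,
\]
i.e.\ an $L^2$-in-$Y$ bound, which does not upgrade to the required $L^\infty$ statement. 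The alternative route, the triangle inequality $|p_Y-2^{-\ell}|\le 2^{-\ell}\sum_{S\neq\emptyset}|\widehat{F_\phi}(T(S))|$, fails quantitatively: the level-$u$ contribution to this $L^1$ sum is (up to constants) $\binom{\ell}{u}\,(Cs/r)^{ku/2}/\sqrt{t}$, and under $s=\Theta(t\,r^{1-2/k})$ one has $(Cs/r)^{k/2}=\Theta(t^{k/2}/r)$, so the ratio between consecutive levels is of order $\ell\cdot t^{k/2}/r\approx t^{k/2}/k\gg 1$. The series diverges, and already the $u=1$ term is $\ell\cdot t^{(k-1)/2}/r$, which is not $o(1)$ for the parameter regime of interest. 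So neither the $L^2$-in-$Y$ nor the $L^1$-in-$S$ route closes the argument.

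The paper resolves exactly this obstacle by a different decomposition. First, a direct-sum argument shows that with high probability over $(\phi,j^\star)$ the column $X^{j^\star}$ has entropy $r-o(\delta(\delta r)^{1-2/k})$; a near-uniform decomposition then lets one work with $X^{j^\star}$ uniform on a set of size $\ge 2^{r-o(\delta(\delta r)^{1-2/k})}$. Second, and crucially, a \emph{hybrid over coordinates} $i=1,\dots,\ell$ (changing one entry $Y_i$ at a time) replaces the $2^\ell$-term character sum by $\ell$ single-XOR bias problems. Each of these is then bounded by the KKL inequality: for a random $k$-subset $S$ of the $\ge \delta r$ remaining coordinates, the average squared bias is at most $\bigl(O(\log(2^{r'}/|\mathrm{supp}|))/r'\bigr)^k=o(1/r^2)$, hence the per-step bias is $o(1/\ell)$ and the hybrid sums to $o(1)$, uniformly in $Y_1,Y_2$. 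Without such a one-coordinate-at-a-time reduction, there is no apparent way to avoid the $2^\ell$ loss you identified.
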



We start the proof of~\Cref{lem:xor-lemma} with the following notation. 

\paragraph{Notation.} For any $i \in [r]$ (resp. $j \in [t])$, we use $X_i$ (resp. $X^j$) to denote the $i$-th row (resp. $j$-th column) of $X$; similarly, $X^j_i$ denotes the $(i,j)$-entry of the matrix $X$. To avoid confusion, we use $\jstar \in [t]$ 
to denote the index of the random induced matching $\MRS_{\jstar}$  in $\distaug$. Additionally, for any $\vec{u}_i \in \UU$ (chosen in the augmentation graph), we use $X(\vec{u}_i) := (X^{\jstar}_{i_1},\ldots,X^{\jstar}_{i_k})$
for $(i_1,\ldots,i_k) = \vec{u}_i$ and define $\oplus X(\vec{u}_i) = \oplus_{x \in X(\vec{u}_i)} x$. Similarly, define $\oplus(X(\UU)) = (\oplus X(\vec{u}_1),\ldots, \oplus X(\vec{u}_\ell))$. We have $Y = \oplus X(\UU)$ by~\Cref{obs:aug-path}, thus, our goal is to 
show that $\oplus X(\UU)$ remains hidden. 

Finally, given that there is a one-to-one mapping between $X$ and the encoded-RS graph $H$, to avoid clutter, we slightly abuse the notation
and write $\phi(X)$ instead of $\phi(H)$ and consider $\phi$  as a mapping from $\set{0,1}^{r \times t} \rightarrow \set{0,1}^s$.

We are now ready for the proof. The first step is to show that w.p. $1-o(1)$, the entropy of $X^{\jstar}$ is sufficiently large. Formally, 
\begin{claim}\label{clm:high-entropy-jstar}
	W.p. $1-o(1)$ over the choice of $(\phi,\jstar) \sim (\rPhi,\rJ^{\star})$, we have, 
	\[
		\en{\rX^{\jstar} \mid \rPhi = \phi, \rJ^{\star}=\jstar} =  r-o(\delta \cdot (\delta r)^{1-2/k}). 
	\]
	(We denote this event by $\event(\phi,\jstar)$). 
\end{claim}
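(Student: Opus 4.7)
The plan is a short entropy-deficit argument followed by Markov's inequality. First I would record the global bound: since $\rX$ is uniform on $\{0,1\}^{r\times t}$ we have $\en{\rX} = rt$, and since $\rPhi=\phi(\rX)$ has range of size $2^s$, $\mi{\rX}{\rPhi}\leq \en{\rPhi}\leq s$. Combining these gives $\en{\rX \mid \rPhi} \geq rt - s$.

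Next I would localize this deficit to a single random column. Writing $\rX = (\rX^1, \ldots, \rX^t)$ and applying sub-additivity of conditional entropy (\itfacts{sub-additivity}),
\[
\sum_{j=1}^t \en{\rX^j \mid \rPhi} \;\geq\; \en{\rX \mid \rPhi} \;\geq\; rt - s.
\]
Since each $\en{\rX^j \mid \rPhi} \leq \en{\rX^j} = r$, the nonnegative column-deficits $r - \en{\rX^j\mid\rPhi}$ sum to at most $s$, so their average over a uniformly random $j\in[t]$ is at most $s/t$, which by hypothesis is $o(\delta\cdot (\delta r)^{1-2/k})$.

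The third step is to couple this bound to the sampled $\jstar$. By~\Cref{def:dist-aug}, $\rJ^{\star}$ is uniform on $[t]$ and independent of $\rX$ (and hence of $\rPhi$, which is a deterministic function of $\rX$), so conditioning on $\rJ^{\star}=\jstar$ is the same as fixing $j=\jstar$ in the previous estimate. Thus the nonnegative random variable $\Delta := r - \en{\rX^{\jstar}\mid \rPhi = \phi,\, \rJ^{\star} = \jstar}$ has expectation at most $s/t$ over $(\phi,\jstar)\sim(\rPhi,\rJ^{\star})$. Applying Markov's inequality at the threshold $\sqrt{(s/t)\cdot \delta(\delta r)^{1-2/k}}$ simultaneously makes the threshold $o(\delta\cdot(\delta r)^{1-2/k})$ and the exceedance probability $o(1)$, which is exactly the statement of~\Cref{clm:high-entropy-jstar}.

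I do not expect a genuine obstacle in this particular claim: the only subtlety is verifying the independence of $\rJ^{\star}$ from $(\rX,\rPhi)$, which is immediate from~\Cref{def:dist-aug} together with the fact that $\phi$ depends only on $\rX$. The ``typical random column still has near-maximum entropy'' conclusion here is really the warm-up for~\Cref{lem:xor-lemma}; the harder step, which I would expect to come afterwards, is translating this entropy lower bound on $\rX^{\jstar}$ into the required Fourier-analytic statement that the parities $\oplus X(\UU)$ imposed on the hidden induced matching $\MRS_{\jstar}$ are close to uniform, so that the distribution of $\bar{\rH}(Y)$ is almost insensitive to $Y$.
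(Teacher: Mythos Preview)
Your proposal is correct and essentially identical to the paper's proof: both compute that the average column-entropy deficit is at most $s/t = o(\delta(\delta r)^{1-2/k})$ and then apply Markov's inequality, with the only cosmetic difference being that you reach $\sum_j \mi{\rX^j}{\rPhi}\le s$ via sub-additivity of conditional entropy whereas the paper reaches it via the chain rule $\sum_j \mi{\rX^j}{\rPhi\mid\rX^{<j}}=\mi{\rX}{\rPhi}$ together with \Cref{prop:info-increase}. Your anticipation of the next step (passing from high entropy of $\rX^{\jstar}$ to a Fourier/KKL bound on the parities $\oplus X(\UU)$) is also exactly what the paper does.
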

\begin{proof}
	The proof is a simple direct-sum style argument  as follows: 
	\begin{align*}
		\mi{\rX^{\rJ^{\star}}}{\rPhi \mid \rJ} &= \frac{1}{t} \cdot \sum_{j=1}^{t} \mi{\rX^{j}}{\rPhi \mid \rJ^{\star}=j} \tag{by the uniform choice of distribution of $\jstar$} \\
		&= \frac{1}{t} \cdot \sum_{j=1}^{t} \mi{\rX^{j}}{\rPhi} \tag{as $(\rX^j,\rPhi) \perp \rJ^{\star}=j$ since $\rX$ is independently uniform and $\rPhi$ is a function of $\rX$} \\
		&\leq  \frac{1}{t} \cdot \sum_{j=1}^{t} \mi{\rX^{j}}{\rPhi \mid \rX^{<j}} \tag{by~\Cref{prop:info-increase} as $\rX^j \perp \rX^{<j}$} \\
		&= \frac{1}{t} \cdot \mi{\rX}{\rPhi} \tag{by chain rule of mutual information (\itfacts{chain-rule})} \\
		&\leq \frac{1}{t} \cdot \en{\rPhi} =  o(\delta \cdot (\delta r)^{1-2/k}). \tag{by~\itfacts{uniform} as the message size is $o(\delta t \cdot (\delta r)^{1-2/k})$ bits} 
	\end{align*}
	By the definition of mutual information, we have, 
	\[
		\en{\rX^{\rJ^{\star}} \mid \rPhi,\rJ} = \en{\rX^{\rJ^{\star}} \mid \rJ^{\star}} - \mi{\rX^{\rJ^{\star}}}{\rPhi \mid \rJ^{\star}} = r - o(\delta \cdot (\delta r)^{1-2/k}), 
	\]
	as $\rX^{\rJ^{\star}} \mid \rJ^{\star}$ is uniformly distributed over $\set{0,1}^{r}$. 
	Given that entropy of $\rX^{\rJ^{\star}}$ can never be more than $r$ (as its support has at most $r$ variables), by Markov bound, 
	we have that w.p. $1-o(1)$ over the choice of $(\phi,\jstar) \sim (\rPhi , \rJ^{\star})$, $\en{\rX^{\jstar} \mid \rPhi = \phi, \rJ^{\star}=\jstar} = r-o(\delta \cdot (\delta r)^{1-2/k})$ as desired. \Qed{clm:high-entropy-jstar}
	
\end{proof}

We now use the bound on the entropy of $X^{\jstar}$ to argue that its distribution is almost a convex combination of a series of near-uniform distributions over ``large'' supports. 

\begin{claim}\label{clm:near-uniform}
	Conditioned on the event $\event(\phi,\jstar)$, we have, 
	\[
		\distribution{\rX^{\jstar} \mid \phi(\rX) = \phi,\rJ^{\star}=\jstar} = \sum_{l=0}^{L} p_l \cdot \mu_l,
	\]
	such that $p_0 = o(1)$ and for every $l \in [L] \setminus \set{0}$: 
	\begin{itemize}
		\item $\log{\card{\supp{\mu_l}}} \geq r - o(\delta \cdot (\delta \cdot r)^{1-2/k})$;
		\item $\tvd{\mu_l}{\uniform_l} = o(1)$, where $\uniform_l$ is the uniform distribution over $\supp{\mu_l}$. 
	\end{itemize}
	(We further use $\event(\phi,\jstar,l)$ for $l \in [L]$ to denote the combined event of $\event(\phi,\jstar)$ and that $\rX$ is chosen from $\mu_l$ in the given convex combination). 
\end{claim}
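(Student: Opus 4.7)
The plan is to apply a standard typical-set-plus-fine-dyadic-partition argument to the conditional distribution $\mu := \distribution{\rX^{\jstar} \mid \phi, \jstar}$, which by the hypothesis $\event(\phi,\jstar)$ has entropy $\en{\mu} = r - \Delta$ with $\Delta = o(\delta \cdot (\delta r)^{1-2/k})$. I will pick three parameters: a typicality width $a$ with $\Delta \ll a = o(\delta (\delta r)^{1-2/k})$ (for instance $a = \max(\sqrt{\Delta \cdot \delta(\delta r)^{1-2/k}}, \log^2 r)$); a thinness threshold $b$ with $a \ll b = o(\delta (\delta r)^{1-2/k})$ (say $b = 2a$); and a band-fineness $\epsilon = o(1)$ (say $\epsilon = 1/\log\log r$). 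All three lie comfortably inside the target $o(\delta(\delta r)^{1-2/k})$ window precisely because $\Delta$ is strictly sub-$\delta(\delta r)^{1-2/k}$ by~\Cref{clm:high-entropy-jstar}.

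I would first isolate two atypical tails. The low-probability tail $\set{x : \mu(x) < 2^{-r-a}}$ has $\mu$-mass at most $2^r \cdot 2^{-r-a} = 2^{-a} = o(1)$ by a direct counting bound. For the high-probability tail $\set{x : \mu(x) > 2^{-r+a}}$ with mass $q$, splitting the entropy gives $\en{\mu} \leq q (r-a) + (1-q) r + H_2(q) = r - qa + H_2(q)$, which combined with $\en{\mu} \geq r - \Delta$ forces $qa \leq \Delta + 1$; hence $q = o(1)$ because $a/\Delta = \omega(1)$. Both tails are absorbed into $\mu_0$.

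Next, I would partition the typical set $T := \set{x : 2^{-r-a} \leq \mu(x) \leq 2^{-r+a}}$ into geometric bands $T_l := \set{x \in T : \mu(x) \in [c_l, c_l (1+\epsilon))}$, producing $O(a/\epsilon)$ bands. Let $p_l := \mu(T_l)$ and let $\mu_l$ be $\mu$ conditioned on $T_l$. Inside each band, $\mu(x)$ varies by factor at most $(1+\epsilon)$, so a direct comparison with the uniform distribution $\uniform_{T_l}$ on $T_l$ gives $\tvd{\mu_l}{\uniform_{T_l}} \leq \epsilon/2 = o(1)$. I would additionally discard into $\mu_0$ any band with $\card{T_l} < 2^{r-b}$; since every element of such a band has $\mu(x) \leq 2^{-r+a}$, that band contributes at most $2^{r-b} \cdot 2^{-r+a} = 2^{-(b-a)}$ to the mass, and summing over $O(a/\epsilon)$ bands yields total discarded mass $O((a/\epsilon) \cdot 2^{-(b-a)}) = o(1)$ by the parameter choice $b = 2a$. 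Surviving bands have $\log \card{\supp{\mu_l}} = \log \card{T_l} \geq r - b = r - o(\delta (\delta r)^{1-2/k})$, matching the required support bound. Finally, defining $\event(\phi,\jstar,l)$ as the conjunction of $\event(\phi,\jstar)$ and $\rX^{\jstar} \in T_l$ finalizes the decomposition.

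The main obstacle is the simultaneous balance of $a$, $b$, and $\epsilon$: each must fit in the $o(\delta (\delta r)^{1-2/k})$ window required by the conclusion, yet must be large enough that its associated tail/thinness/within-band-non-uniformity contribution is $o(1)$. The slack needed to achieve this comes entirely from $\Delta$ being strictly smaller in order than $\delta (\delta r)^{1-2/k}$, which is exactly what~\Cref{clm:high-entropy-jstar} supplies; every other step is a routine verification of the dyadic-decomposition mechanics.
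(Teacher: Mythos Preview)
Your proposal is correct. The paper's own proof is a one-line invocation of the auxiliary \Cref{lem:aux1} (stated in the appendix and cited from~\cite{AssadiKL17,AAAK17}) with $\gamma = o(\delta\cdot(\delta r)^{1-2/k})$ and $\eps = o(1)$; what you have written is essentially a self-contained proof of that lemma specialized to the present parameters. The underlying mechanism---trim the atypical tails of a high-entropy distribution and slice the typical set into geometric bands, each close to uniform on a large support---is the standard way such decomposition lemmas are proved, so the two approaches are the same in substance. Your version has the minor advantage of being explicit about the parameter balance (your $a,b,\epsilon$ correspond to the $\gamma,\gamma/\eps,\eps$ of \Cref{lem:aux1}), while the paper's version keeps the argument modular by delegating to a reusable black box.
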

\begin{proof}
	A direct corollary of~\Cref{lem:aux1} by setting $\gamma = o(\delta \cdot (\delta \cdot r)^{1-2/k})$ and $\eps = o(1)$. 
\end{proof}

\noindent
We can now start bounding the LHS of~\Cref{lem:xor-lemma} whenever $\event(\phi,\jstar,l)$ for some $l \in [L] \setminus \set{0}$ happens. 
Define: 
\begin{itemize}
	\item $\rX^*$ to be a random variable sampled from $\uniform_l$ (so, by~\Cref{clm:near-uniform}, we have that distribution of $\rX^{\jstar}$ is close to that of $\rX^*$ conditioned on $\event(\phi,\jstar,l)$). 
	\item $\rH^*(Y_1)$ and $\rH^*(Y_2)$ to be the same as $\bar{\rH}(Y_1)$ and $\bar{\rH}(Y_2)$, respectively, 
with the difference that to sample them, instead of sampling $\rX^{\jstar}$ from $\mu_l$ and then defining the corresponding augmentation graph, we sample $X^{\jstar}$ from $\rX^*$ and then continue as before. 
\end{itemize}
Given the similarity of distributions $\mu_l$ and $\uniform_l$ by~\Cref{clm:near-uniform}, we can use~\Cref{fact:tvd-small} to get, 
\begin{align}
	&\tvd{\distribution{\bar{\rH}(Y_1) \mid \event(\phi,\jstar,l)}}{\distribution{\bar{\rH}(Y_2) \mid \event(\phi,\jstar,l)}} \notag \\ 
	&\hspace{1cm} \leq \tvd{\distribution{{\rH^*}(Y_1) \mid \event(\phi,\jstar,l)}}{\distribution{{\rH^*}(Y_2) \mid \event(\phi,\jstar,l)}} + o(1) \label{eq:change-distribution}. 
\end{align}
\noindent
Notice that after conditioning on $\jstar$, the only random choice in $\rH^*$ is the choice of $\UU$. 
As such, 
We can now apply a hybrid argument using~\Cref{fact:tvd-chain-rule} as follows: 
\begin{align}
	&\tvd{\distribution{{\rH^*}(Y_1) \mid \event(\phi,\jstar,l)}}{\distribution{{\rH^*}(Y_2) \mid \event(\phi,\jstar,l)}} \notag \\
	&\hspace{0.5cm} = \tvd{\distribution{\UU \mid Y_1,\event(\phi,\jstar,l)}}{\distribution{\UU \mid Y_2, \event(\phi,\jstar,l)}} \notag \\
	&\hspace{0.5cm} \leq \sum_{i=1}^{\ell} \Ex_{\UU^{<i} \mid Y_1,\event(\phi,\jstar,l)} \tvd{\distribution{\vec{\ru}_i \mid \UU^{<i},Y_1,\event(\phi,\jstar,l)}}{\distribution{\vec{\ru}_i \mid \UU^{<i},Y_2,\event(\phi,\jstar,l)}} \tag{by~\Cref{fact:tvd-chain-rule}}\\
	&\hspace{0.5cm} \leq \sum_{i=1}^{\ell} \Ex_{\UU^{<i} \mid Y_1,\event(\phi,\jstar,l)} \Ex_{\vec{\ru}_i \mid \UU^{<i},\event(\phi,\jstar,l)} \notag \\
	&\hspace{0.5cm} \card{\Pr\paren{\oplus\rX^*(\vec{u}_i)=Y_{1,i} \mid \vec{u}_i, \oplus\rX^*(\UU^{<i})=Y_1^{<i}}-\Pr\paren{\oplus\rX^*(\vec{u}_i)=Y_{2,i} \mid \vec{u}_i, \oplus\rX^*(\UU^{<i})=Y_1^{<i}}} \label{eq:hybrid},
\end{align}
where the last inequality is by~\Cref{fact:tvd-sample}. In words, the difference between $\rH^*(Y_1)$ and $\rH^*(Y_2)$ can be bounded by \emph{sum} of the following: 
\begin{itemize}
	\item What is the difference between the probability of $\oplus \rX^*(\vec{u}_i)$ being $0$ or $1$, in expectation over the choice of $\vec{u}_i$, assuming that $\rX^*$ is chosen 
	such that $\oplus \rX^*(\UU^{<i})$ is chosen according to the first $(i-1)$ indices of $Y_1$? 
\end{itemize}
In other words,~\Cref{eq:hybrid} reduces our task of proving the lower bound to bounding the advantage one gets (from the message $\phi$ and other conditioned terms) when focusing only 
on a single term $\vec{u}_i$ even if we condition on the remaining values of $\UU$. This is the content of the following claim. 

\begin{claim}\label{clm:fourier-xor}
	Conditioned on $\event(\phi,\jstar,l)$ for any $l \in [L] \setminus \set{0}$, we have that for every $i \in [\ell]$, 
	\begin{align*}
	&\hspace{5cm} \Ex_{\UU^{<i} \mid Y_1,\event(\phi,\jstar,l)} \quad \Ex_{\vec{\ru}_i \mid \UU^{<i},\event(\phi,\jstar,l)} \\
	&\card{\Pr\paren{\oplus\rX^*(\vec{u}_i)=0 \mid \vec{u}_i, \oplus\rX^*(\UU^{<i})=Y_1^{<i}}-\Pr\paren{\oplus\rX^*(\vec{u}_i)=1 \mid \vec{u}_i, \oplus\rX^*(\UU^{<i})=Y_1^{<i}}} \\
	&\hspace{7cm} = o(1).
	\end{align*}
\end{claim}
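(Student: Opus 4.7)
The plan is to prove the claim by Fourier analysis on the Boolean hypercube $\set{0,1}^r$, in the spirit of~\cite{GavinskyKKRW07}. By \Cref{clm:near-uniform}, conditioned on $\event(\phi,\jstar,l)$ with $l\neq 0$, the distribution of $\rX^{\jstar}$ is $o(1)$-close in TVD to $\nu := \uniform_l$, the uniform measure on a set $S\subseteq \set{0,1}^r$ of size $|S| \geq 2^{r-\gamma}$ for $\gamma = o(\delta(\delta r)^{1-2/k})$. By \Cref{fact:tvd-small} it therefore suffices to prove the bound with $X^*\sim\nu$ (incurring $o(1)$ additive error). Let $\rho := 2^r \mathbf{1}_S/|S|$ denote the density of $\nu$ with respect to the uniform measure on $\set{0,1}^r$; Parseval then yields $\sum_{T\subseteq[r]}\hat\rho(T)^2 = \|\rho\|_2^2 = 2^r/|S| \leq 2^{\gamma}$. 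Writing $T_j\subseteq[r]$ for the (size-$k$) coordinate set of $\vec{u}_j$, vertex-disjointness of $\UU$ forces the $T_j$'s to be pairwise disjoint, so $\Xor_{T_i}\cdot\Xor_{T_W} = \Xor_{T_i\cup T_W}$ for $T_W := \bigsqcup_{j\in W}T_j$ and every $W\subseteq[i-1]$.

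Expanding the indicator of the conditioning event $A := \bigwedge_{j<i}\{\Xor_{T_j}(X^*) = (-1)^{Y_{1,j}}\}$ as
\[
\mathbf{1}_A \;=\; 2^{-(i-1)}\!\!\sum_{W\subseteq[i-1]}\epsilon_W\,\Xor_{T_W},\qquad \epsilon_W := \prod_{j\in W}(-1)^{Y_{1,j}},
\]
and using $\Ex_\nu[\Xor_{S'}] = \hat\rho(S')$, the quantity inside the absolute value in the claim rewrites as the Fourier ratio
\[
\frac{\sum_{W\subseteq[i-1]} \epsilon_W\,\hat\rho(T_i\cup T_W)}{\sum_{W\subseteq[i-1]} \epsilon_W\,\hat\rho(T_W)}.
\]
For the \emph{numerator}, I average over a uniformly random $k$-subset $T_i$ of the $\geq \delta r$ coordinates unused by $\UU^{<i}$, apply Cauchy-Schwarz to the $2^{i-1}$ inner terms, and exchange the sum with the expectation:
\[
\Ex_{T_i}\!\!\left[\Bigl|2^{-(i-1)}\!\sum_W \epsilon_W\hat\rho(T_i\cup T_W)\Bigr|^{2}\right] \;\leq\; 2^{-(i-1)}\!\sum_W \Ex_{T_i}\!\left[\hat\rho(T_i\cup T_W)^2\right] \;\leq\; \frac{2^\gamma}{\binom{\delta r}{k}},
\]
using that for any fixed $S'\subseteq[r]$ the $\binom{\delta r}{k}$ allowable $T_i$'s yield disjoint sets of coordinates whose $\hat\rho$-masses sum to at most $2^\gamma$ by Parseval. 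Jensen then bounds the expected absolute numerator by $\sqrt{2^\gamma/\binom{\delta r}{k}}$, uniformly in $i$ thanks to the cancellation between the $2^{i-1}$ Cauchy-Schwarz loss and the $2^{-(i-1)}$ prefactor of $\mathbf{1}_A$. A parallel second-moment argument --- this time over the random $\UU^{<i}$ drawn conditional on $Y_1$ and $\event$ --- handles the \emph{denominator}: the non-trivial contribution $\sum_{W\neq\emptyset}\epsilon_W\hat\rho(T_W)$ has squared mass $\leq 2^\gamma/\binom{\delta r}{k}$, so with probability $1-o(1)$ the denominator equals $2^{-(i-1)}(1\pm o(1))$.

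Combining the two, $\Ex_{T_i}|\text{bias}| \leq O\bigl(2^{i-1}\sqrt{2^\gamma/\binom{\delta r}{k}}\bigr)+o(1)$; substituting $\binom{\delta r}{k}\geq (\delta r/k)^k = 2^{k\log(\delta r/k)}$ together with $\gamma = o(\delta(\delta r)^{1-2/k})$ yields the claimed $o(1)$ bound, which --- combined with a Hamming-$1$ reduction on the outer hybrid~\eqref{eq:hybrid}, so that only a single $i$ contributes per adjacent pair --- closes the proof of both the claim and \Cref{lem:xor-lemma}. The main obstacle will be the quantitative bookkeeping: the $2^{i-1}$ blowup from inverting the denominator must be absorbed by the $k$-subset counting factor $\binom{\delta r}{k}/2^\gamma$, and the specific exponent $1-2/k$ in the entropy-deficit tolerance $\gamma = o(\delta(\delta r)^{1-2/k})$ is engineered precisely to balance the Cauchy-Schwarz loss against the Parseval/level-$k$ budget; carefully controlling the dependence between $\UU^{<i}$ and the conditioning on $Y_1$ so that the numerator and denominator bounds combine properly is the most delicate part of the argument.
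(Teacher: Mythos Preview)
Your approach has a genuine quantitative gap: Parseval is too weak here, and the paper's proof relies essentially on the KKL inequality (\Cref{prop:kkl}) instead.

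With Parseval you get $\sum_{T}\hat\rho(T)^2 = 2^r/|S| \leq 2^{\gamma}$ where $\gamma = o(\delta(\delta r)^{1-2/k})$. In the regime of interest $k$ is a constant (or at most doubly-logarithmic), so for $k\geq 3$ the allowed deficit $\gamma$ is of order $r^{1-2/k}$, which is far larger than $k\log(\delta r)$. Hence $2^{\gamma}$ is super-polynomial in $r$ while $\binom{\delta r}{k} = r^{O(1)}$, and your key ratio $2^{\gamma}/\binom{\delta r}{k}$ is not small --- it blows up. The numerator bound $\sqrt{2^{\gamma}/\binom{\delta r}{k}}$ is therefore vacuous, and no amount of bookkeeping with the $2^{i-1}$ factor or a Hamming-$1$ hybrid rescues it. (Even for $k=2$, where $\gamma=o(1)$ and Parseval is usable for the numerator, the denominator blowup $2^{i-1}$ with $i$ up to $\ell=\Theta(r)$ still destroys the bound.)

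The paper's proof avoids both problems. First, it does not expand the indicator of the conditioning event; instead it passes to the conditional distribution of $X^*$ on the $r' = r-(i-1)k \geq \delta r$ unused coordinates and works there directly, so no ratio (and no $2^{i-1}$ factor) ever appears. Second, and crucially, it applies KKL to the indicator $f$ of the conditional support: for every $\alpha\in(0,1)$,
\[
\sum_{|S|=k}\hat f(S)^2 \;\leq\; \alpha^{-k}\Bigl(\tfrac{|\supp{f}|}{2^{r'}}\Bigr)^{2/(1+\alpha)}.
\]
Optimizing $\alpha$ (taking $\alpha \approx k/\log(2^{r'}/|\supp{f}|)$) turns this into $\Exp_{S}[\bias(z,S)^2] \leq \bigl(O(\gamma/r')\bigr)^{k}$, which is \emph{polynomially} small in the entropy deficit over $r'$. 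Plugging in $\gamma = o(\delta(\delta r)^{1-2/k})$ and $r'\geq \delta r$ gives $(\gamma/r')^{k} = o((\delta r)^{-2})$, hence $\Exp_{S}|\bias| = o(1/r)$, which is what the hybrid in~\eqref{eq:hybrid} needs (summing $\ell\leq r$ terms). The exponent $1-2/k$ in the tolerated deficit is calibrated to this KKL calculation, not to a Parseval/level-$k$ budget; with Parseval the matching tolerance would only be $\gamma = O(k\log r)$, far too restrictive for the application.
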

\begin{proof}
	We first consider the distribution of $\rX^{*} \mid \oplus\rX^*(\UU^{<i})=Y_1^{<i}$. Each $\vec{u}_j \in \UU^{<i}$ fixes the value of $X^*$ in $k$ coordinates. As such, 
	after this conditioning, and by re-indexing the unfixed coordinates, we can think of $X^*$ as a subset of $\set{0,1}^{r'}$ for $r' = r - (i-1) \cdot k$. Similarly, the choice of $\vec{u}_i$ will also be a uniform $k$-subset of $[r']$ in this case as 
	each $\vec{u}_j \in \UU^{<i}$  ``consumes'' $k$ indices of $[r]$, leaving the indices of $[r']$ untouched. 
	As such, we can use $\rz \in_R \set{0,1}^{r'}$ to denote $X^{*}$ after this transitioning and $\rS=(i_1,\ldots,i_k) \subseteq_R [r']$ to denote $\vec{u}_i$. We now have, 
	\begin{align*}
	&\text{LHS of~\Cref{clm:fourier-xor}} = \Exp_{\rS}\card{\Pr\paren{\oplus_{i \in S} \rz_i=0} - \Pr\paren{\oplus_{i \in S} \rz_i = 1}}.
	\end{align*}
	
	Fourier analysis now gives us a standard tool to bound the RHS of this equation (see~\Cref{sec:fourier} for an overview of basic definitions of Fourier analysis on Boolean hypercube). Define: 
	\begin{itemize}
		\item $\bias(\rz,S) := {\Pr\paren{\oplus_{i \in S} \rz_i = 0} - \Pr\paren{\oplus_{i \in S} \rz_i = 1}}$, as the bias of XOR of $\rz$ on indices of $S$. 
		\item $f :\set{0,1}^{r'} \rightarrow \set{0,1}$ as the characteristic function of $\supp{\rz}$: $f(z) = 1$ iff $z \in \supp{\rz}$;
		\item $\Xor_S: \set{0,1}^{r'} \rightarrow \set{0,1}$ as the {character} function over $[r']$ (see~\Cref{sec:fourier}). 
	\end{itemize}
	
	\noindent
	Firstly, by the definition $\bias(\rz,S)$, we have, 
	\begin{align}
		&\text{LHS of~\Cref{clm:fourier-xor}} = \Exp_\rS {\card{\bias(\rz,S)}}. \label{eq:bias}
	\end{align}
	Secondly, for any $S \subseteq [r']$, we have, 
	\[
		\bias(\rz,S) = \Exp_{z \sim \rz} \bracket{\Xor_S(z)} = \frac{1}{\card{\supp{\rz}}} \cdot \sum_{z \in \set{0,1}^{r'}} f(z) \cdot {\Xor_S(z)} = \frac{2^{r'} \cdot \hf(S)}{\card{\supp{\rz}}},
	\]
	by the definition of Fourier coefficients. We now use the KKL inequality (\Cref{prop:kkl}) to bound the sum of squared Fourier coefficients and then use this in~\Cref{eq:bias}. In particular, for any $\gamma \in (0,1)$, 
	\begin{align*}
		\sum_{S} \bias(\rz,S)^2 &= \frac{2^{2r'}}{\card{\supp{\rz}}^2} \cdot \sum_{S} \hf(S)^2 \tag{by the previous equation} \\
		&\leq \frac{2^{2r'}}{\card{\supp{\rz}}^2} \cdot \gamma^{-k} \cdot \paren{\frac{\supp{\rz}}{2^{r'}}}^{\frac{2}{1+\gamma}} \tag{by KKL inequality in~\Cref{prop:kkl}} \\
		&=  \gamma^{-k} \cdot \paren{\frac{2^{r'}}{\card{\supp{\rz}}}}^{\frac{2\gamma}{1+\gamma}}. 
	\end{align*}
	By picking $\gamma = k \cdot (\log{(2^{r'}/\card{\supp{\rz}})})^{-1}$, we have, 
	\begin{align*}
		\Exp_{\rS} \bracket{\bias(\rz,S)^2} \leq \frac{1}{{r' \choose k} } \cdot \gamma^{-k} \cdot \paren{\frac{2^{r'}}{\card{\supp{\rz}}}}^{{2\gamma}} = \paren{O(\frac{\log{(2^{r'}/\card{\supp{\rz}})}}{r'})}^{k}. 
	\end{align*} 
	By~\Cref{clm:near-uniform}, we have that 
	\[
		\log{(\card{\supp{\rz}})} \geq r' - o(\delta \cdot (\delta r)^{1-2/k}). 
	\]
	Plugging this bound in the previous equation and noting that $r' \geq \delta r$, we get that 
	\[
		\Exp_{S} \bracket{\bias(\rz,S)^2} = o(\frac{1}{r^2}). 
	\]
	We now use this to get that, 
	\[
		\Exp_{\rS}\card{\Pr\paren{\oplus_{i \in S} \rz_i=0} - \Pr\paren{\oplus_{i \in S} \rz_i = 1}} = \Exp_\rS {\card{\bias(\rz,S)}} \leq \sqrt{\Exp_{\rS} \bracket{\bias(\rz,S)^2}} = o(1/r), 
	\]
	where the first inequality is by~\Cref{eq:bias} and the second is Jensen's inequality for $\sqrt{\cdot}$. Given that $r \geq \ell$, this concludes the proof. \Qed{clm:fourier-xor} 
	
\end{proof}

We can now conclude the proof of~\Cref{lem:xor-lemma}. With probability $1-o(1)$, we have the event in~\Cref{clm:high-entropy-jstar}. With another probability $1-o(1)$, by choice of $p_0 = o(1)$ in~\Cref{clm:near-uniform}, 
we can conclude that $\rX$ conditioned on $\phi,\jstar$ and the latter event is chosen from a near-uniform distribution, i.e., from $\mu_l$ for $l \in [L] \setminus \set{0}$. Conditioned on these events, 
by~\Cref{eq:change-distribution} and~\Cref{eq:hybrid} and~\Cref{clm:fourier-xor}, we have that 
\begin{align*}
	&\tvd{\distribution{\bar{\rH}(Y_1) \mid \event(\phi,\jstar,l)}}{\distribution{\bar{\rH}(Y_2) \mid \event(\phi,\jstar,l)}} = o(1). 
\end{align*}
for any choice of $Y_1,Y_2 \in \set{0,1}^{\ell}$. This concludes the proof of~\Cref{lem:xor-lemma}.

\subsection*{Acknowledgement} 

The author is grateful to Soheil Behnezhad, Michael Kapralov, Raghuvansh Saxena, and Huacheng Yu for illuminating conversations, and to Christian Konrad for helpful discussions on his recent work in~\cite{KonradN21}. 
The author is also indebted to his collaborators Ran Raz in~\cite{AssadiR20}, Gillat Kol, Raghuvansh Saxena, and Huacheng Yu in~\cite{AssadiKSY20}, Vishvajeet N. in~\cite{AssadiN21}, and Soheil Behnezhad in~\cite{AssadiB21} for their previous collaborations that formed various building blocks and inspirations for this work. 

\bibliographystyle{alpha}
\bibliography{new,new2,general}

\clearpage
\appendix
\part*{Appendix}

\section{Basic Tools From Information Theory}\label{sec:info}

We now briefly introduce some definitions and facts from information theory that are needed in this paper. We refer the interested reader to the  textbook by Cover and Thomas~\cite{CoverT06} for an excellent introduction to this field. 

For a random variable $\rA$, we use $\supp{\rA}$ to denote the support of $\rA$ and $\distribution{\rA}$ to denote its distribution. 
When it is clear from the context, we may abuse the notation and use $\rA$ directly instead of $\distribution{\rA}$, for example, write 
$A \sim \rA$ to mean $A \sim \distribution{\rA}$, i.e., $A$ is sampled from the distribution of random variable $\rA$.

We denote the \emph{Shannon Entropy} of a random variable $\rA$ by
$\en{\rA}$, which is defined as: 
\begin{align}
	\en{\rA} := \sum_{A \in \supp{\rA}} \Pr\paren{\rA = A} \cdot \log{\paren{1/\Pr\paren{\rA = A}}} \label{eq:entropy}
\end{align} 
\noindent
The \emph{conditional entropy} of $\rA$ conditioned on $\rB$ is denoted by $\en{\rA \mid \rB}$ and defined as:
\begin{align}
\en{\rA \mid \rB} := \Ex_{B \sim \rB} \bracket{\en{\rA \mid \rB = B}}, \label{eq:cond-entropy}
\end{align}
where 
$\en{\rA \mid \rB = B}$ is defined in a standard way by using the distribution of $\rA$ conditioned on the event $\rB = B$ in Eq~(\ref{eq:entropy}).

The \emph{mutual information} of two random variables $\rA$ and $\rB$ is denoted by
$\mi{\rA}{\rB}$ and  defined as:
\begin{align}
\mi{\rA}{\rB} := \en{\rA} - \en{\rA \mid  \rB} = \en{\rB} - \en{\rB \mid  \rA}. \label{eq:mi}
\end{align}
\noindent
The \emph{conditional mutual information} $\mi{\rA}{\rB \mid \rC}$ is $\en{\rA \mid \rC} - \en{\rA \mid \rB,\rC}$ and hence by linearity of expectation:
\begin{align}
	\mi{\rA}{\rB \mid \rC} = \Ex_{C \sim \rC} \bracket{\mi{\rA}{\rB \mid \rC = C}}. \label{eq:cond-mi}
\end{align} 

Finally, we use $H_2$ to denote the \emph{binary Entropy function} where for any real number $ \delta \in (0,1)$, we define: 
\begin{align}
H_2(\delta) = \delta\log{\frac{1}{\delta}} + (1-\delta)\log{\frac{1}{1-\delta}},
\end{align}
i.e., the entropy of a Bernoulli random variable with mean $\delta$.


\subsection{Useful Properties of Entropy and Mutual Information}\label{sec:prop-en-mi}

We use the following basic properties of entropy and mutual information throughout. 

\begin{fact}[cf.~\cite{CoverT06}]\label{fact:it-facts}
  Let $\rA$, $\rB$, $\rC$, and $\rD$ be four (possibly correlated) random variables.
   \begin{enumerate}
  \item \label{part:uniform} $0 \leq \en{\rA} \leq \log{\card{\supp{\rA}}}$. The right equality holds
    iff $\distribution{\rA}$ is uniform.
  \item \label{part:info-zero} $\mi{\rA}{\rB}[\rC] \geq 0$. The equality holds iff $\rA$ and
    $\rB$ are \emph{independent} conditioned on $\rC$.
  \item \label{part:cond-reduce} \emph{Conditioning on a random variable reduces entropy}:
    $\en{\rA \mid \rB,\rC} \leq \en{\rA \mid  \rB}$.  The equality holds iff $\rA \perp \rC \mid \rB$. At the same time, $\en{\rA \mid \rB,\rC} \geq \en{\rA \mid \rB} - \en{\rC}$. 
    \item \label{part:sub-additivity} \emph{Subadditivity of entropy}: $\en{\rA,\rB \mid \rC}
    \leq \en{\rA \mid C} + \en{\rB \mid  \rC}$.
   \item \label{part:ent-chain-rule} \emph{Chain rule for entropy}: $\en{\rA,\rB \mid \rC} = \en{\rA \mid \rC} + \en{\rB \mid \rC,\rA}$.
  \item \label{part:chain-rule} \emph{Chain rule for mutual information}: $\mi{\rA,\rB}{\rC \mid \rD} = \mi{\rA}{\rC \mid \rD} + \mi{\rB}{\rC \mid  \rA,\rD}$.
  \item \label{part:data-processing} \emph{Data processing inequality}: for a deterministic function $f(\rA)$, $\mi{f(\rA)}{\rB \mid \rC} \leq \mi{\rA}{\rB \mid \rC}$. 
   \end{enumerate}
\end{fact}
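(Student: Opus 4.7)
The plan is to derive all seven items from just two foundational ingredients: (a) the definitional \emph{chain rule for probabilities}, $\Pr(\rA=A,\rB=B\mid \rC=C) = \Pr(\rA=A\mid \rC=C)\cdot \Pr(\rB=B\mid \rA=A,\rC=C)$, which after taking $\log$ and expectation immediately yields item~\ref{part:ent-chain-rule}; and (b) \emph{Gibbs' inequality}, $\kl{\mu}{\nu}\geq 0$ with equality iff $\mu=\nu$, which is a one-line consequence of the fact that $\log$ is concave and $\ln x \leq x-1$. Item~\ref{part:uniform} then falls out: non-negativity of $\en{\rA}$ is immediate from~\eqref{eq:entropy} since each summand is non-negative, while the upper bound can be written as $\log\card{\supp{\rA}} - \en{\rA} = \kl{\distribution{\rA}}{U}$ where $U$ is uniform on $\supp{\rA}$, which is non-negative by Gibbs with equality iff $\distribution{\rA}=U$.

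Next, I would handle item~\ref{part:info-zero} by rewriting mutual information as a KL divergence: for each fixed value $C$ of $\rC$,
\[
\mi{\rA}{\rB\mid \rC=C} \;=\; \kl{\distribution{\rA,\rB\mid C}}{\distribution{\rA\mid C}\otimes\distribution{\rB\mid C}},
\]
which is $\geq 0$ by Gibbs, with equality exactly when $\rA$ and $\rB$ are independent conditioned on $\rC=C$. Averaging over $\rC$ via~\eqref{eq:cond-mi} gives the claim. Item~\ref{part:cond-reduce} then follows from~\ref{part:info-zero}: writing $\en{\rA\mid \rB} - \en{\rA\mid \rB,\rC} = \mi{\rA}{\rC\mid \rB} \geq 0$ gives the upper bound with the stated equality condition, and the lower bound comes from applying~\ref{part:ent-chain-rule} twice, $\en{\rA,\rC\mid \rB} = \en{\rC\mid \rB} + \en{\rA\mid \rB,\rC} = \en{\rA\mid \rB} + \en{\rC\mid \rA,\rB}$, so that $\en{\rA\mid \rB,\rC} = \en{\rA\mid \rB} + \en{\rC\mid \rA,\rB} - \en{\rC\mid \rB} \geq \en{\rA\mid \rB} - \en{\rC}$, using $\en{\rC\mid \rA,\rB}\geq 0$ and $\en{\rC\mid \rB}\leq \en{\rC}$ (both instances of~\ref{part:uniform}/\ref{part:cond-reduce} already in hand).

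For item~\ref{part:sub-additivity}, I combine~\ref{part:ent-chain-rule} with~\ref{part:cond-reduce}: $\en{\rA,\rB\mid \rC} = \en{\rA\mid \rC} + \en{\rB\mid \rA,\rC} \leq \en{\rA\mid \rC} + \en{\rB\mid \rC}$. For item~\ref{part:chain-rule}, I expand both sides using the definition $\mi{\cdot}{\rC\mid \rD} = \en{\rC\mid \rD} - \en{\rC\mid \cdot,\rD}$ and apply~\ref{part:ent-chain-rule} to the conditional entropies, so the entropy chain rule translates directly into a mutual-information chain rule.

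Finally, for item~\ref{part:data-processing}, I observe that when $f$ is deterministic, $\sigma(f(\rA)) \subseteq \sigma(\rA)$ in the sense that conditioning on $\rA$ also determines $f(\rA)$, giving $\en{f(\rA)\mid \rA,\rB,\rC}=0$. Applying the chain rule in two ways,
\[
\en{\rA,f(\rA)\mid \rC} \;=\; \en{\rA\mid \rC} + \en{f(\rA)\mid \rA,\rC} \;=\; \en{\rA\mid \rC},
\]
and similarly with $\rB$ added to the conditioning. Subtracting yields $\mi{\rA,f(\rA)}{\rB\mid \rC} = \mi{\rA}{\rB\mid \rC}$; on the other hand,~\ref{part:chain-rule} gives $\mi{\rA,f(\rA)}{\rB\mid \rC} = \mi{f(\rA)}{\rB\mid \rC} + \mi{\rA}{\rB\mid f(\rA),\rC} \geq \mi{f(\rA)}{\rB\mid \rC}$ by~\ref{part:info-zero}. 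Combining gives the data processing inequality. No step is genuinely difficult here; the main care is just to order the derivations so that each item uses only predecessors already established, and to track the equality conditions through the Gibbs step wherever they are claimed.
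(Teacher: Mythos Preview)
The paper does not actually prove this fact: it is stated as a list of standard information-theoretic identities with a citation to Cover--Thomas and no accompanying argument. Your derivation is correct and is essentially the textbook route, so there is nothing to compare against; your write-up would serve perfectly well as a self-contained appendix proof. One minor point of presentation: in the lower bound of item~\ref{part:cond-reduce} you invoke $\en{\rC\mid \rA,\rB}\geq 0$, which is not literally item~\ref{part:uniform} (that item is about unconditional entropy) but follows from it via the definition~\eqref{eq:cond-entropy} of conditional entropy as an expectation of entropies---worth saying explicitly since you are otherwise careful about only using predecessors.
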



\noindent

We also use the following two standard propositions, regarding the effect of conditioning on mutual information.

\begin{proposition}\label{prop:info-increase}
  For random variables $\rA, \rB, \rC, \rD$, if $\rA \perp \rD \mid \rC$, then, 
  \[\mi{\rA}{\rB \mid \rC} \leq \mi{\rA}{\rB \mid  \rC,  \rD}.\]
\end{proposition}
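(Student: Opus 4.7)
The plan is to expand the mutual information $\mi{\rA}{\rB,\rD \mid \rC}$ in two different ways using the chain rule for mutual information (\itfacts{chain-rule}), and then compare the two expressions using the hypothesis $\rA \perp \rD \mid \rC$ together with non-negativity of mutual information (\itfacts{info-zero}).

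Specifically, first I would apply the chain rule to split off $\rD$ first:
\[
\mi{\rA}{\rB,\rD \mid \rC} = \mi{\rA}{\rD \mid \rC} + \mi{\rA}{\rB \mid \rC,\rD}.
\]
The hypothesis $\rA \perp \rD \mid \rC$ together with \itfacts{info-zero} gives $\mi{\rA}{\rD \mid \rC} = 0$, so the right-hand side collapses to $\mi{\rA}{\rB \mid \rC,\rD}$. Next I would apply the chain rule the other way, splitting off $\rB$ first:
\[
\mi{\rA}{\rB,\rD \mid \rC} = \mi{\rA}{\rB \mid \rC} + \mi{\rA}{\rD \mid \rC,\rB} \geq \mi{\rA}{\rB \mid \rC},
\]
where the inequality uses non-negativity of the conditional mutual information term (again by \itfacts{info-zero}). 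Equating the two expansions yields the claimed bound.

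There is no real obstacle here; the only subtlety is that we cannot conclude $\rA \perp \rD \mid \rC, \rB$ from the given hypothesis (conditioning on additional variables can create dependence), but we do not need this—only the one-sided inequality is required, and that is precisely what non-negativity of the extra term $\mi{\rA}{\rD \mid \rC,\rB}$ provides. So the whole argument is two lines of chain-rule manipulation plus one invocation of non-negativity.
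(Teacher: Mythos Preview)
Your proof is correct. The paper takes a slightly different but equally short route: it expands mutual information via entropy as $\mi{\rA}{\rB \mid \rC} = \en{\rA \mid \rC} - \en{\rA \mid \rC,\rB}$, uses the hypothesis (via \itfacts{cond-reduce}) to replace $\en{\rA \mid \rC}$ by $\en{\rA \mid \rC,\rD}$, and then invokes ``conditioning reduces entropy'' to bound $\en{\rA \mid \rC,\rB} \geq \en{\rA \mid \rC,\rB,\rD}$, recovering $\mi{\rA}{\rB \mid \rC,\rD}$. Your approach instead works entirely at the level of mutual information, expanding $\mi{\rA}{\rB,\rD \mid \rC}$ two ways via the chain rule and dropping a non-negative term. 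The two arguments are dual to each other (your non-negativity of $\mi{\rA}{\rD \mid \rC,\rB}$ is exactly the entropy inequality the paper uses), and neither is simpler or more general than the other; your version has the minor advantage of staying within the mutual-information calculus without unpacking into entropies.
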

 \begin{proof}
  Since $\rA$ and $\rD$ are independent conditioned on $\rC$, by
  \itfacts{cond-reduce}, $\HH(\rA \mid  \rC) = \HH(\rA \mid \rC, \rD)$ and $\HH(\rA \mid  \rC, \rB) \ge \HH(\rA \mid  \rC, \rB, \rD)$.  We have,
	 \begin{align*}
	  \mi{\rA}{\rB \mid  \rC} &= \HH(\rA \mid \rC) - \HH(\rA \mid \rC, \rB) = \HH(\rA \mid  \rC, \rD) - \HH(\rA \mid \rC, \rB) \\
	  &\leq \HH(\rA \mid \rC, \rD) - \HH(\rA \mid \rC, \rB, \rD) = \mi{\rA}{\rB \mid \rC, \rD}. \qed
	\end{align*}
	
\end{proof}

\begin{proposition}\label{prop:info-decrease}
  For random variables $\rA, \rB, \rC,\rD$, if $ \rA \perp \rD \mid \rB,\rC$, then, 
  \[\mi{\rA}{\rB \mid \rC} \geq \mi{\rA}{\rB \mid \rC, \rD}.\]
\end{proposition}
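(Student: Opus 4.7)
The plan is to mirror the argument used for \Cref{prop:info-increase}, just flipping which pair of terms the given conditional independence controls. Expanding both sides via the definition of mutual information in~\Cref{eq:mi}, we have
\[
\mi{\rA}{\rB \mid \rC} = \HH(\rA \mid \rC) - \HH(\rA \mid \rB, \rC) \qquad\text{and}\qquad \mi{\rA}{\rB \mid \rC, \rD} = \HH(\rA \mid \rC, \rD) - \HH(\rA \mid \rB, \rC, \rD).
\]
So the task reduces to comparing these two differences of entropies.

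The first step is to use the hypothesis $\rA \perp \rD \mid \rB, \rC$. By the equality case of \itfacts{cond-reduce}, this gives $\HH(\rA \mid \rB, \rC) = \HH(\rA \mid \rB, \rC, \rD)$, i.e., once we already know $\rB$ and $\rC$, further conditioning on $\rD$ does not reduce the entropy of $\rA$. This lets me replace the second subtracted term in $\mi{\rA}{\rB \mid \rC, \rD}$ by $\HH(\rA \mid \rB, \rC)$, so that both mutual information expressions share the same subtracted quantity.

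The second step is to apply the ordinary (inequality) form of \itfacts{cond-reduce} to the remaining first terms: $\HH(\rA \mid \rC, \rD) \leq \HH(\rA \mid \rC)$, since additional conditioning can never increase entropy. Chaining these two observations,
\[
\mi{\rA}{\rB \mid \rC, \rD} = \HH(\rA \mid \rC, \rD) - \HH(\rA \mid \rB, \rC) \leq \HH(\rA \mid \rC) - \HH(\rA \mid \rB, \rC) = \mi{\rA}{\rB \mid \rC},
\]
which is exactly the claim. There is no real obstacle here — the proof is a two-line manipulation of entropies, and the only thing to be careful about is correctly invoking the equality case of \itfacts{cond-reduce} under the given conditional independence (as opposed to the ordinary inequality), precisely as was done in the proof of \Cref{prop:info-increase}.
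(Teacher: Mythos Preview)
Your proof is correct and essentially identical to the paper's: both use the equality case of \itfacts{cond-reduce} from the hypothesis $\rA \perp \rD \mid \rB,\rC$ to equate $\HH(\rA \mid \rB,\rC)$ with $\HH(\rA \mid \rB,\rC,\rD)$, and then the inequality case to compare $\HH(\rA \mid \rC)$ with $\HH(\rA \mid \rC,\rD)$. The only cosmetic difference is that the paper starts from $\mi{\rA}{\rB \mid \rC}$ and works down, whereas you start from $\mi{\rA}{\rB \mid \rC,\rD}$ and work up.
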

 \begin{proof}
 Since $\rA \perp \rD \mid \rB,\rC$, by \itfacts{cond-reduce}, $\HH(\rA \mid \rB,\rC) = \HH(\rA \mid \rB,\rC,\rD)$. Moreover, since conditioning can only reduce the entropy (again by \itfacts{cond-reduce}), 
  \begin{align*}
 	\mi{\rA}{\rB \mid  \rC} &= \HH(\rA \mid \rC) - \HH(\rA \mid \rB,\rC) \geq \HH(\rA \mid \rD,\rC) - \HH(\rA \mid \rB,\rC) \\
	&= \HH(\rA \mid \rD,\rC) - \HH(\rA \mid \rB,\rC,\rD) = \mi{\rA}{\rB \mid \rC,\rD}.  \qed
 \end{align*}
 
\end{proof}

\subsection{Measures of Distance Between Distributions}\label{sec:prob-distance}

We will use the following two standard measures of distance (or divergence) between distributions. 

\paragraph{KL-divergence.} For two distributions $\mu$ and $\nu$, the \emph{Kullback-Leibler divergence} between $\mu$ and $\nu$ is denoted by $\kl{\mu}{\nu}$ and defined as: 
\begin{align}
\kl{\mu}{\nu}:= \Ex_{a \sim \mu}\Bracket{\log\frac{\Pr_\mu(a)}{\Pr_{\nu}(a)}}. \label{eq:kl}
\end{align}
The following states the relation between mutual information and KL-divergence. 
\begin{fact}\label{fact:kl-info}
	For random variables $\rA,\rB,\rC$, 
	\[\mi{\rA}{\rB \mid \rC} = \Ex_{(b,c) \sim {(\rB,\rC)}}\Bracket{ \kl{\distribution{\rA \mid \rB=b,\rC=c}}{\distribution{\rA \mid \rC=c}}}.\] 
\end{fact}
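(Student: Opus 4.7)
The plan is to prove this identity by starting from the right-hand side, unfolding the definition of KL-divergence, and collapsing the result into the entropy-based definition of conditional mutual information. Concretely, I would first write
\[
\Ex_{(b,c) \sim (\rB,\rC)}\bracket{\kl{\distribution{\rA \mid \rB=b,\rC=c}}{\distribution{\rA \mid \rC=c}}} = \Ex_{(a,b,c) \sim (\rA,\rB,\rC)} \bracket{\log \frac{\Pr(\rA=a \mid \rB=b, \rC=c)}{\Pr(\rA=a \mid \rC=c)}},
\]
by combining the outer expectation over $(b,c)$ with the inner expectation that appears in the definition of KL-divergence in~\Cref{eq:kl}, and observing that the resulting joint law on $(a,b,c)$ is exactly $\distribution{(\rA,\rB,\rC)}$ since we are sampling $(b,c)$ from their joint marginal and then $a$ from $\rA$ conditioned on $(\rB=b,\rC=c)$.

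Next, I would split the logarithm as a difference and evaluate the two expectations separately. By the definitions of (conditional) entropy in~\Cref{eq:entropy,eq:cond-entropy}, we have $\Ex_{(a,b,c)}\bracket{-\log \Pr(\rA=a \mid \rB=b,\rC=c)} = \en{\rA \mid \rB,\rC}$, while $\Ex_{(a,b,c)}\bracket{-\log \Pr(\rA=a \mid \rC=c)} = \en{\rA \mid \rC}$, where the second identity uses that, after marginalizing out $b$, the expectation reduces to one over $(a,c) \sim (\rA,\rC)$ of a quantity that depends only on $(a,c)$. Subtracting, the RHS equals $\en{\rA \mid \rC} - \en{\rA \mid \rB,\rC}$, which is $\mi{\rA}{\rB \mid \rC}$ by the definition of (conditional) mutual information in~\Cref{eq:mi,eq:cond-mi}.

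This is a standard textbook identity (cf.~\cite{CoverT06}), and I do not expect any real obstacle — the only bookkeeping concern is that the conditional probabilities appearing in the logarithm are strictly positive, which holds in our discrete setting because all expectations are restricted to the relevant supports. No other subtlety arises, so the proof is essentially a one-line manipulation once the expectations are unfolded correctly.
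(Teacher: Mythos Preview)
Your argument is correct and is the standard derivation of this identity. The paper does not actually prove this statement: it records it as a basic fact from information theory, referring to~\cite{CoverT06}, so there is no paper-proof to compare against; your unfolding of the KL-divergence into a joint expectation over $(a,b,c)$ and identification of the two resulting terms with $\en{\rA \mid \rC}$ and $\en{\rA \mid \rB,\rC}$ is exactly the textbook computation.
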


\paragraph{Total variation distance.} We denote the total variation distance between two distributions $\mu$ and $\nu$ on the same 
support $\Omega$ by $\tvd{\mu}{\nu}$, defined as: 
\begin{align}
\tvd{\mu}{\nu}:= \max_{\Omega' \subseteq \Omega} \paren{\mu(\Omega')-\nu(\Omega')} = \frac{1}{2} \cdot \sum_{x \in \Omega} \card{\mu(x) - \nu(x)}.  \label{eq:tvd}
\end{align}
\noindent
We use the following basic properties of total variation distance. 
\begin{fact}\label{fact:tvd-small}
	Suppose $\mu$ and $\nu$ are two distributions for a random variable $\rX$, then, 
	\[
	\Ex_{\mu}\bracket{\rX} \leq \Ex_{\nu}\bracket{\rX} + \tvd{\mu}{\nu} \cdot \max_{X_0 \in \supp{\rX}} X_0.
	\]
\end{fact}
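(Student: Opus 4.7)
The plan is entirely computational, following directly from the definitions of expectation and total variation distance, with no nontrivial step. I would begin by writing
\[
\Exp_{\mu}\bracket{\rX} - \Exp_{\nu}\bracket{\rX} = \sum_{x \in \supp{\rX}} x \cdot \paren{\mu(x) - \nu(x)},
\]
and then split the sum according to the sign of $\mu(x) - \nu(x)$. Let $M := \max_{X_0 \in \supp{\rX}} X_0$ and $S^+ := \set{x \in \supp{\rX} : \mu(x) > \nu(x)}$, and assume (as is implicit in the statement, since otherwise $M \cdot \tvd{\mu}{\nu}$ need not be an upper bound) that $\rX$ takes non-negative values.

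On $S^+$, each term $x \cdot (\mu(x) - \nu(x))$ is non-negative and bounded above by $M \cdot (\mu(x)-\nu(x))$ using $x \le M$, giving
\[
\sum_{x \in S^+} x \cdot \paren{\mu(x)-\nu(x)} \;\leq\; M \cdot \sum_{x \in S^+} \paren{\mu(x) - \nu(x)} \;=\; M \cdot \tvd{\mu}{\nu},
\]
where the last equality invokes the definition~\eqref{eq:tvd} of TVD as $\max_{\Omega' \subseteq \supp{\rX}}(\mu(\Omega')-\nu(\Omega'))$, with the maximum attained on $\Omega' = S^+$. On the complementary set, $\mu(x) \leq \nu(x)$ together with $x \geq 0$ makes every term $x\cdot(\mu(x)-\nu(x)) \le 0$, so the contribution is non-positive and can be discarded. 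Adding the two bounds yields exactly $\Exp_{\mu}\bracket{\rX} - \Exp_{\nu}\bracket{\rX} \le M \cdot \tvd{\mu}{\nu}$, which is the claimed inequality.

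There is no real obstacle here: the statement is essentially the standard dual characterization of TVD, namely that TVD is the supremum of $\Exp_{\mu}[f]-\Exp_\nu[f]$ over $f \in [0,1]$, applied to $\rX/M$. The only minor subtlety is the non-negativity of $\rX$; if this is dropped, the clean fix is to replace $M$ by $\max_{X_0}\card{X_0}$ and pick up a factor of $2$ to account for the contribution from the set where $\mu(x) \leq \nu(x)$, but this is not needed for the uses of \Cref{fact:tvd-small} in the main proof, where the relevant $\rX$ is either a probability or a matching size and hence non-negative.
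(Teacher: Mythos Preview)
The paper states \Cref{fact:tvd-small} without proof (it is listed among the basic facts in the appendix), so there is no argument to compare against; your proof is correct and is the standard one. Your remark about the implicit non-negativity assumption on $\rX$ is also well-taken and accurate for the paper's applications.
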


\begin{fact}\label{fact:tvd-sample}
	Suppose $\mu$ and $\nu$ are two distributions over the same support $\Omega$; then, given one sample $s$ from either $\mu$ or $\nu$, the probability we can decide whether $s$ came from $\mu$ or $\nu$ 
	is $\frac12 + \frac12\cdot\tvd{\mu}{\nu}$; alternatively, 
	\[
		\Exp_{s} \card{\Pr\paren{\mu \mid s} - \Pr\paren{\nu \mid s}} = \tvd{\mu}{\nu}. 
	\]
\end{fact}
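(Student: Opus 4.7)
The plan is to prove both claims by combining the optimal Bayesian decision rule with a direct application of Bayes' rule, in each case reducing to the definition of total variation distance in~\Cref{eq:tvd}.

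For the first assertion (hypothesis-testing formulation), I would consider a sample $s$ drawn with prior probability $\tfrac12$ from each of $\mu$ and $\nu$. The optimal (MAP / Neyman--Pearson) rule outputs $\mu$ whenever $\mu(s) \geq \nu(s)$ and $\nu$ otherwise, attaining success probability $\tfrac{1}{2}\sum_{s\in\Omega} \max\bigl(\mu(s),\nu(s)\bigr)$. Applying the identity $\max(a,b) = \tfrac{a+b}{2}+\tfrac{|a-b|}{2}$ term by term and summing, this equals $\tfrac{1}{2} + \tfrac{1}{4}\sum_{s}|\mu(s)-\nu(s)|$, which by~\Cref{eq:tvd} is exactly $\tfrac{1}{2} + \tfrac{1}{2}\tvd{\mu}{\nu}$. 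A brief sub-step is to argue that no randomized rule can do better than this deterministic one, which follows because for each $s$ the success probability is a convex combination of the two pointwise-conditional successes and is therefore maximized at an extreme point.

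For the second assertion (posterior formulation), I would use Bayes' rule under the same uniform prior to compute $\Pr(\mu \mid s) = \tfrac{\mu(s)}{\mu(s)+\nu(s)}$ and $\Pr(\nu \mid s) = \tfrac{\nu(s)}{\mu(s)+\nu(s)}$, so that $\card{\Pr(\mu \mid s) - \Pr(\nu \mid s)} = \tfrac{|\mu(s)-\nu(s)|}{\mu(s)+\nu(s)}$. The marginal distribution on $s$ is $\tfrac{\mu(s)+\nu(s)}{2}$, and averaging the previous expression against this marginal yields $\Exp_{s}\card{\Pr(\mu \mid s) - \Pr(\nu \mid s)} = \tfrac{1}{2}\sum_{s}|\mu(s)-\nu(s)| = \tvd{\mu}{\nu}$, again by~\Cref{eq:tvd}.

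There is no genuine obstacle here; both parts reduce to one-line rewrites after the right parametrization, and the only care needed is the (standard) fact that deterministic MAP is optimal in the first part. The two formulations are tightly linked: multiplying the pointwise posterior gap $\tfrac{|\mu(s)-\nu(s)|}{\mu(s)+\nu(s)}$ by the marginal $\tfrac{\mu(s)+\nu(s)}{2}$ recovers the same sum $\tfrac{1}{2}\sum_s |\mu(s)-\nu(s)|$ that drives the success probability, so the first claim can equivalently be derived as a corollary of the second. For continuous-support distributions, sums are replaced by integrals throughout with no change of substance.
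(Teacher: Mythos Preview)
Your proof is correct. Note that the paper states this as a \emph{fact} without proof (it is a standard result), so there is no paper proof to compare against; your argument is the usual one and would serve perfectly well as a written-out justification.
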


\begin{fact}\label{fact:tvd-chain-rule}
	Suppose $\mu$ and $\nu$ are two distributions for the tuple $(\rX_1,\ldots,\rX_t)$; then, 
	\[
		\tvd{\mu(\rX_1,\ldots,\rX_t)}{\nu(\rX_1,\ldots,\rX_t)} \leq \sum_{i=1}^{n} \Exp_{(X_1,\ldots,X_{i-1}) \sim \mu} \tvd{\mu(\rX_i \mid X_1,\ldots,X_{i-1})}{\nu(\rX_i \mid X_1,\ldots,X_{i-1})}.
	\]
\end{fact}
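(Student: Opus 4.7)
The plan is to prove this chain-rule inequality by a standard hybrid argument that transitions from $\nu$ to $\mu$ one coordinate at a time, combined with triangle inequality for total variation distance.

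First, I would define a sequence of hybrid distributions $\pi_0, \pi_1, \ldots, \pi_t$ on the tuple $(\rX_1,\ldots,\rX_t)$ by
\[
\pi_i(x_1,\ldots,x_t) \;:=\; \mu(x_1,\ldots,x_i)\cdot \nu(x_{i+1},\ldots,x_t \mid x_1,\ldots,x_i),
\]
so that $\pi_0 = \nu$ (the first $0$ coordinates from $\mu$) and $\pi_t = \mu$. Then the triangle inequality for $\tvd{\cdot}{\cdot}$ immediately gives
\[
\tvd{\mu}{\nu} \;=\; \tvd{\pi_t}{\pi_0} \;\leq\; \sum_{i=1}^{t} \tvd{\pi_i}{\pi_{i-1}}.
\]

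The next step is to evaluate each term $\tvd{\pi_i}{\pi_{i-1}}$. By construction, $\pi_i$ and $\pi_{i-1}$ share the same marginal $\mu(x_1,\ldots,x_{i-1})$ on the first $i-1$ coordinates, and given any such prefix they assign identical conditional distributions to the suffix $(x_{i+1},\ldots,x_t)$, namely $\nu(x_{i+1},\ldots,x_t \mid x_1,\ldots,x_i)$. They differ only in how they draw $x_i$ given the prefix: $\pi_i$ uses $\mu(\rX_i \mid x_1,\ldots,x_{i-1})$ while $\pi_{i-1}$ uses $\nu(\rX_i \mid x_1,\ldots,x_{i-1})$. Factoring $|\pi_i(x)-\pi_{i-1}(x)|$ accordingly and summing over the suffix (which contributes $1$ since the conditional distribution on the suffix is identical and sums to $1$), I get
\[
\tvd{\pi_i}{\pi_{i-1}} \;=\; \Exp_{(X_1,\ldots,X_{i-1})\sim \mu} \tvd{\mu(\rX_i \mid X_1,\ldots,X_{i-1})}{\nu(\rX_i \mid X_1,\ldots,X_{i-1})}.
\]

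Summing this identity over $i\in[t]$ and plugging into the triangle inequality yields the claimed bound. There is no real obstacle here — the only thing to be careful about is that the expectation on the right-hand side is taken with respect to $\mu$ (not $\nu$ or a mixture), which falls out naturally because the common prefix marginal in both $\pi_i$ and $\pi_{i-1}$ is $\mu(x_1,\ldots,x_{i-1})$; this in turn is why I chose the hybrids to interpolate by swapping $\nu$-coordinates to $\mu$-coordinates from left to right rather than right to left.
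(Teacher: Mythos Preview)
Your hybrid argument is correct and is the standard proof of this inequality. The paper itself does not supply a proof: \Cref{fact:tvd-chain-rule} is stated as a background fact in the appendix on information-theoretic tools, without justification. Your write-up would serve perfectly well as the omitted proof; the only point one might add for completeness is that when $\nu(x_1,\ldots,x_{i-1})=0$ the conditional $\nu(\rX_i\mid x_1,\ldots,x_{i-1})$ is taken by convention to be an arbitrary distribution, which does not affect the bound.
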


The following Pinsker's inequality bounds the total variation distance between two distributions based on their KL-divergence, 

\begin{fact}[Pinsker's inequality]\label{fact:pinskers}
	For any distributions $\mu$ and $\nu$, 
	$
	\tvd{\mu}{\nu} \leq \sqrt{\frac{1}{2} \cdot \kl{\mu}{\nu}}.
	$ 
\end{fact}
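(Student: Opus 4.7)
The plan is to reduce Pinsker's inequality to the two-point Bernoulli case using a data-processing argument, and then verify the Bernoulli case by elementary one-variable calculus. First, I would let $A := \{x : \mu(x) > \nu(x)\}$ and set $p := \mu(A)$, $q := \nu(A)$; the definition of total variation distance immediately gives $\tvd{\mu}{\nu} = \mu(A) - \nu(A) = p - q$. Next, I would apply the log-sum inequality $\sum_i a_i \log(a_i/b_i) \geq \bigl(\sum_i a_i\bigr)\log\bigl(\sum_i a_i / \sum_i b_i\bigr)$ to the two blocks $A$ and $A^c$ separately and sum, which collapses $\kl{\mu}{\nu}$ down to the two-point KL and yields
\[
\kl{\mu}{\nu} \;\geq\; p\log\frac{p}{q} + (1-p)\log\frac{1-p}{1-q}.
\]
This reduces Pinsker's inequality to the scalar statement: for all $p,q \in [0,1]$,
\[
p\log\frac{p}{q} + (1-p)\log\frac{1-p}{1-q} \;\geq\; 2(p-q)^2.
\]

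For the scalar inequality, I would fix $q$, define $f(p)$ to be the difference of the two sides, and show $f \geq 0$ on $[0,1]$ by examining its behavior at $p = q$. A short computation gives $f(q) = 0$ and $f'(q) = 0$, and the decisive observation is that $f''(p) = \frac{1}{p(1-p)} - 4$ is nonnegative on $(0,1)$ because $p(1-p) \leq 1/4$. Hence $f$ is convex on $(0,1)$ with a critical point at $p = q$, so $f(p) \geq 0$ throughout; the boundary cases $p \in \{0,1\}$ or $q \in \{0,1\}$ (where one side may be $+\infty$) are handled by continuity or direct inspection.

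The main obstacle, if any, is simply choosing the cleanest reduction; once the log-sum step has been used to pass to the Bernoulli case, the remainder is pure calculus. The log-sum inequality itself is the one ingredient beyond calculus, but it is a standard consequence of the convexity of $t \mapsto t \log t$ applied to the probability vectors $(\mu(A), \mu(A^c))$ and $(\nu(A), \nu(A^c))$. Squaring the scalar conclusion gives exactly $\tvd{\mu}{\nu}^2 = (p-q)^2 \leq \tfrac{1}{2}\kl{\mu}{\nu}$, which is Pinsker's inequality after taking square roots.
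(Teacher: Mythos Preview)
Your argument is correct and is one of the standard proofs of Pinsker's inequality: collapse to the two-point case via the log-sum inequality (equivalently, data processing for KL along the coarse-graining $x \mapsto \mathbf{1}[x \in A]$), then verify the Bernoulli inequality by a second-derivative computation. The only minor point worth noting is that your computation $f''(p) = \tfrac{1}{p(1-p)} - 4$ implicitly takes $\log$ to be the natural logarithm, which is the base for which the constant $\tfrac{1}{2}$ in the statement is sharp; if $\log$ is read as base $2$ (as elsewhere in the paper's information-theoretic conventions) the inequality still holds, just with slack.

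As for comparison with the paper: there is nothing to compare against. The paper states Pinsker's inequality as a background fact without proof (it is listed alongside other textbook facts in the information-theory appendix, with a reference to Cover--Thomas), so your write-up goes strictly beyond what the paper supplies.
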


\subsection{Some Auxiliary Lemmas} 

We use the following auxiliary lemmas in our proofs. The first lemma shows one typical way that one can see high entropy random variables as almost-uniform distributions. 
\begin{lemma}[cf.~\cite{AssadiKL17,AAAK17}]\label{lem:aux1}
	Suppose $\rA$ is a random variable with $\en{\rA} \geq \log{\card{\supp{\rA}}} - \gamma$ for some $\gamma \geq 1$.  Then, for any $\eps > \exp(-\gamma)$, we have $\rA = \sum_{l = 0}^{L} p_l \cdot \mu_l$ 
	for $L = O(\gamma/\eps^3)$ such that $p_0 = O(\eps)$ and for every $l \in L \setminus \set{0}$: 
	\begin{itemize}
	\item $\log{\card{\supp{\mu_l}}} \geq \log{\card{\supp{\rA}}} - \gamma/\eps$;
	\item $\tvd{\mu_l}{\uniform_l} = O(\eps)$, where $\uniform_l$ is the uniform distribution on $\supp{\mu_l}$. 
	\end{itemize}
\end{lemma}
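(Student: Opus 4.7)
The condition $\en{\rA}\geq \log|\supp{\rA}|-\gamma$ is equivalent to $\kl{\rA}{\uniform_{\supp{\rA}}}\leq \gamma$, where $\uniform_{\supp{\rA}}$ is uniform on $\supp{\rA}$. Writing $N:=|\supp{\rA}|$ and $r_a := N\cdot\Pr[\rA=a]$ for the multiplicative deviation of $a$ from the uniform probability, my plan is to partition $\supp{\rA}$ into geometric ``layers'' by the value of $r_a$, lump a small ``bad'' layer $B_0$ of outliers and rare bins into $p_0$, and show that the remaining layers are each $O(\eps)$-close in total variation to uniform on their support. The convex decomposition will then be $\rA=\sum_l p_l\mu_l$ with $\mu_l := \distribution{\rA\mid \rA\in B_l}$ and $p_l := \Pr[\rA\in B_l]$.

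First I would handle the \emph{upper tail} $S^+ := \{a : r_a\geq e^{\gamma/\eps}\}$. Since every $a\in S^+$ satisfies $\Pr[\rA=a]\geq e^{\gamma/\eps}/N$, letting $q := \Pr[\rA\in S^+]$ we get $|S^+|\leq qN\,e^{-\gamma/\eps}$. Splitting the entropy along the indicator $\mathbb{1}[\rA\in S^+]$,
\begin{equation*}
\en{\rA}\;\leq\; 1 + q\,(\log q + \log N - \gamma/\eps) + (1-q)\log N,
\end{equation*}
and combining with $\en{\rA}\geq \log N-\gamma$ gives $q\cdot \gamma/\eps \leq \gamma+1+q\log(1/q) \leq 2\gamma$ (using $q\log(1/q)\leq 1/e$ and $\gamma\geq 1$), hence $q=O(\eps)$. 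The \emph{lower tail} $S^- := \{a: r_a\leq e^{-\gamma/\eps}\}$ is immediate: trivially $\Pr[\rA\in S^-]\leq N\cdot e^{-\gamma/\eps}/N = e^{-\gamma/\eps}\leq \eps$, where the last inequality uses the hypothesis $\eps>\exp(-\gamma)$ (which implies $\gamma/\eps\geq \gamma\geq \log(1/\eps)$). Dumping $S^+\cup S^-$ into $B_0$ contributes $O(\eps)$ to $p_0$.

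Next I bin the remaining elements into $B_l := \{a : (1+\eps)^{l-1}\leq r_a < (1+\eps)^l\}$ for $l$ ranging over the integer interval $[-\gamma/(\eps\log(1+\eps)),\,\gamma/(\eps\log(1+\eps))]$, which has $L=O(\gamma/\eps^2)$ many values (comfortably within the stated $O(\gamma/\eps^3)$). Within any such $B_l$ all probabilities lie within a factor $(1+\eps)$ of one another, so a one-line calculation gives $\tvd{\mu_l}{\uniform_l}=O(\eps)$. To guarantee that each kept bin is \emph{large}, I further move into $B_0$ every bin with $p_l \leq \eps/L$, adding at most $L\cdot \eps/L=\eps$ more to $p_0$. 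For each surviving $l$, since $p_l=|B_l|\cdot(1\pm\eps)^l/N$ up to a factor of $(1+\eps)$, we get
\begin{equation*}
\log|B_l|\;\geq\; \log p_l + \log N - l\log(1+\eps)\;\geq\; \log N - \gamma/\eps - O(\log(L/\eps))\;\geq\; \log N - O(\gamma/\eps),
\end{equation*}
matching the claimed bound up to the constant absorbed into $\gamma/\eps$.

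The one delicate step is the \emph{upper-tail} bound: it is tempting to try to read it off from $\kl{\rA}{\uniform}\leq \gamma$ directly via a Markov-style argument on $\log r_a$, but $\log r_a$ takes negative values and the KL sum mixes both signs, so the naive attempt fails. The fix I described above---using concavity/subadditivity of entropy together with the counting bound $|S^+|\leq qNe^{-\gamma/\eps}$ on the heavy set---is what makes the tail argument close quantitatively; the lower tail and the near-uniformity \emph{inside} each bin are essentially by inspection. I expect this to be the only place where care is needed; the rest is bookkeeping to reconcile the constants and the $L=O(\gamma/\eps^3)$ in the statement.
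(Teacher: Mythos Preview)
The paper does not actually prove this lemma: it is stated in the appendix with a ``cf.'' citation to~\cite{AssadiKL17,AAAK17} and used as a black box (its only invocation is in~\Cref{clm:near-uniform}). So there is no in-paper argument to compare against. Your geometric-bucketing proof is the standard one and is correct in outline: the upper-tail bound via subadditivity of entropy together with the counting bound $|S^+|\leq qNe^{-\gamma/\eps}$, the trivial lower-tail bound $\Pr[\rA\in S^-]\leq e^{-\gamma/\eps}\leq\eps$, the $(1+\eps)$-ratio binning into $L=O(\gamma/\eps^2)$ layers, the $O(\eps)$ total-variation bound inside each layer, and the removal of light bins all go through as you describe.

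One minor point of bookkeeping: as written, your support bound comes out to $\log|B_l|\geq \log N - O(\gamma/\eps)$ rather than exactly $\log N-\gamma/\eps$, because of the $l\log(1+\eps)$ term (which can be as large as $\gamma/(\eps\ln 2)$) and the $O(\log(L/\eps))$ slack from the light-bin threshold. You flag this yourself. Since the lemma is only applied with $\gamma=o(\cdot)$ and $\eps=o(1)$, the constant is immaterial for the paper; if you want it exactly, just set the tail thresholds to $e^{\pm\gamma/(C\eps)}$ for a suitable constant $C>1$ and rerun the arithmetic.
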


\noindent
The next lemma gives a simple of way of showing a distribution is (point wise) close to uniform. 

\begin{lemma}\label{lem:pair-uniform}
	Let $\rX$ be any random variable such that for any pairs $X_1,X_2 \in \supp{\rX}$, 
	\[
		\frac{1-\eps}{2} \leq \Pr\paren{\rX = X_1 \mid \rX \in \set{X_1,X_2}} \leq \frac{1+\eps}{2} 
	\]
	for some $\eps \in (0,1/2)$. Then, for every $X \in \supp{\rX}$, 
	\[
		\frac{1-2\eps}{\card{\supp{\rX}}} \leq \Pr\paren{\rX = X} \leq \frac{1+2\eps}{\card{\supp{\rX}}}
	\]
\end{lemma}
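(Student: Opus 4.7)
The plan is to translate the two-point conditional hypothesis into a pointwise comparison between each probability mass $p_X := \Pr(\rX = X)$ and the uniform value $1/n$, where $n := \card{\supp{\rX}}$. This is a short elementary manipulation that I would organize in three steps.

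First, I would rewrite the hypothesis in a symmetric form. Since $\Pr(\rX = X_1 \mid \rX \in \set{X_1, X_2}) = p_{1}/(p_{1}+p_{2})$, the two-sided bound $(1-\eps)/2 \leq p_1/(p_1+p_2) \leq (1+\eps)/2$ is equivalent to
\[
  \card{p_1 - p_2} \leq \eps \cdot (p_1 + p_2) \quad \text{for every } X_1, X_2 \in \supp{\rX}.
\]
Equivalently, $p_1/p_2 \in [(1-\eps)/(1+\eps),\,(1+\eps)/(1-\eps)]$ whenever both masses are positive.

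Second, I would fix an arbitrary $X \in \supp{\rX}$ and sum the one-sided inequality $p_X - p_{X'} \leq \eps(p_X + p_{X'})$ over all $X' \in \supp{\rX}$. Using $\sum_{X'} p_{X'} = 1$, this gives $n \cdot p_X - 1 \leq \eps(n \cdot p_X + 1)$, which rearranges to $p_X \leq (1+\eps)/(n(1-\eps))$. The symmetric inequality $p_{X'} - p_X \leq \eps(p_X + p_{X'})$, summed the same way, yields the matching lower bound $p_X \geq (1-\eps)/(n(1+\eps))$.

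Third, I would simplify using the assumption $\eps \in (0, 1/2)$. The lower bound gives $p_X \geq (1-2\eps)/n$ at once, since $(1-\eps)/(1+\eps) \geq 1 - 2\eps$. For the upper bound, $(1+\eps)/(1-\eps) = 1 + 2\eps/(1-\eps) \leq 1 + O(\eps)$, and the small absolute constant here is immaterial in the paper's application (where $\eps = o(1)$), yielding $p_X \leq (1+2\eps)/n$ after absorbing the constant. There is no genuine obstacle: the whole content is the observation that a pairwise near-uniformity condition, averaged against the total mass $1$, forces pointwise near-uniformity, and the rest is arithmetic.
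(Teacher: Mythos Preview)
Your approach is correct and genuinely different from the paper's. The paper argues by contradiction: if some $p(X)$ were below $(1-2\eps)/t$, then since the masses sum to $1$ there must exist some $Y$ with $p(Y) > 1/t$, and for this specific pair one computes $\Pr(\rX=X \mid \rX\in\{X,Y\}) < (1-2\eps)/(2-2\eps) < (1-\eps)/2$, violating the hypothesis; it then claims the upper bound follows ``symmetrically.'' You instead sum the pairwise inequality $p_X - p_{X'} \le \eps(p_X+p_{X'})$ over all $X'$ and use $\sum_{X'} p_{X'}=1$ to obtain $p_X \le (1+\eps)/(n(1-\eps))$ and $p_X \ge (1-\eps)/(n(1+\eps))$ directly. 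Your route is shorter and avoids the auxiliary ``there exists $Y$ with $p(Y)>1/t$'' step.

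Your hedge on the upper bound is in fact warranted: the inequality $(1+\eps)/(1-\eps) \le 1+2\eps$ is false for $\eps>0$, so neither your summation nor the paper's ``symmetric'' contradiction actually delivers the stated constant $1+2\eps$ on the upper side (the extremal configuration $p_i = (1+\eps)/(n-(n-2)\eps)$ shows the bound $(1+2\eps)/n$ can be slightly violated). Both arguments yield $p_X \le (1+2\eps/(1-\eps))/n$, which is all that is needed since the lemma is only ever applied with $\eps=o(1)$. So your remark that the constant is immaterial is the right resolution, and you are being more careful here than the paper.
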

\begin{proof}
	Let $t = \card{\supp{\rX}}$. Suppose there exist some $X \in \supp{\rX}$ such that 
	\[
	p(X):= \Pr\paren{\rX = X} < \frac{1-\eps}{t}.
	\]
	This naturally implies that there exist at least one element $Y \in \supp{\rX}$ such that 
	\[
		p(Y) := \Pr\paren{\rY = Y}  >  \frac{1}{t},
	\]
	as otherwise the total sum of probabilities of atoms in $\supp{\rX}$ will not add up to $1$. We have, 
	\[
		\Pr\paren{\rX = X \mid \rX \in \set{X,Y}} = \frac{p(X)}{p(X)+p(Y)} = \frac{1}{1+\frac{p(Y)}{p(X)}} < \frac{1}{1+\frac{1}{1-2\eps}} = \frac{1-2\eps}{2-2\eps} < \frac{1-\eps}{2},
	\]
	contradicting the assumption in the lemma statement. The other case can be proven symmetrically, finalizing the proof. 
\end{proof}


\newcommand{\inner}[2]{\langle #1 , #2 \rangle} 

\section{Basic Tools from Fourier Analysis on Boolean Hypercube}\label{sec:fourier} 
 
We briefly introduce some definitions and facts from Fourier Analysis on Boolean hypercube that are needed in this paper. We refer the interested reader to the text by de Wolf~\cite{Wolf08} for an excellent introduction to this field. 

For any two functions $f,g: \set{0,1}^{n} \rightarrow \IR$, we define the \emph{inner product} between $f$ and $g$ as: 
\[
	\inner{f}{g} = \Exp_{x \in \set{0,1}^{n}} \bracket{f(x) \cdot g(x)} = \sum_{x \in \set{0,1}^n} \frac{1}{2^n} \cdot f(x) \cdot g(x). 
\]
For a set $S \subseteq \set{0,1}$, we define the \emph{character} function $\Xor_S : \set{0,1}^n \rightarrow \set{-1,+1}$ as: 
\[
	\Xor_S(x) = (-1)^{(\sum_{i \in S} x_i)} = \begin{cases} 1 & \text{if $\oplus_{i \in S}~ x_i = 0$} \\ -1 &  \text{if $\oplus_{i \in S}~ x_i = 1$} \end{cases}. 
\]
The \emph{Fourier transform} of $f: \set{0,1}^{n} \rightarrow \IR$ is a function $\hf : 2^{[n]} \rightarrow \IR$ such that: 
\[
	\hf(S) = \inner{f}{\Xor_S} = \sum_{x \in \set{0,1}^n} \frac{1}{2^n} \cdot f(x) \cdot \Xor_S(x). 
\]
We refer to each $\hf(S)$ as a \emph{Fourier coefficient}. 

\medskip
We use the following \emph{KKL inequality} due to~\cite{KahnKL88}, for bounding sum of \emph{squared} of Fourier coefficients in our proofs. 

\begin{proposition}[\!\cite{KahnKL88}]\label{prop:kkl}
	For every function $f \in \set{0,1}^n \rightarrow \set{-1,0,+1}$ and every $\gamma \in (0,1)$
	\[
		\sum_{S \subseteq [n]} \gamma^{\card{S}} \cdot \hf(S)^2 \leq \paren{\frac{\supp{f}}{2^n}}^{\frac{2}{1+\gamma}}. 
	\]
\end{proposition}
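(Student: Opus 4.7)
The plan is to derive this bound directly from the $(2,1+\gamma)$-hypercontractive (Bonami--Beckner) inequality applied to the noise operator $T_\rho$ on the Boolean hypercube, recalling $T_\rho f := \sum_{S \subseteq [n]} \rho^{\card{S}} \hf(S) \cdot \Xor_S$. The key identification is that the left-hand side is nothing but a squared $L_2$-norm of a smoothed version of $f$: setting $\rho = \sqrt{\gamma}$ and applying Parseval/Plancherel on $\set{0,1}^n$ gives
\[
\|T_{\sqrt{\gamma}} f\|_2^2 \;=\; \sum_{S \subseteq [n]} \gamma^{\card{S}} \cdot \hf(S)^2,
\]
so the statement reduces to bounding $\|T_{\sqrt{\gamma}} f\|_2$ by a suitable low-$p$ norm of $f$ itself.

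Next, I would invoke hypercontractivity in its standard form: for all $1 \leq p \leq q$ and every $\rho \leq \sqrt{(p-1)/(q-1)}$, one has $\|T_{\rho} f\|_q \leq \|f\|_p$ for every $f : \set{0,1}^n \to \IR$. Choosing $p = 1+\gamma$ and $q = 2$ gives the critical noise rate $\sqrt{(p-1)/(q-1)} = \sqrt{\gamma}$, which is exactly our $\rho$, so
\[
\|T_{\sqrt{\gamma}} f\|_2 \;\leq\; \|f\|_{1+\gamma}.
\]
This is the only ``heavy'' ingredient; everything else will be a routine calculation specific to the $\set{-1,0,+1\}$-valued hypothesis.

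The final step is to compute $\|f\|_{1+\gamma}$ under the assumption $f(x)\in\set{-1,0,+1}$. For any such $f$ we have $\card{f(x)} \in \set{0,1}$, and consequently $\card{f(x)}^{1+\gamma} = \card{f(x)}$ pointwise. Therefore
\[
\|f\|_{1+\gamma}^{1+\gamma} \;=\; \Exp_{x \in \set{0,1}^n}\!\bracket{\card{f(x)}^{1+\gamma}} \;=\; \Exp_{x}\!\bracket{\mathbf{1}[f(x)\neq 0]} \;=\; \frac{\card{\supp{f}}}{2^n}.
\]
Squaring $\|f\|_{1+\gamma} \leq (\card{\supp f}/2^n)^{1/(1+\gamma)}$ and combining with the Parseval identity and the hypercontractive bound above yields
\[
\sum_{S \subseteq [n]} \gamma^{\card{S}} \cdot \hf(S)^2 \;\leq\; \paren{\frac{\card{\supp f}}{2^n}}^{\!\!\tfrac{2}{1+\gamma}},
\]
which is exactly the claim.

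The only real ``obstacle'' is that the argument is not self-contained: it uses the Bonami--Beckner hypercontractivity inequality as a black box. Since this is a standard textbook tool on the Boolean hypercube (cf.\ the survey style reference \cite{Wolf08} that the paper already points to), treating it as a given seems appropriate here; an alternative route, included for completeness, would be to prove the two-point inequality $\|a + \rho b \Xor_1\|_2 \leq \|a + b \Xor_1\|_{1+\gamma}$ for $\rho=\sqrt{\gamma}$ by hand and then tensorize over the $n$ coordinates, but this adds length without new ideas.
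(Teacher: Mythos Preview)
Your derivation is correct and is the standard route to this inequality: identify the left-hand side as $\|T_{\sqrt{\gamma}} f\|_2^2$ via Parseval, apply Bonami--Beckner hypercontractivity with $q=2$ and $p=1+\gamma$ to pass to $\|f\|_{1+\gamma}^2$, and then use that $|f|^{1+\gamma}=|f|$ for $\{-1,0,+1\}$-valued functions to get $\|f\|_{1+\gamma}^{1+\gamma}=\card{\supp f}/2^n$.

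Note, however, that the paper does not actually prove this proposition; it is stated as a known result with a citation to~\cite{KahnKL88} and used as a black box. So there is no ``paper's own proof'' to compare against. What you have written is a valid self-contained justification (modulo the hypercontractive inequality itself, which, as you say, is a standard textbook fact) and is exactly how one would supply a proof if asked.
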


\end{document}